\newcommand{\eat}[1]{}
\def\url@leostyle{%
  \@ifundefined{selectfont}{\def\UrlFont{\sf}}{\def\UrlFont{\small\bf\ttfamily}}}
\newtheorem{theorem}{Theorem}[section]
\algrenewcommand\ALG@beginalgorithmic{\footnotesize}
\newtheorem{lemma}[theorem]{Lemma}
\newtheorem{proposition}[theorem]{Proposition}
\newtheorem{corollary}[theorem]{Corollary}
\newtheorem{definition}{Definition}[section]
\newtheorem{theorem}{Theorem}[section]
\def\bdd{\text{\texthtd} }
\def\pprw{8.5in}
\def\pprh{11in}
\begin{document}

\title{A Novel Framework for Online Amnesic \\Trajectory Compression in Resource-constrained Environments}


\author{
 $^{\circ\dag}$Jiajun Liu \hspace{0.8cm} $^\dag$Kun Zhao \hspace{0.8cm} $^\dag$Philipp Sommer\\
 $^*$Shuo Shang \hspace{0.8cm} $^\dag$Brano Kusy \hspace{0.8cm} $^\diamond$Jae-Gil Lee\hspace{0.8cm} 
 $^\dag$Raja Jurdak\\
 {$ $}\\
 {$^\circ$Renmin University of China, Beijing, China \{jiajunliu@ruc.edu.cn\}}\\
{$^\dag$Data 61, CSIRO, Pullenvale, Australia}\\
{ \{jiajun.liu, kun.zhao, philipp.sommer, brano.kusy, raja.jurdak\}@csiro.au}\\
{$^*$China University of Petroleum, Beijing, China \{sshang@cup.edu.cn\}}\\
{$^\diamond$Department of Knowledge Service Engineering, KAIST, Korea \{jaegil@kaist.ac.kr\}}\\
\IEEEcompsocitemizethanks{\IEEEcompsocthanksitem This article is an extended version of \cite{DBLP:conf/icde/LiuZSSKJ15}}
}

\IEEEtitleabstractindextext{
\begin{abstract}
\eat{Long-term location tracking, where trajectory compression is commonly
used, has gained high interest for many
applications in transport, ecology, and wearable computing. However, s}
State-of-the-art
trajectory compression methods usually involve high space-time complexity or
yield unsatisfactory compression rates, leading to rapid
exhaustion of memory, computation, storage and energy resources. 
Their ability is commonly limited when operating in a resource-constrained environment especially
 when the data volume (even when compressed) far exceeds the storage limit.
Hence we propose a novel online framework for error-bounded trajectory
compression and ageing called the \emph{Amnesic Bounded Quadrant System} (ABQS), whose core
is the \emph{Bounded Quadrant System} (BQS) algorithm family that includes a normal version (BQS), Fast version (FBQS), and a Progressive version (PBQS). ABQS intelligently manages a given storage and compresses the trajectories
with different error tolerances subject to their ages.\eat{BQS compresses
trajectories with extremely small costs in space and time using
convex-hulls, and performs further compression on aged data automatically when needed. 
In the BQS algorithm family, we build a virtual coordinate system
centered at a start point, and establish a rectangular bounding box as
well as two bounding lines in each of its quadrants. In each quadrant, 
the points to be assessed are bounded by the convex-hull formed by the
box and lines. Various compression error-bounds are therefore derived to 
quickly draw compression decisions without expensive error computations. 
We also propose a The light version of BQS (FBQS)
achieves $\mathcal{O}(1)$ complexity in both time and space for
processing each point to suit the most constrained computation
environments. \eat{Then we describe in brief a 3D extension of the algorithm.}
Furthermore, we propose an online
framework that intelligently manages a given storage and compresses the trajectories
with different error tolerances subject to their ages.
}

In the experiments, we conduct comprehensive evaluations for the BQS algorithm family and the ABQS framework.
Using empirical GPS traces from flying foxes and cars, and synthetic data from simulation, 
we demonstrate the effectiveness of the standalone BQS algorithms in significantly
reducing the time and space complexity of trajectory compression, while
greatly improving the compression rates of the state-of-the-art
algorithms (up to 45\%). We also show that the operational time of the target 
resource-constrained hardware platform 
can be prolonged by up to 41\%. We then verify that with ABQS,
given data volumes that are far greater than storage space, ABQS is able to achieve 15 to 400 times 
smaller errors than the baselines. We also show that the algorithm is robust to extreme trajectory shapes.
\end{abstract}

\begin{IEEEkeywords}
Online trajectory compression; ageing; amnesic; resource-constrained; constant complexity
\end{IEEEkeywords}

}

\eat{
\keywords{
	Guides; instructions; author's kit; conference publications;
	keywords should be separated by a semi-colon.
}

\category{H.5.m.}{Information Interfaces and Presentation (e.g. HCI)}{Miscellaneous}
}

\maketitle

\section{Introduction}
Location tracking is increasingly important for transport, ecology, and wearable computing. In particular, long-term tracking of spatially spread assets, such as wildlife~\cite{Jurdak_tosn13}, augmented reality glasses~\cite{van2010survey}, or bicycles~\cite{DBLP:journals/tosn/EisenmanMLPAC09} provides high resolution trajectory information for better management and services.  For smaller moving entities, such as flying foxes~\cite{DBLP:conf/ipsn/JurdakSKKCMW13} or pigeons~\cite{nagy2010hierarchical}, the size and weight of tracking devices are constrained, which presents the challenge of obtaining detailed trajectory information subject to memory, processing, energy and storage constraints. However, the requirements on the tracking precision are barely relaxed, e.g. the goal of wildlife tracking sometimes is in the form of a guaranteed tracking error from 10 to a few hundred meters \cite{DBLP:conf/sensys/JurdakCDS10}.

Consider tracking of flying foxes as a motivating scenario. The computing platform is constrained in computational resources and is inaccessible in most occasions once deployed. The position data is acquired in a streaming fashion. The RAM available on the platform is 4 KBytes, while the storage space is only 1 MB to store trajectories over weeks and months before they can be offloaded. The combination of long-term operational requirements and constrained resources therefore 
requires an intelligent online framework that can process the trajectory stream instantaneously and efficiently, i.e. in constant space and time, and that can achieve high compression rate. 

Current trajectory compression algorithms often fail to operate under such requirements, as they either require substantial amount of buffer space or require the entire data stream to perform the compression \cite{douglas_peucker}\cite{Hershberger92speedingup}. Existing online algorithms, which process each point exactly once,  operate retrospectively on trajectory data or assume favourable trajectory characteristics, resulting in the worst-case complexity of their online performance ranging from $\mathcal{O}(nlogn)$ to $\mathcal{O}(n^2)$~\cite{squishe}. The high complexity of these methods limits their utility in resource-constrained environments. To address this challenge, we propose the \emph{Bounded Quadrant System} (BQS), an online algorithm that can run sustainably on resource-constrained devices. Its fast version achieves $\mathcal{O}(n)$ time and $\mathcal{O}(1)$ space complexity while providing guaranteed error bounds. By using a convex hull that is formed by a bounding box and two angular bounds around all points, the algorithm is able to make quick compression decisions without iterating through the buffer and calculating the maximum error in most of the cases. Using empirical GPS traces from the flying fox scenario and from cars, we evaluate the performance of our algorithms and quantify their benefits in improving the efficiency of trajectory compression. Its amnesic version, the  \emph{Amnesic Bounded Quadrant System} (ABQS), is able to manage a given volume of storage space and intelligently trade off space for precision, so that no hard data losses (data overwritten) will be present over a prolonged operation time.

Our contributions are \emph{threefold}:

\begin{enumerate}
\small
\vspace{-0.2cm}
\item We design an online compression algorithm family called BQS.  BQS uses convex hulls to compress streaming trajectories with error guarantees. The fast version in the algorithm family achieves constant time and space complexity for each step, or equivalently $\mathcal{O}(n)$ time complexity and $\mathcal{O}(1)$ space complexity for the whole data stream. \eat{We demonstrate the extensibility of BQS to support the 3-D case and a different error metric.}
\item We formulate the BQS algorithm with an uncertainty vector to support the progressive compression of trajectories, and subsequently propose \emph{Progressive BQS} (PBQS). Using PBQS as the corner stone, we propose a sophisticated online framework called  \emph{Amnesic Bounded Quadrant System} (ABQS) which not only compresses streaming trajectory data efficiently, but also manages historical data in an amnesic way. ABQS introduces graceful degradation to the entire trajectory history without hard data loss (overwriting), and hence achieves excellent compression performances when the operation duration is unknown and the data volume far exceeds the storage limit.
\item We comprehensively evaluate the BQS family and the ABQS framework using real-life data collected from wildlife tracking and vehicle tracking applications as well as synthetic data, and discuss the robustness to various types of trajectories with extreme shapes.
\end{enumerate}
Compared to the original BQS paper \cite{DBLP:conf/icde/LiuZSSKJ15}, this paper makes significant extensions and technical contributions as it: 
1) adds uncertainty to the BQS formulation and derives the progressive version of BQS, i.e. PBQS; 2) proposes the ABQS framework so that BQS is no longer only a family of standalone algorithms but an actual framework that can effectively and efficiently manage a storage space and maximize the trajectory information given a storage limit and an unknown operation duration; 3) significantly extends the experiments to study the robustness of BQS and the compression performances of ABQS. In addition, this version also extends the literature review significantly, and restructures and improves Sections \ref{sec:pre} and \ref{sec:bqs} for better clarity of presentation.

The remainder of this paper is organized as follows. We survey related work in the next section, and  discuss the background and motivate the need for a new online trajectory compression framework by describing our hardware platform and the data acquisition process in Section \ref{sec:mot}. The BQS family is presented subsequently in Section \ref{sec:bqs}, where a discussion on how to generalize the algorithm is also provided. We then propose the ABQS framework in Section \ref{sec:abqs}. Finally, we evaluate the proposed BQS family and the ABQS framework in Section \ref{sec:exp}, and conclude the paper.

\section{Related Work}
\label{sec:rl}
The rapid increase in the number of GPS-enabled devices in recent years has led to an expansion of location-based services and applications and our increased reliance on localization technology. One challenge that location-based applications need to overcome is the amount of data that continuous location tracking can yield. Efficient storage and indexing of these datasets is critical to their success, especially for embedded and mobile devices which have restricted energy, computational power and storage.

Several trajectory compression algorithms that offer significant improvements in terms of data storage have been proposed in the literature. We focus our review on lossy compression algorithms as they provide better trade-offs between high compression ratios and an acceptable error of the compressed trajectory.

Douglas and Peucker were amongst the first to propose an algorithm for reducing the number of points in a digital trajectory~\cite{douglas_peucker}. The \emph{Douglas-Peucker} algorithm starts with the first and last points of the trajectory and repeatedly adds intermediate points to the compressed trajectory until the maximum spatial error of any point falls bellow a predefined tolerance. The algorithm guarantees that the error of the compressed trajectory is within the bounds of the target application, but due to its greedy nature, it achieves relatively low compression ratios. The worst-case runtime of the algorithm is $\mathcal{O}(n^2)$ with $n$ being the number of points in the original trajectory which has been improved by Hershberger et al to $\mathcal{O}(n\log n)$~\cite{Hershberger92speedingup}.

The disadvantage of the \emph{Douglas-Peucker} algorithm is that it runs off-line and requires the whole trajectory. This is of limited use for modern location-aware applications that require online processing of location data. A generic sliding-window algorithm (conceptually similar to the one summarized in~\cite{opening_window}) is often used to overcome this limitation and works by compressing the original trajectory over a moving buffer. The sliding-window algorithm can run online, but its worst-case runtime is still $\mathcal{O}(nL)$ where $L$ is the maximum buffer size. On the other hand, multiple examples of fast algorithms exist in the literature~\cite{ChenXF12}\cite{Long_2013}\cite{DBLP:journals/pvldb/SongSZZ14}\cite{DBLP:journals/tkde/XieYCL14}. These algorithms, however, do not apply to our scenario as they only run off-line and cannot support location-aware applications in-situ.

SQUISH~\cite{squish} has achieved relatively low runtime, high compression ratios and small trajectory errors. However, its disadvantage was that it could not guarantee trajectory errors to be within an application-specific bound. A follow up work presented SQUISH-E~\cite{squishe} that provides options to both minimize trajectory error given a compression ratio and to maximize compression ratio given an error bound. The worst-case runtime of SQUISH-E algorithms is $\mathcal{O}(n\log\frac{n}{\lambda})$, where $\lambda$ is the desired compression ratio.  While the compression ratio-bound flavor of SQUISH-E can run online, the error-bound version runs offline only.

There are different disadvantages for existing algorithms that repeatedly iterate through all points in the original trajectory. SQUISH-E is approaching linear computational complexity for large compression ratios, however, the compressed trajectory has unbounded error. The STTrace algorithm \cite{sttrace} and the MBR algorithm~\cite{liu_MBR} represent more complex algorithms. STTrace~\cite{sttrace} uses estimation of speed and heading to predict the position of the next location. MBR maintains, divides, and merges bounding rectangles that represent the original trajectories. However both algorithsm fall outside of capabilities of our target hardware platform and do not suit our application scenario well. 

In contrast to existing methods, our approach achieves constant time and space complexity for each point by only considering the most recent minimal bounding convex-hulls. We show in the evaluation section that the compression ratios that our approach achieves are superior to those of the related trajectory compression algorithms. Although simplistic approaches such as Dead Reckoning~\cite{Trajcevski06on-linedata}\cite{ Kjaergaard:2009:EER:1555816.1555839} achieve comparable runtime performance, we show that our algorithm significantly outperforms these protocols in compression ratio while guaranteeing an error bound.

A group of methods have been developed based on the idea of \emph{Piece-wise Linear Regression} (PLR) on time-series data, and have achieved competitive precision and efficiency. For example, using a bottom-up approach, Keogh et. proposed to use a ``segment"-based representation of time-series for similarity search and mining \cite{DBLP:conf/ictai/Keogh97, DBLP:conf/kdd/KeoghP98}. In \cite{DBLP:conf/icdm/KeoghCHP01}, an algorithm called SWAB successfully reduced the time complexity of such approximation to $\mathcal{O}(n)$ by combining the power of a sliding-window approach and a bottom-up approach. Note that these algorithms are designed to deal with single dimensional time-series by optimizing the normalized error residue on the time-series values over a segment (on the $y$-axis). Though such settings can indeed be modified and extended to handle 2D trajectories, which concern perpendicular errors instead, they cannot solve our problem becasue they are unable to guarantee an error bound for the individual points in the original trajectory, though guaranteed error residue for approximated segments is possible. For moving object tracking, it is important to keep the ``outliers'' in the location history, whereas optimizing a segment-wise error residue does not fulfil such requirement.

Meanwhile, data ageing has become a popular technique to approximate streaming data in an amnesic fashion. The process is often referred to as an analogy to amnesia: the older the memory is, the less its value is and the more likely it will be forgotten. The idea is popularized for time-series data in \cite{DBLP:conf/icde/PalpanasVKGT04,DBLP:journals/tkde/PalpanasVKG08} as the authors proposed a generic framework that supports user-specific amnesic functions. Such technique also received much attention in later studies such as \cite{DBLP:conf/icde/GandhiFS10,DBLP:conf/cikm/PotamiasPS06,DBLP:conf/ipsn/Nath09}. Again, we note that such literature focuses on optimizing the error residue for the segments in a time-series \cite{DBLP:conf/icde/PalpanasVKGT04,DBLP:journals/tkde/PalpanasVKG08,DBLP:conf/icde/GandhiFS10,DBLP:conf/cikm/PotamiasPS06}, while in the case of location tracking it is vital to ensure an error bound for every original location point.

\vspace{-0.2cm}
\section{Background}
\label{sec:mot}
In this section, we present the background of the study. The hardware system architecture used in the real-life bat tracking application is described. We also briefly introduce two existing solutions \cite{Heckbert95surveyof,opening_window}. By analyzing these algorithms we provide insights into a new algorithm is neeeded for our application. Both  algorithms will be evaluated too in the comparative evaluation.

\vspace{-0.3cm}
\subsection{Motivating Scenario}

We employ the Camazotz mobile sensing platform~\cite{DBLP:conf/ipsn/JurdakSKKCMW13}, which has been specifically designed to meet the stringent constraints for weight and size when being employed for wildlife tracking. A detailed specification of the nodes is provided in \cite{DBLP:conf/icde/LiuZSSKJ15}. Sensor readings and tracking data can be stored locally in external flash storage (1 MByte) until the data can be uploaded to a base station deployed at animal congregation areas using the short range radio transceiver. We also place the same hardware on the dashboard of a car to capture traces of vehicles in urban road networks. 

\eat{, in particular for different species of flying foxes, also known as megabats (Pteropus).  Animal ethics requires that the total payload weight is smaller than a certain percentage (usually below 5\%) of the animal's body weight, which corresponds to a weight limit of roughly 20-30\,g for flying foxes. Camazotz is a light-weight but feature-rich platform built around the Texas Instrument CC430F5137 system on chip, which integrates a 16-bit microcontroller (32 KBytes ROM, 4 KBytes RAM) and a short-range radio in the 900\,MHz frequency band. We use a rechargeable Lithium-ion battery connected to a solar panel to provide power to the device. Several on-board sensors such as temperature/pressure sensor, accelerometer/magnetometer and a microphone allow for multi-modal activity detection~\cite{DBLP:conf/ipsn/JurdakSKKCMW13} and sensing of the animal's environment. A ublox MAX6 receiver allows to determine the current position using the Global Positioning System (GPS). }

The GPS traces collected by such platforms are often used to analyze the mobility and the behavior of the moving object \cite{Zhao:2015qf}, or to perform spatial queries \cite{DBLP:journals/geoinformatica/ShangYDXZZ12}. Hence, it is most important to gather information for the object's major movements. The key features from the traces are the areas where it often visits, the route it takes to travel between its places of interests, and the time windows in which it usually makes the travels. However, the hardware limitations of the platform constrain its capability to capture such information in the long term. Motivated by this application, we propose an online trajectory compression and management framework on such resource-constrained platforms to reduce the data size and extend the operational time of the tracking platform in the wild. The framework will introduce a bounded error and discard some information for small movements, but will capture the interesting moments when major traveling occurs. Moreover, the framework will revisit the compressed trajectories from time to time when additional storage space is needed. In the process, older trajectories will be further compressed, subject to an extended error tolerance to reflect the significance decay of the data over time.

\subsection{Existing Solutions}
\subsubsection{Buffered Douglas-Peucker}
\emph{Douglas-Peucker} (DP) \cite{Heckbert95surveyof} is a well-known algorithm for trajectory
compression. In our scenario, in which the buffer size is largely constrained due to
memory space limit on the platform, we are still able to apply DP on a
smaller buffer of data. We call it \emph{Buffered Douglas-Peucker} (BDP) . The incoming points are accumulated in the buffer until it is full, then the partial trajectory defined by
the buffered points is processed by the DP algorithm.  However,
such solution has inferior compression rates mainly due the extra points taken 
when the buffer is repeatedly full, 
preventing a desirable high compression rate from being achieved.

An implication of BDP
is that both the start and end points in the buffer will be kept in the
compressed output every time the buffer is full, even when they can actually be discarded safely. 
In the worst case scenario where the object is moving in a straight line from
the start to the end, this solution will use $floor(\frac{N}{M})+1$
points, where $N$ is the number
of total points and 
$M$ is the buffer size. In contrast, the optimal solution needs to keep only two points.
 Although the overhead
depends on the shape of the trajectory and the buffer size, generally BDP takes considerably more points than necessary, particularly for small buffer sizes.

\vspace{-0.3cm}
\subsubsection{Buffered Greedy Deviation}
\emph{Buffered Greedy Deviation} (BGD, a variation of the generic sliding-window algorithm) represents another
simplistic approach. In this strategy whenever a point arrives, we
append the point to the end of the buffer, and do a complete calculation of the
maximum error for the trajectory segment defined by the points in the
buffer to the line defined by the start point and the end point. 
If this error already exceeds the tolerance, then we keep
the last point in the compressed trajectory, clear the buffer and
start a new segment at the last point. Otherwise the process continues for the next incoming point.

The algorithm is easy to implement and guarantees the error tolerance,
however it too has a major weakness. The compression rate is heavily
dependent on the buffer size, as it faces the same problem as 
BDP. If we increase the buffer size,  because the
time complexity is $\mathcal{O}(n^2)$, the computational complexity would increase
drastically, which is undesirable in our scenario 
because of the energy limitations. Therefore, BGD represents a significant
compromise on the performance as it has to make a direct trade-off between time
complexity and compression rate.

Clearly, a more sophisticated algorithm that can guarantee the bounded error, 
process the point with low time-space complexity, and achieve
high compression rate is
desired. We propose the \emph{Bounded Quadrant System} (BQS) algorithm family to address this problem. Before delving into the details, we present some
notations and definitions to help the reader understand
BQS's working mechanism.

\vspace{-0.3cm}
\subsection{Preliminaries}
\label{sec:pre}
We provide a series of definitions as the
necessary foundations for further discussion. 

\eat{
\begin{table*}[htp]
\centering
\caption{Symbols and Notations}
\label{tbl:sym}
\begin{tabular}{ | c || p{12cm} | }
\hline Notation & Description \\ 
\hline $v = <lat, lon, t>$ & location at timestamp t \\
\hline $wp = <lat_{min}, lat_{max}, lon_{min}, lon_{max} , w>$ & waypoint area\\
\hline $\tau=\{v_1, ...,v_k\}$ &  trajectory segment $\tau$, composed by a
series of points $v_i$\\ 
\hline $\tau'=\{v_1, v_k\}$ &  compressed trajectory segment $\tau'$\\
\hline  $\mathbb{T}=\{\tau_1,...,\tau_n\}=\{v_1, ...,v_n\}$ & trajectory, each
represents a trip\\
\hline  $\mathbb{T}'=\{v_1, v_j,...v_k\}$ & compressed trajectory, each
adjacent pair $\{v_i,v_j\}$ represent a compressed segment, \newline while together
represents a trip\\
\hline $\varrho = <\beta,\iota>$ & quadrant bounding system\\
\hline $\beta = <min_x, max_x, min_y, max_y>$ & bounding box \\  
\hline $\iota = <l^{left}, l^{right}>$ & left and right angle-bounding
lines \\ 
\hline $\theta_{lb}, \theta{ub}$ &  the smallest and largest angle
from any point in the buffered segment to the start point\\ 
\hline $d^{lb}, d^{ub}$ & the lower bound and
upper bound of the maximum deviation for a quadrant bounding system\\ 
\hline
\end{tabular}
\end{table*}
}
{\small
\begin{definition}[Location Point]
A location point $v=<latitude,longitude,timestamp>$ is a tuple that records the spatio-temporal
information of a location sample.
\end{definition}

\begin{definition}[Segment  and Trajectory]
A trajectory segment is a set of location points that are taken
consecutively in the temporal domain, denoted as $\tau=\{v_1, ...,v_n\}$. A trajectory
is a set of consecutive segments,
denoted as $\mathbb{T}=\{\tau_1,\tau_2,...\}$.
\end{definition}
}

Given the definitions of segment and trajectory, we introduce
the concept of compressed trajectory with bounded error: 

{\small \begin{definition}[Deviation]
Given a trajectory segment $\tau=\{v_1, ...,v_n\}$, the 
deviation $\bdd(\tau)$  is defined as the largest point-to-line-segment distance from any
location $v_i \in \{v_2,...,v_{n-1}\}$ to the line segment defined by $v_1$ and $v_n$. The
trajectory deviation is defined as the maximal segment deviation from any
of its segments, as $max(\bdd (\tau_i)), \tau_i \in \mathbb{T}$. 
\label{def:dev}
\end{definition}}
Deviation is a distance metric to measure the maximum error from the
compressed trajectory segment to the original trajectory. Without loss of generality, 
 we use point-to-line-segment distance in this definition. Note that
 point-to-line distance can be easily used within BQS too, after a few 
 minor modifications to the lower bound and upper bound calculations.

{\small \begin{definition}[Key Point]
Given a buffer $\tau=\{v_1, ...,v_k\}$, a
deviation tolerance $\epsilon^d$, a new location
point $v_{k+1}$, $v_k$ is a key point if $\bdd(\tau) \leq \epsilon^d$ and
$\bdd(\tau \bigcup \{v_{k+1}\}) > \epsilon^d$, where $\epsilon^d$ is the
error tolerance.
\end{definition}}
In other words, when a new point is sampled, the
immediate previous point is classified as key point if the new point
results in the maximum error of any point in the buffer exceeding $\epsilon^d$.

{\small \begin{definition}[Compressed Trajectory]
Given a trajectory segment $\tau=\{v_1, ...,v_n\}$, its compressed
trajectory segment
is defined by the start and end location $v_1$ and $v_n$, and is denoted
as $\tau'=\{v_1,v_n\}$. The compressed trajectory $\mathbb{T}' =\{v_i, v_j, ...,v_k\}$ of
$\mathbb{T}$ is the set of starting and ending locations of all the
segments in $\mathbb{T}$, ordered by the position of its source
segment in the original trajectory.
\end{definition}

{\small
\begin{definition}[Error-bounded Trajectory]
An error-bounded trajectory is a compressed trajectory with the 
deviation for any of its compressed segments smaller than or equal to
a given error tolerance
$\epsilon^d$. Formally: 
given a trajectory $\mathbb{T}=\{\tau_1,...,\tau_k\}$, and
its compressed trajectory $\mathbb{T}' =\{v_i, v_j, ...,v_k\}$,
$\mathbb{T}'$ is error-bounded by $\epsilon^d$ if $\forall \tau_i'\in \mathbb{T}'$,
$\bdd(\tau_i') \leq \epsilon^d$.
\end{definition}}
}
\begin{figure}[htp]
\vspace{-0.2cm}
\hspace{1cm}
\includegraphics[height=3.5cm]{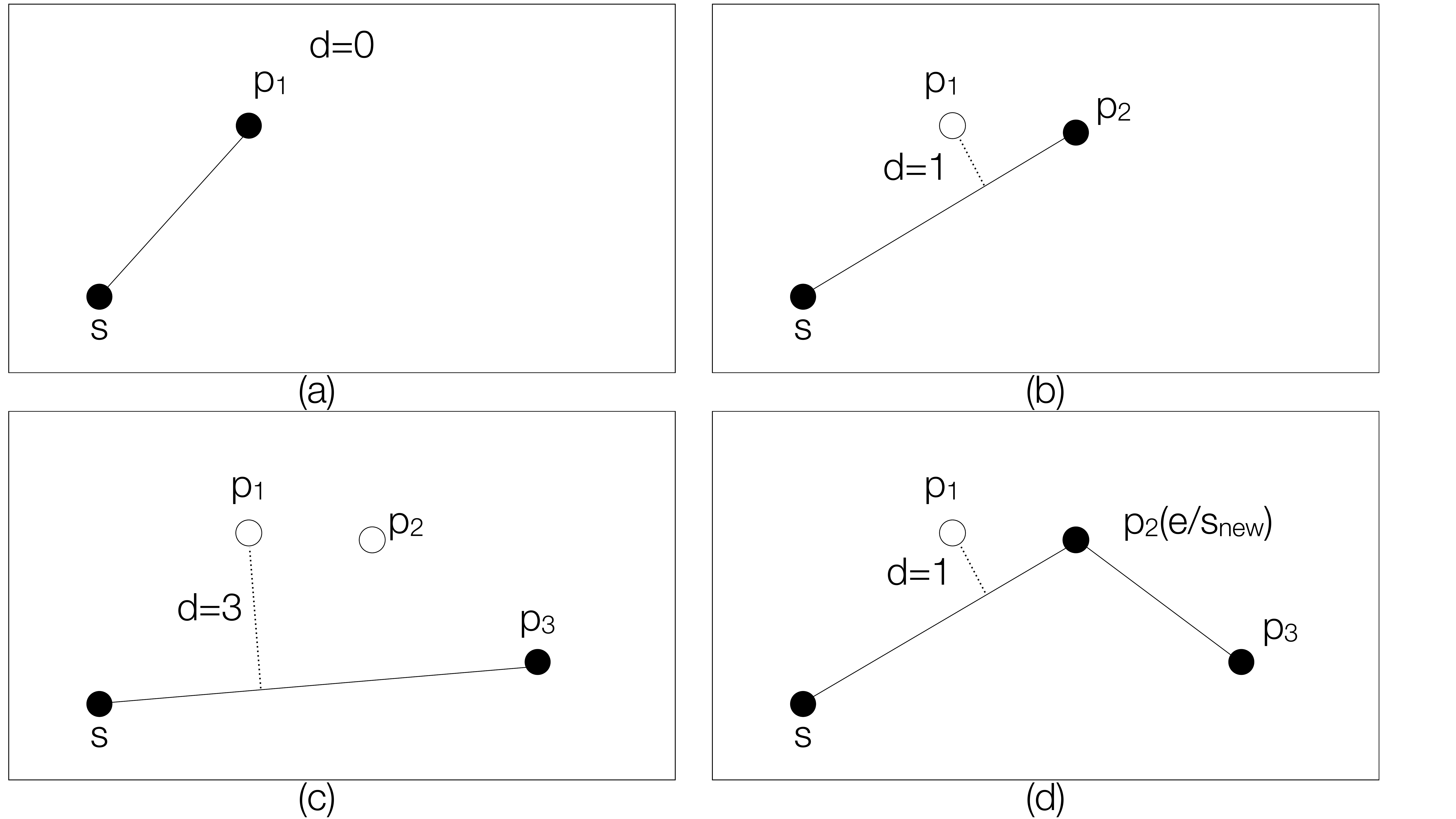}
\caption{Error-bounded Compression ($\epsilon=2$)}
\vspace{-0.3cm}
\label{fig:cexp}
\end{figure}

Figure \ref{fig:cexp} demonstrates the process of error-bounded
trajectory compression. Assuming that the current trajectory segment
starts from $s$, when adding the first point $p_1$ (Figure
\ref{fig:cexp}(a)), the deviation is $0$. Hence we proceed to the next
incoming point $p_2$ (Figure \ref{fig:cexp}(b)). Here the deviation
is $1$, which lays within the error tolerance, so the current segment can be
safely represented by $\overline{sp_2}$. However after $p_3$ arrives,
the deviation of the segment reaches $3>\epsilon$ because of
$p_1$ (Figure \ref{fig:cexp}(c)). Clearly $p_3$ should not be included in the current
trajectory segment. Instead, the current segment ends at $p_2$ and a
new segment is then started at $p_2$ (Figure \ref{fig:cexp}(d)). The
new segment includes $p_3$ and the above process is repeated until the $\epsilon$ is exceeded again.
Such process guarantees that any trajectory segment has a smaller deviation than $\epsilon$.

When a trajectory is turned into a compressed trajectory, the temporal
information it carries changes representation too. Instead of
having a timestamp at every location point in the original trajectory,
the compressed trajectory uses the timestamps of the key points as the
anchors to reconstruct the time sequences. Given a trajectory segment
defined by two key points $v_s, v_e$, the reconstructed location at
timestamp $\overline{t}$ ($v_s.t \le \overline{t}\le v_e.t$) is:

\begin{equation}
\small
v_{\overline{t}} = < h_{lat}(\mathrm{P}, v_s, v_e, \overline{t}), h_{lon}(\mathrm{P},
v_s, v_e, \overline{t}), \overline{t} >.
\end{equation}

where function $h$ is an interpolation function that uses a
distribution function $\mathrm{P}$, a start value, an end
value, and a timestamp at which the value should be
interpolated. $\mathrm{P}$ interpolates the location at a timestamp
according to a distribution. As
an example, the
$h$ and $\mathrm{P}$ functions for interpolating the latitude can be defined as:

{\small
\begin{eqnarray}
\mathrm{P}(\overline{t}) &=& \frac{\overline{t}-t^{v_s}}{t^{v_e} - t^{v_s}}\\
h_{lat} (\mathrm{P}, v_s, v_e, \overline{t}) &=& v_s.lat +
\mathrm{P}(\overline{t}) \times (v_e.lat - v_s.lat)
\end{eqnarray}
} 
where $\mathrm{P}$ is set to reconstruct the uniform
distribution. However, in practice this function can be derived online
to fit the distribution of the actual data. For instance, an online algorithm for
fitting Gaussian distribution by dynamically updating the variance and mean can be
implemented with semi-numeric algorithms described in
\cite{Knuth:1997:ACP:270146}, which can be used to derive
$\mathrm{P}$. 
\eat{Again, we want to state that
reconstructing the trajectory flawlessly is not the goal of this
algorithm. The trajectories processed by our algorithm will capture
the dominant movements. The algorithm overall is
designed to capture larger movements while maintaining a reasonable,
guaranteed error tolerance, which is useful for many types of mobility analysis.
}

As we favor an online algorithm where each point is processed
exactly once, the problem is turned into answering the following question: does 
the incoming point result in a new compressed trajectory segment or can it be
represented in the previous compressed trajectory segment? To address
this question, we first provide an overview and then the details for the BQS algorithm.

\begin{figure}[htp]
\vspace{-0.3cm}
\hspace{.8cm}
\includegraphics[height=4.5cm]{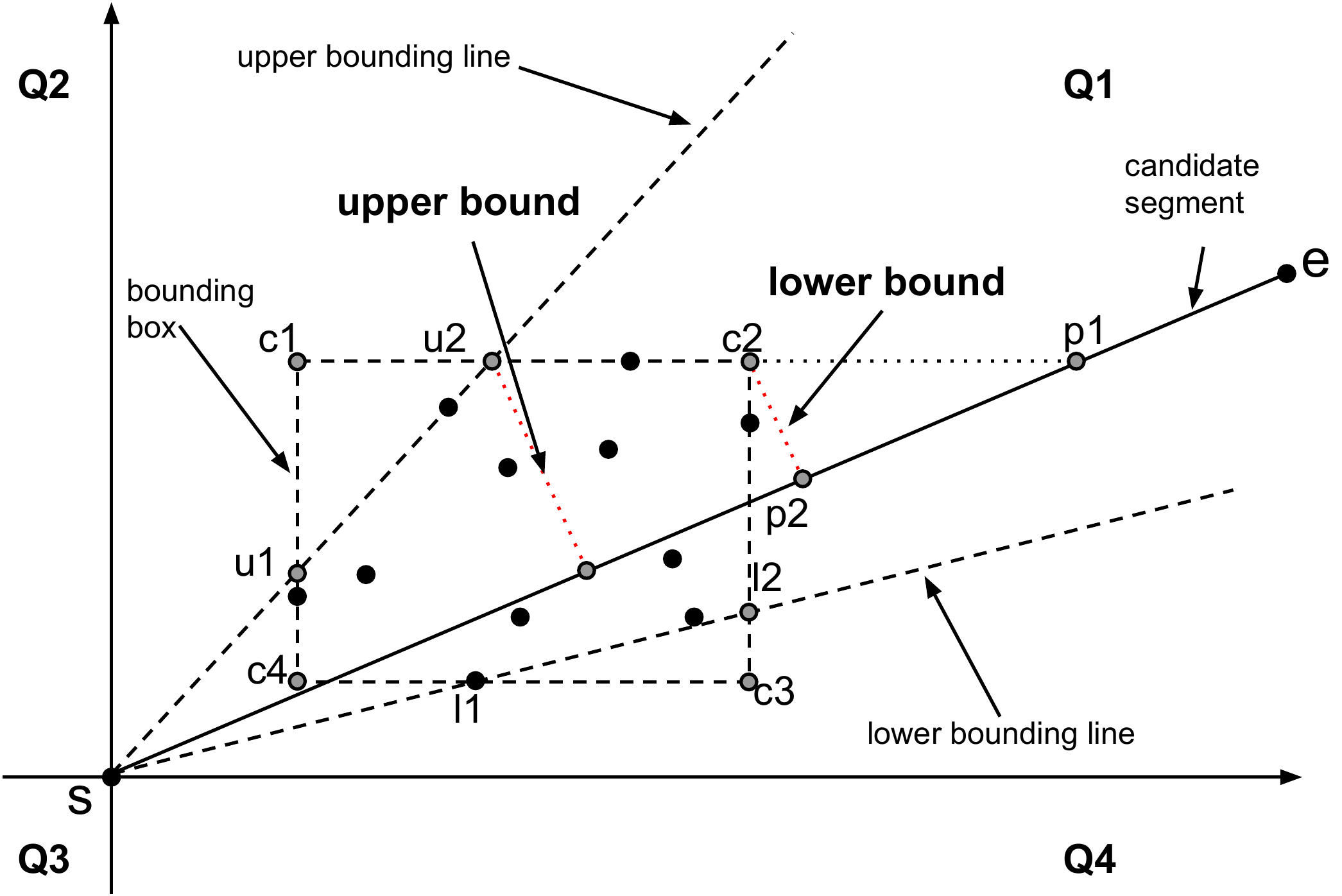}
\caption{An Example of the BQS}
\label{fig:bqs}
\vspace{-0.8cm}
\end{figure}

\section{The BQS Algorithm}
\label{sec:bqs}

The motivation of the framework is that we need a trajectory
management infrastructure to instantaneously and continuously manage historical trajectory data with
minimal storage space while maintaining maximum compression error bound and 
capturing the major movements of the mobile object. First we need an 
efficient online trajectory compression algorithm, that yields
error-bounded results in low time and space complexity and that minimizes
the number of points taken, i.e. the Bounded Quadrant System (BQS).

\eat{
Th framework aims to address the following:
\begin{enumerate}
\item 
\item We need a waypoint area discovery functionality to facilitate
  advanced tasks like trip time estimation etc. A waypoint is an area
  at where multiple trajectories join paths but part ways afterwards.
\end{enumerate} 
Subsequently, we formulate the sub-problems as follows:
\begin{enumerate}
\item Trajectory Compression: \textit{
Given a limited buffer size and computation power, and a stream
that continues to feed new location points, determine
whether the incoming point should be kept as the part of the current
segment, or the trigger of a new segment, so that the current segment's 
deviation will be bounded by the tolerance $\epsilon$.
}

\item Trajectory Management: \textit{
While a trajectory is being processed, and compressed segments are fed
into the framework, the framework should be able to index the segment,
merge similar segments, and further compress aged trajectories to optimize space utilization.
}
\item Waypoint Area Discovery:\textit{
The framework should be able to periodically 
search and retrieve similar historic segments and trajectories so that
small areas which have heavy and diversified historical traffic are discovered as
waypoint areas. 
}
\end{enumerate} 
}

A BQS is a convex hull that bounds a
set of points (e.g. Figure \ref{fig:bqs}). A BQS is constructed by the following steps:

{\small 
\begin{enumerate}
\item For a trajectory segment, we split the space into four
  quadrants ($Q1,Q2,Q3$ and $Q4$), with the origin set at the start point $s$ of the current
  segment, and the axes set to the UTM (Universal Transverse Mercator) projected $x$ and $y$ axes.
\item For each quadrant, a bounding box is set for the buffered points
  in that quadrant, if there are any. There can be at most four
  BQS for a trajectory segment. In Figure \ref{fig:bqs} there is only one $\overline{c_1c_2c_3c_4c_1}$ in $Q1$
  as the trajectory has not  reached other quadrants.
\item We keep two bounding lines that record
  the smallest and greatest angles between the $x$ axis and the line from the origin to
  any buffered point for each quadrant ($\overline{su_{2}}$ and $\overline{sl_{2}}$).
\item We have at most eight significant points in every quadrant systems - four
  vertices on the bounding box, four intersection points from the
  bounding lines intersecting with the bounding box. Some of the points may overlap.
  In this case, we have $c_1,c_2,c_3,c4,l_1,l_2,u_1,u_2$.
\item Based on the deviations from the significant points to the
  current path line, we have a group of lower bound candidates and
  upper bound candidates for the maximum deviation. From these candidates we derive a pair of lower
  bound and upper bound for the maximum deviation $<d^{lb},~d^{ub}>$, to make compression decisions without the full computation of segment
  deviation in most of the cases. 
\end{enumerate}
}
Here the lower bound $d^{lb}$
  represents the smallest deviation possible for all the points in the segment
  buffer to the start point, while the upper bound $d^{ub}$ represents the
  largest deviation possible for all the points in the segment
  buffer to the start point. With $d^{lb}$ and $d^{ub}$, we have three
  cases:
  {\small 
\begin{enumerate}
\item If $d^{lb} > \epsilon ^d$, it is unnecessary to perform
  deviation calculation because the deviation is guaranteed to break
  the tolerance, so a new segment needs to be started.
\item If $d^{ub} \leq \epsilon ^d$ , it is unnecessary to perform
  deviation calculation because the deviation is guaranteed to be
  smaller or equal to the tolerance, so the current point will be included in the
  current segment, i.e. no need to start a new segment.
\item If $d^{lb} \leq \epsilon ^d < d^{ub}$, a deviation
  calculation is required to determine whether the actual deviation is breaking
  the tolerance.
\end{enumerate}
}
Hence a pair of bounds is considered ``tight'' if the
  difference between them is small enough that it leaves minimum
  room for the error tolerance $\epsilon^d$ to be in between them.



\eat{
An illustration of a BQS is provided in Figure \ref{fig:bqs}. Here we have
a BQS in the first quadrant. Note that it shows only one BQS but in
reality there could be at most four BQS for each segment, one for each
quadrant. In the
figure, $s$ is the start point of the current segment, which is also
used as the origin of the BQS. The solid
black dots are the buffered points for the current segment. The bounding box
$\overline{c_1c_2c_3c_4c_1}$ is the minimum bounding rectangle for all
the points in the first quadrant. The bounding lines
$\overline{su_{2}}$ and $\overline{sl_{2}}$ record the greatest and
    smallest angles that any line from the origin to the points can
    have (w.r.t. the $x$ axis), respectively.
}

The intuition of the BQS structure is to exploit the favorable properties
of the convex hull formed by the
significant points from the bounding box
and bounding lines for the buffered points (excluding the start point). That is, with the
polygons formed by the bounding box and the bounding lines, we
can derive a pair of tight lower bound and upper
bound on the deviation from the points in the buffer to the line segment
$\overline{se}$. 
With such bounds, most of the
deviation calculations on the buffered points are avoided. Instead we can
determine the compression decisions by only assessing a few significant
vertices on the bounding polygon. The splitting of the space into four
quadrants is necessary as it guarantees a few useful properties to
form the error bounds.

To understand how the BQS works, we use the example
with a start point $s$, a few points in the buffer, and the last incoming
point $e$ as in Figure \ref{fig:bqs}. The goal is to determine whether
the deviation will be greater than the
tolerance if we include 
the last point in the current segment. First we present two fundamental theorems:

{\small 
\begin{theorem}
\label{thm:-1}
Assume that a point $p$ satisfies $d(p, s)\leq \epsilon$, where
s is the start point, then
\begin{eqnarray}
d^{max}(p, \overline{se}) \leq \epsilon
\end{eqnarray}
regardless of the location of the end point $e$.
\end{theorem}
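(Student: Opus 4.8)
The plan is to reduce the statement to the single geometric observation that the start point $s$ is itself one of the two endpoints of the segment $\overline{se}$, and is therefore a point of $\overline{se}$ no matter where the end point $e$ is placed. Once this is noted, everything follows by comparing the point-to-line-segment distance against the point-to-point distance $d(p,s)$, for which the hypothesis already gives us control.

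Concretely, I would first unfold the left-hand quantity as a minimisation over the segment, consistent with the point-to-line-segment distance used in Definition~\ref{def:dev}:
\begin{equation}
\small
d^{max}(p,\overline{se}) = \min_{q\in\overline{se}} d(p,q).
\end{equation}
The crucial point is that $s\in\overline{se}$ for every admissible end point $e$, so $q=s$ is always a feasible competitor in this minimisation. Hence
\begin{equation}
\small
d^{max}(p,\overline{se}) \le d(p,s),
\end{equation}
and invoking the hypothesis $d(p,s)\le\epsilon$ immediately yields $d^{max}(p,\overline{se})\le\epsilon$.

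Finally I would stress that the inequality $d^{max}(p,\overline{se})\le d(p,s)$ makes no use whatsoever of the position of $e$; it relies only on $s$ belonging to $\overline{se}$. Consequently the bound holds uniformly over all locations of $e$, which is exactly the ``regardless of the location of the end point $e$'' clause. There is essentially no hard step here: the only thing one must be careful about is that the distance is measured to the \emph{segment} rather than to the infinite line, so that the endpoint $s$ is genuinely available as the nearest candidate. Were one instead to use distance-to-line, the claim could fail, and this is precisely why the argument is anchored at the endpoint $s$.
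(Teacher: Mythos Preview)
Your argument is correct and is the natural one: since $s\in\overline{se}$, the point-to-segment distance satisfies $d(p,\overline{se})\le d(p,s)\le\epsilon$, uniformly in $e$. The paper does not spell out a proof here but defers to the conference version~\cite{DBLP:conf/icde/LiuZSSKJ15}; the reasoning there is the same elementary observation you give.

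One small inaccuracy in your closing remark: the claim would \emph{not} fail under the point-to-line metric. The start point $s$ also lies on the infinite line determined by $s$ and $e$, so the perpendicular distance from $p$ to that line is still bounded by $d(p,s)\le\epsilon$. The distinction between segment and line distance is genuinely important in later theorems of the paper (where the relevant anchor points need not lie on the segment), but for Theorem~\ref{thm:-1} both metrics work and your caveat is unnecessary.
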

}

Note that the proofs for all the theorems provided in this section are provided in \cite{DBLP:conf/icde/LiuZSSKJ15}.
This theorem supports the quick decision on an incoming point even
without assessing the bounding boxes or lines. Such a point is directly ``included''
in the current segment, and the BQS structure will not be affected. It is also important to separate these
points so that BQS need only to consider points further from the origin, because these points could 
potentially result in huge bounding angles and close-to-origin
bounding boxes, rendering the BQS structure ineffective.

{\small 
\begin{theorem}
\label{thm:0}
Assume that the buffered points $\{p_i\}$ are
bounded by a rectangle in the spatial domain, with the vertices $c_1, c_2,
c_3, c_4$, if we denote the line defined by the start
point $s$ and the end point $e$ as $\overline{se}$, then we always have:
\begin{eqnarray}
d^{max}(p_i, \overline{se}) \geq min\{ d(c_i, \overline{se}) \} = d^{lb}\\
d^{max}(p_i, \overline{se}) \leq max\{ d(c_i, \overline{se}) \} =d^{ub}
\end{eqnarray}
\end{theorem}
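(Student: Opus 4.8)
The plan is to treat the signed distance to the line through $s$ and $e$ as an affine function $f(p)=a\,p_x + b\,p_y + c$ normalized so that $a^2+b^2=1$; then $d(p,\overline{se})=|f(p)|$ and $\overline{se}=\{f=0\}$. For the upper bound I would simply invoke convexity: $|f|$ is convex, every buffered point $p_i$ lies in the rectangle $\mathrm{conv}\{c_1,\dots,c_4\}$, and a convex function attains its maximum over a polytope at a vertex; hence $d(p_i,\overline{se})\le\max_j d(c_j,\overline{se})$ for each $i$, and taking the maximum over $i$ yields $d^{ub}$. Since the segment is itself a convex set, the point-to-segment distance of Definition~\ref{def:dev} is also convex in $p$, so this argument is insensitive to which of the two distance notions one adopts.

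For the lower bound I would first record the structural property that makes it true: the rectangle is the \emph{minimum} bounding box, so each of its four edges carries at least one buffered point (e.g. the point attaining $x_{\min}$ lies on the left edge). I would then split on whether $\overline{se}$ meets the closed rectangle. If it does not, $f$ keeps a constant sign on the box, so the minimum of $|f|$ over the box is attained at a corner; since every $p_i$ lies in the box, each individual point already satisfies $d(p_i,\overline{se})\ge\min_j d(c_j,\overline{se})$, and a fortiori so does the maximum.

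The remaining case — $\overline{se}$ cutting through the box — is the crux. Here I may assume no corner lies on the line, for otherwise $\min_j d(c_j,\overline{se})=0$ and the claim is vacuous. A line through the interior of a convex quadrilateral enters on one edge and exits on another, so exactly two edges are crossed and two are not; on an uncrossed edge both endpoint-corners share the sign of $f$. Invoking minimality, I pick a buffered point $p$ on such an uncrossed edge: since $f$ is affine, $f(p)$ lies between the two equal-sign endpoint values, so $|f(p)|$ is at least the smaller of the two endpoint distances, hence at least $\min_j d(c_j,\overline{se})$. This single point forces $d^{max}(p_i,\overline{se})\ge d^{lb}$.

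The hard part is exactly this crossing case: once the line slices the box, the closest box-point to the line is interior with distance $0$, so the naive ``closest corner equals closest point'' reasoning collapses, and the inequality would in fact be false for an arbitrary enclosing rectangle containing only interior points. What rescues it is the minimality of the box, which guarantees a genuine buffered point on every edge and in particular on an edge the line misses. The only remaining bookkeeping is identifying those uncrossed edges and the signs of $f$ on their corners; the opposite-corner identity $f(c_1)+f(c_3)=f(c_2)+f(c_4)$ is a convenient way to enumerate how the four corner signs can be distributed, but it is routine once the geometric idea above is in place.
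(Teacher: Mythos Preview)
Your argument is correct and rests on the same structural fact the paper uses: because the rectangle is the \emph{minimum} bounding box, every edge carries at least one buffered point. The paper's sketch simply asserts, for each edge $\overline{c_jc_k}$ with witness $p^e$ on it, the two-sided bound $\min\{d(c_j,\overline{se}),d(c_k,\overline{se})\}\le d(p^e,\overline{se})\le\max\{d(c_j,\overline{se}),d(c_k,\overline{se})\}$ and then ``consolidates'' over the four edges. Your treatment differs in two respects. For the upper bound you invoke convexity of the distance over the whole rectangle, which bounds \emph{every} buffered point at once rather than only the four edge witnesses; this is cleaner and also covers the point-to-segment metric without extra work. For the lower bound you add the crossing/non-crossing case split and deliberately pick the witness on an \emph{uncrossed} edge---exactly the place where the paper's per-edge lower inequality is genuinely valid (on a crossed edge it can fail, since $|f|$ may dip to zero between the corners). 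So the approaches are the same in spirit, but your decomposition makes explicit the step the paper's ``consolidation'' glosses over.
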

}
\eat{The bounding box
dictates that on each of its edges there must be at least one point.
If for any edge $\overline{c_jc_k}$ of the bounding box we denote a
buffered point on it as $p^e$, then we have $ min\{ d(c_j, \overline{se}), d(c_k, \overline{se})\} \leq d(
p^e, \overline{se} ) \leq max\{ d(c_j, \overline{se}), d(c_k, \overline{se})\}
$. Consolidating the bounds on all of the edges, we have the proof of
the theorem.
}

Theorems \ref{thm:-1} and \ref{thm:0} show how some of the points can
be safely discarded and how the basic lower bound and upper bound properties are
derived. However, Theorem \ref{thm:0}
only provides a pair of loose bounds that can hardly avoid any deviation
computation. To obtain tighter and useful bounds, we need to introduce
a few
advanced theorems. Throughout the theorem definitions we use the following notations:
{\small 
\begin{itemize}
\item Corner Distances: We use $d^{corner} = \{ d(c_i, \overline{se})\},
  ~ i\in\{1,2,3,4\}$ to denote the distances from each vertex of
  the bounding box to the current path line.
\item Near-far Corner Distances: We use $d^{corner-near} = \{ d(c_n, \overline{se})\}$ and $d^{corner-far} = \{ d(c_f, \overline{se})\}$,
  to denote the distances from the nearest vertex $c_n$ and the
  farthest vertex $c_f$ of the bounding box (near and far in terms of
  the distance to the origin)  to the current line. 
  The nearest and farthest corner points are determined by the
  quadrant the BQS is in. For example, in Figure \ref{fig:bqs} $c_n=c_4$ and $c_f=c_2$. 
\item Intersection Distances: We use $d^{intersection} = \{ d(p,
  \overline{se})\},  ~ p \in\{l_1, l_2, u_1, u_2\}$ to
  denote the distances from each 
intersection to the current path line, where $l_i$ are the
  intersection points of the lower angle bounding line and the bounding box,
  and $u_i$ are the
  intersection points of the upper angle bounding line and the bounding box.
\end{itemize} 
}

Some advanced bounds are defined as follows:

{\small 
\begin{theorem}
\label{thm:1}
Given a BQS, if the line $\overline{se}$ is in the quadrant, and
$\overline{se}$ is in between the two bounding lines ($\theta_{lb} \leq
\theta_{s,e} \leq \theta_{ub}$), then we have the following bounds on
the segment's deviation: 
\begin{eqnarray}
&d^{max}(p, \overline{se})& \geq d^{lb} = \\ \nonumber &max& \begin{cases} ~ & min\{ d(l_1, \overline{se}),
  d(l_2,\overline{se}) \}  \\ ~ & min\{ d(u_1, \overline{se}),
  d(u_2,\overline{se}) \} \\ 
  ~& \begin{cases} ~ & d^{corner-near}, \forall d(s,e)< d(s, c_f)
  \\ ~ & d^{corner-far}, \forall d(s,e)\geq d(s, c_f) \end{cases}\end{cases}\\
\label{eqn:ub} &d^{max}(p, \overline{se})& \leq  d^{ub} = \\
\nonumber &max&\{  d^{intersection}, d^{corner-far}\}
\end{eqnarray}
\end{theorem}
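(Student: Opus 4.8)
The plan is to exploit two structural facts. First, every buffered point lies inside the convex region $R$ obtained by intersecting the bounding box $c_1c_2c_3c_4$ with the angular cone at $s$ delimited by the two bounding lines, and the vertices of $R$ form a subset of the eight significant points $\{c_1,c_2,c_3,c_4,l_1,l_2,u_1,u_2\}$: a box corner is a vertex of $R$ only when it falls inside the cone, while each bounding line contributes the pair $l_1,l_2$ (resp.\ $u_1,u_2$) where it crosses the box. Second, the hypothesis $\theta_{lb}\le\theta_{s,e}\le\theta_{ub}$ places $\overline{se}$ inside this cone, so the lower-line points sit on one side of $\overline{se}$ and the upper-line points on the other; the maximum deviation may be attained on either side, which is precisely why $d^{lb}$ is assembled as a maximum of several candidates rather than a single expression.

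For the upper bound I would argue by convexity. The map $x\mapsto d(x,\overline{se})$ is the distance from $x$ to the line segment $\overline{se}$, i.e.\ the distance to a convex set, hence a convex function of $x$. A convex function on the convex polytope $R$ attains its maximum at a vertex, so $d^{max}(p,\overline{se})=\max_{v\in V(R)}d(v,\overline{se})$, which is at most the largest distance over all significant points. The remaining work is purely geometric: I must show that among the box corners that can genuinely be vertices of $R$, the far corner $c_f$ dominates in distance, equivalently that $R$ is contained in the convex hull of $\{l_1,l_2,u_1,u_2,c_f\}$, so that the near and side corners drop out and the bound collapses to $\max\{d^{intersection},d^{corner-far}\}$.

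For the lower bound I would instead invoke incidence properties, since bounding $d^{max}$ from below requires exhibiting genuine buffered points. By definition of the bounding lines there is a buffered point $p_\ell$ on the lower line, necessarily on the in-box portion $\overline{l_1 l_2}$; since the signed distance to the infinite line carrying $\overline{se}$ is affine along a segment and $l_1,l_2$ lie on the same side of $\overline{se}$, the unsigned distance of $p_\ell$ satisfies $d(p_\ell,\overline{se})\ge\min\{d(l_1,\overline{se}),d(l_2,\overline{se})\}$; the identical reasoning on the upper line yields the $\min\{d(u_1,\overline{se}),d(u_2,\overline{se})\}$ candidate. The third candidate comes from the minimum-bounding-box property (each edge carries a buffered point) together with the near/far corner analysis, and the case split $d(s,e)<d(s,c_f)$ versus $d(s,e)\ge d(s,c_f)$ reflects whether the perpendicular foot of the relevant corner onto $\overline{se}$ lands inside the segment: when $e$ is close the near corner supplies the safe segment-distance bound, and when $e$ is far the far corner does.

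I expect the main obstacle to be exactly this corner analysis in both bounds, compounded by the use of line-segment rather than infinite-line distance. The upper bound needs a precise determination of which box corners survive as vertices of $R$ and a verification that $c_f$ dominates them; the third lower-bound candidate needs a correct relation between a real on-edge point's segment-distance and a corner's distance, plus a justification that $d(s,c_f)$ is exactly the threshold at which the perpendicular foot crosses $e$. The angular bookkeeping --- which corner is near, which is far, and on which side of $\overline{se}$ each significant point lies --- is routine within a single quadrant but must be tracked carefully to keep every inequality valid.
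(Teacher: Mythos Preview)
Your overall strategy matches the paper's: bound the buffered points inside a convex region, take the maximum of a convex (distance) function at extreme points for the upper bound, and exploit the incidence guarantees (a real buffered point on each bounding line, a real buffered point on each box edge) for the lower bound. The lower-bound argument you sketch is essentially what the paper does.

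There is, however, a concrete gap in your upper-bound plan. You propose to show $R\subseteq\operatorname{conv}\{l_1,l_2,u_1,u_2,c_f\}$ so that the near corner ``drops out''. This containment is \emph{false}: in the canonical first-quadrant picture the near corner $c_n$ (e.g.\ $c_4$) is a genuine vertex of $R$, sitting outside the chord $\overline{u_1 l_1}$ and hence outside $\operatorname{conv}\{l_1,l_2,u_1,u_2,c_f\}$. Your vertex-max argument therefore only yields $d^{max}\le \max\{d^{intersection},d^{corner\text{-}far},d(c_n,\overline{se})\}$, and you still owe an argument that $d(c_n,\overline{se})$ is dominated.

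The paper's fix is to bring the origin $s$ into the covering set. It encloses $R$ in two triangles $\triangle s\,u_2\,p_1$ and $\triangle s\,l_2\,p_1$ with $p_1$ on $\overline{se}$; since $d(s,\overline{se})=d(p_1,\overline{se})=0$, convexity gives $d^{max}\le\max\{d(u_2,\overline{se}),d(l_2,\overline{se})\}$. In your language, the correct containment is $R\subseteq\operatorname{conv}\{s,l_1,l_2,u_1,u_2,c_f\}$ (indeed $c_n\in\triangle s\,u_1\,l_1$ because $c_n$ lies in the cone and on the origin side of $\overline{u_1 l_1}$), and the extra vertex $s$ is harmless because its distance is zero. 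Once you add $s$, your convexity argument goes through and is equivalent to the paper's triangle covering.
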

}
A line $l$ is ``in'' the quadrant $Q$ if the angle $\theta^l$ between $l$ and the
$x$ axis satisfies $ \theta^Q_{start}
\leq \theta^l < \theta^Q_{end}$, where $\theta^Q_{start}$ and
$\theta^Q_{end}$ are the angle range of the quadrant where the BQS
resides. Note that this definition is distance metric-specific. \eat{Since
we use point-to-line distance, a line is automatically ``in'' exactly
two quadrants.}
In future references we assume
$\theta^l$ satisfies $ \theta^Q_{start}
\leq \theta^l < \theta^Q_{end}$ if it is ``in''
the quadrant.

{\small 
\begin{theorem}
\label{thm:2}
Given a BQS, if  $\overline{se}$ is in the quadrant, and
is outside the two bounding lines ($\theta_{ub} <
\theta_{s,e} ~or~ \theta_{lb} > \theta_{s,e}$), we have the same bounds on
the segment deviation as in Theorem \ref{thm:1}.
\end{theorem}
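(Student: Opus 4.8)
The plan is to show that neither the upper-bound nor the lower-bound derivation behind Theorem \ref{thm:1} actually used the hypothesis that $\overline{se}$ lies between the two bounding lines, so the formulas transfer verbatim to the case $\theta_{s,e} > \theta_{ub}$ (the case $\theta_{s,e} < \theta_{lb}$ being symmetric by swapping the roles of the lower and upper bounding lines). The shared backbone is the observation, underlying Theorems \ref{thm:0} and \ref{thm:1}, that every buffered point lies in the convex polygon $\mathcal{H}$ obtained by intersecting the bounding box with the angular wedge $[\theta_{lb},\theta_{ub}]$, and that every vertex of $\mathcal{H}$ is one of the significant points $c_1,\dots,c_4,l_1,l_2,u_1,u_2$. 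I would state this containment explicitly as the first step, since both bounds are read off from it.

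For the upper bound I would argue that it carries over without change, because its justification never refers to $\theta_{s,e}$. The map $p \mapsto d(p,\overline{se})$ is convex (a distance to a fixed line is the absolute value of an affine functional, piecewise affine with its single crease on the line itself), so its maximum over the convex polygon $\mathcal{H}$ is attained at a vertex of $\mathcal{H}$. Every such vertex is either an intersection point, already accounted for in $d^{intersection}$, or a surviving box corner; and exactly as in Theorem \ref{thm:1} one checks that among the surviving corners only $c_f$ can realize the maximum, the near and side corners being dominated either by $c_f$ or by the adjacent intersection points that replace them when the wedge clips the box. Hence $d^{max}(p,\overline{se}) \le \max\{d^{intersection}, d^{corner-far}\}$ holds unchanged.

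For the lower bound I would check that each of its three candidates remains individually a valid lower bound on $d^{max}$; the maximum of several valid lower bounds is again a lower bound, so the combination rule of Theorem \ref{thm:1} needs no modification. The two intersection candidates reuse the inequality from the proof of Theorem \ref{thm:1}: the smallest-angle (resp.\ largest-angle) buffered point lies on the chord $\overline{l_1 l_2}$ (resp.\ $\overline{u_1 u_2}$) that a bounding line cuts from the box, and its distance to $\overline{se}$ is at least $\min\{d(l_1,\overline{se}),d(l_2,\overline{se})\}$ (resp.\ the $u$ analogue) provided $\overline{se}$ does not separate the relative interior of that chord from both of its endpoints. The decisive point is that $\overline{se}$, the lower bounding line and the upper bounding line are all concurrent at the origin $s$, whereas the chords lie strictly inside the box and hence away from $s$; therefore $\overline{se}$ meets each chord at most at $s$, never in its relative interior, and this holds whether $\theta_{s,e}$ lies inside or outside $[\theta_{lb},\theta_{ub}]$. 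That indifference is precisely why the same bounds survive. The corner candidate is inherited from Theorem \ref{thm:1} verbatim, the comparison $d(s,e) \lessgtr d(s,c_f)$ selecting $d^{corner-near}$ or $d^{corner-far}$.

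The main obstacle I anticipate is the bookkeeping forced by the point-to-line-\emph{segment} distance rather than the point-to-line distance. When $\overline{se}$ leaves the wedge, all buffered points sit on one side of it, and one must ensure that the perpendicular feet used in the monotonicity and corner arguments still land on the segment, or else fall back to the endpoint $e$; this is exactly the role of the $d(s,e) \lessgtr d(s,c_f)$ split, which must be re-justified in the new angular configuration. Verifying that $\overline{se}$ does not cross the particular box edge carrying the buffered point used in the corner lower bound, and confirming that the surviving hull vertices in the upper-bound step are genuinely confined to $\{l_1,l_2,u_1,u_2,c_f\}$, are the only places where the changed position of $\overline{se}$ has to be checked by hand; everything else is quoted directly from Theorems \ref{thm:0} and \ref{thm:1}.
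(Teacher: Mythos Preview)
Your proposal is correct and aligns with the paper's own treatment: the paper simply states that Theorem~\ref{thm:2} ``is proven with similar techniques'' as Theorem~\ref{thm:1}, deferring the details to the companion conference version. Your write-up is considerably more explicit than that---in particular your observation that the three concurrent rays through $s$ cannot meet the chords $\overline{l_1l_2}$, $\overline{u_1u_2}$ in their relative interiors is exactly the reason the lower-bound candidates survive unchanged, and your convexity argument for the upper bound is a clean repackaging of the paper's triangle-containment argument---but the underlying route is the same.
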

}
If the path line is not in the same quadrant with the BQS because the new $e$ moves to another quadrant, 
we use Theorem \ref{thm:4} to derive the bounds:
{\small \begin{theorem}
\label{thm:4}
Given a BQS, if the line $\overline{se}$ is not in the quadrant, 
the bounds of the segment deviation are defined as: 
\begin{eqnarray}
d^{max}(p, \overline{se}) &\geq& d^{lb} = \\ \nonumber &max& \begin{cases} ~ & min\{ d(l_1, \overline{se}),
  d(l_2,\overline{se}) \}  \\ ~ & min\{ d(u_1, \overline{se}),
  d(l_2,\overline{se}) \} \\ ~ & 3^{rd}largest(\{d^{corner}\})  \end{cases}\\
d(p, \overline{se}) &\leq&  max\{  d^{corner} \} = d^{ub}
\end{eqnarray}
\end{theorem}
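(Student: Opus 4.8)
The plan is to split the statement into its upper and lower halves, reusing Theorem~\ref{thm:0} for the former and a short case analysis driven by the hypothesis ``$\overline{se}$ is not in the quadrant'' for the latter. Recall that the buffered points all lie in the bounding box (the convex hull of $c_1,\dots,c_4$) intersected with the angular sector between the two bounding lines, and that minimality forces at least one buffered point onto every box edge and onto each bounding line (the angle-extremal points). For the upper bound $d^{max}(p,\overline{se})\le\max\{d^{corner}\}$ I would observe that the distance to the segment $\overline{se}$, being the distance to a convex set, is a convex function on the plane; since every buffered point lies in the convex box, the maximum of this convex function over the box is attained at a vertex $c_i$, so no buffered point can exceed $\max_i d(c_i,\overline{se})$. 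This is exactly the upper half of Theorem~\ref{thm:0}. The tighter intersection term of Theorems~\ref{thm:1}--\ref{thm:2} is dropped here because, once $\overline{se}$ leaves the quadrant, one can no longer certify that the far corner or the intersection points dominate, so the safe fallback is the maximum over all four corners.

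For the lower bound I would show that each of the three candidates defining $d^{lb}$ is individually $\le d^{max}(p,\overline{se})$, which is precisely what the outer $\max$ requires. The two bounding-line candidates are the easy ones: the angle-extremal witness on the lower line lies on the chord $\overline{l_1 l_2}$ that the box cuts from that line, and since that line passes through the origin $s$, the distance to $\overline{se}$ increases monotonically as one moves outward along the chord --- whether $d$ equals the perpendicular distance or the distance to $s$, both scale with the radial parameter. Hence the witness distance is at least $\min\{d(l_1,\overline{se}),d(l_2,\overline{se})\}$, and symmetrically for the upper line and $\min\{d(u_1,\overline{se}),d(u_2,\overline{se})\}$ (I read the printed ``$d(l_2,\overline{se})$'' in the second case as a typo for $d(u_2,\overline{se})$).

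The corner candidate carries the real work and factors into a combinatorial step and a geometric step. Combinatorially, I would prove that for four numbers placed on the vertices of a $4$-cycle, the quantity $\max_{\text{edges}}\min\{\text{endpoints}\}$ is always at least the third largest value: if the two largest sit on adjacent corners, their shared edge already realizes the second largest, and if they are opposite, then the third-largest corner is necessarily adjacent to the largest and realizes the third largest. Since every box edge carries a buffered point, it then suffices that $d$ be monotone along each edge, so that such a point inherits the ``$\ge\min$ of endpoints'' estimate; chaining gives $d^{max}\ge\max_{\text{edges}}\min\ge 3^{rd}largest(\{d^{corner}\})$. Establishing this monotonicity is the main obstacle and the only place the hypothesis is genuinely used: I would prove the geometric lemma that, because $\overline{se}$ issues from the origin into another quadrant while the box lies inside the current quadrant, the nearest point of the box to $\overline{se}$ is the near corner $c_n$, whereupon convexity of $d(\cdot,\overline{se})$ forbids an interior minimum on any edge. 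The delicate part is verifying this lemma uniformly across the adjacent-quadrant case and the opposite-quadrant case (where $d$ degenerates to the distance to $s$), ruling out the segment's closest approach from landing in the interior of a far edge.
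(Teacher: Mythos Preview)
Your approach is essentially the paper's, only far more carefully developed. The paper's own argument for Theorem~\ref{thm:4} is a single sentence: it says the case where $\overline{se}$ lies outside the quadrant ``can be proven using the same proof as for Theorem~\ref{thm:0}'', i.e.\ the basic edge-witness argument that every side of the bounding box carries at least one buffered point whose distance to $\overline{se}$ is sandwiched between the distances of the two incident corners. Your upper bound is exactly this (phrased via convexity of the distance functional), and your corner lower bound is the same edge-by-edge sandwich, with the added combinatorial observation that $\max_{\text{edges}}\min\{\text{endpoints}\}$ on a $4$-cycle dominates the third-largest corner value---a sharpening the paper leaves implicit in its ``consolidating the bounds'' step.

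Where you go beyond the paper is in isolating the monotonicity of $d(\cdot,\overline{se})$ along each box edge as the genuine geometric content, and in tying it explicitly to the out-of-quadrant hypothesis (so that $\overline{se}$ cannot cross the box and the convex distance function has no interior minimum on any edge). The paper does not spell this out; it simply invokes Theorem~\ref{thm:0}. Your separate treatment of the two bounding-line terms is likewise absent from the paper's one-line reduction. So your route is the same skeleton with substantially more flesh, and your identification of the monotonicity lemma as ``the delicate part'' is accurate---it is precisely what the paper's terse proof sweeps under the rug.
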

}

\eat{A summary of the bounds and criteria is provided in the
following table:
\begin{table}[htp]
\caption{Bounds and conditions}
\label{tbl:bnd}
\begin{tabular}{ | c |c|c|c | }
\hline \shortstack{In\\quadrant} & \shortstack{Intersects\\box} &
\shortstack{In bounding\\lines} & \shortstack{Applicable\\ Theorem} \\ 
\hline $\surd$ & $\surd$ & $\surd$ & \ref{thm:1} \\
\hline $\surd$ & $\surd$ & $\times$ &\ref{thm:2} \\
\hline $\surd$ & $\times$ & $\times$ & \ref{thm:3} \\
\hline $\times$ & $\times$ & $\times$ & \ref{thm:4} \\
\hline
\end{tabular}
\end{table}
Note that the conditions here are progressive. A line must be in the
quadrant to be able to intersect with in the bounding box, then it has
be intersecting with the bounding box to be in the bounding
lines. Given such relations, the table here covers every possibility
that can occur given a line and a QBS, which guarantees in every case
there will be applicable lower bound and upper bound to help avoid an
actual full calculation of segment deviation.}

\vspace{-0.7cm}
\subsection{The BQS Algorithm}

The BQS algorithm is formally described in Algorithm \ref{alg:1}:

\begin{algorithm}[thp]
\caption{The BQS Algorithm}
\textbf{Input:} $s$, $e$, $\mathcal{B}$, $\epsilon^d$ \COMMENT{\textcolor{blue}{start\&end, buffer, tolerance}}\\
\textbf{Algorithm:}

\begin{algorithmic}[1]

  \IF{$d(s,e) \leq \epsilon^d $} \COMMENT{\textcolor{blue}{check for trivial decision}}
  \STATE $Decision: \mathcal{B} \leftarrow e \cup \mathcal{B}$ 
  \ELSE
  
  \STATE Compute $d^{lb}_i$, $d^{ub}_i$ using Theorems \ref{thm:1},  \ref{thm:2}, and \ref{thm:4}
  \STATE $d^{lb} \leftarrow max\{ d^{lb}_i \}$ and   $d^{ub} \leftarrow max\{ d^{ub}_i \}$
  
  \IF{$d^{ub} \leq \epsilon^d $} \COMMENT{\textcolor{blue}{low upper bound, safe to continue}}
  \STATE $Decision: \mathcal{B} \leftarrow e \cup \mathcal{B}$
  
  \ELSIF{$d^{lb} > \epsilon^d $} \COMMENT{\textcolor{blue}{high lower bound, stop immediately}}
  \STATE $Decision$: Current segment stops and a new segment starts

  \ELSIF{$d^{lb} \leq \epsilon^d < d^{ub}$} \COMMENT{\textcolor{blue}{decision uncertain from bounds}}
  \STATE $d \leftarrow ComputeDeviation(\mathcal{B}, \overline{se})$
  \STATE $Decision$: made according to $d$
  \ENDIF
  \ENDIF
  \STATE Update the corresponding BQS for $e$

\end{algorithmic}

\label{alg:1}
\end{algorithm}


The algorithm starts by checking if there is a trivial decision: by
Theorem \ref{thm:-1}, if the incoming point $e$ lays within the range of
$\epsilon$ of the start point, no matter where the final end point is,
the point will not result in a greater deviation than
$\epsilon$, so $e$ is added to buffer  $\mathcal{B}$ (Lines 1-2). After $e$ passes this test, it means $e$ may result in a
greater deviation from the buffered points. So we assume that $e$ is
the new end point, and assume the current segment is presented by
$\overline{se}$. Now for each quadrant we have a BQS, maintaining their
respective bounding boxes and bounding lines. For each BQS, we have a
few (8 at most) significant points, identified as the four corner
points of the bounding box and the four intersection points between
the bounding lines and the bounding box. According to the theorems we
defined, we can aggregate four sets of lower bounds and upper bounds
for each quadrant, and then a global lower bound and a global upper
bound for all the quadrants (Lines 4-5). According to the global lower
bound and upper bound, we can quickly make a compression decision. If
the upper bound is smaller than $\epsilon$, it means no buffered point
will have a deviation greater than or equal to $\epsilon$, so the
current point $e$ is immediately included to the current trajectory
segment and the current segment goes on (Lines 6-7). On the contrary, 
if the lower bound is greater than
$\epsilon$, we are guaranteed that at least one buffered point will
break the error tolerance, so we save the current segment (without
$e$) and start a new segment at $e$ (Lines 8-9). Otherwise if the
tolerance happens to be in between the lower bound and the upper
bound, an actual deviation computation is required, and decision will
be made according to the result (Lines 10-12). Finally, if the
current segment goes on, we put $e$ into its corresponding quadrant
and update the BQS structure in that quadrant (Line 15).

\begin{figure}[htp]
\vspace{-0.3cm}
\hspace{1.8cm}
\includegraphics[height=4cm]{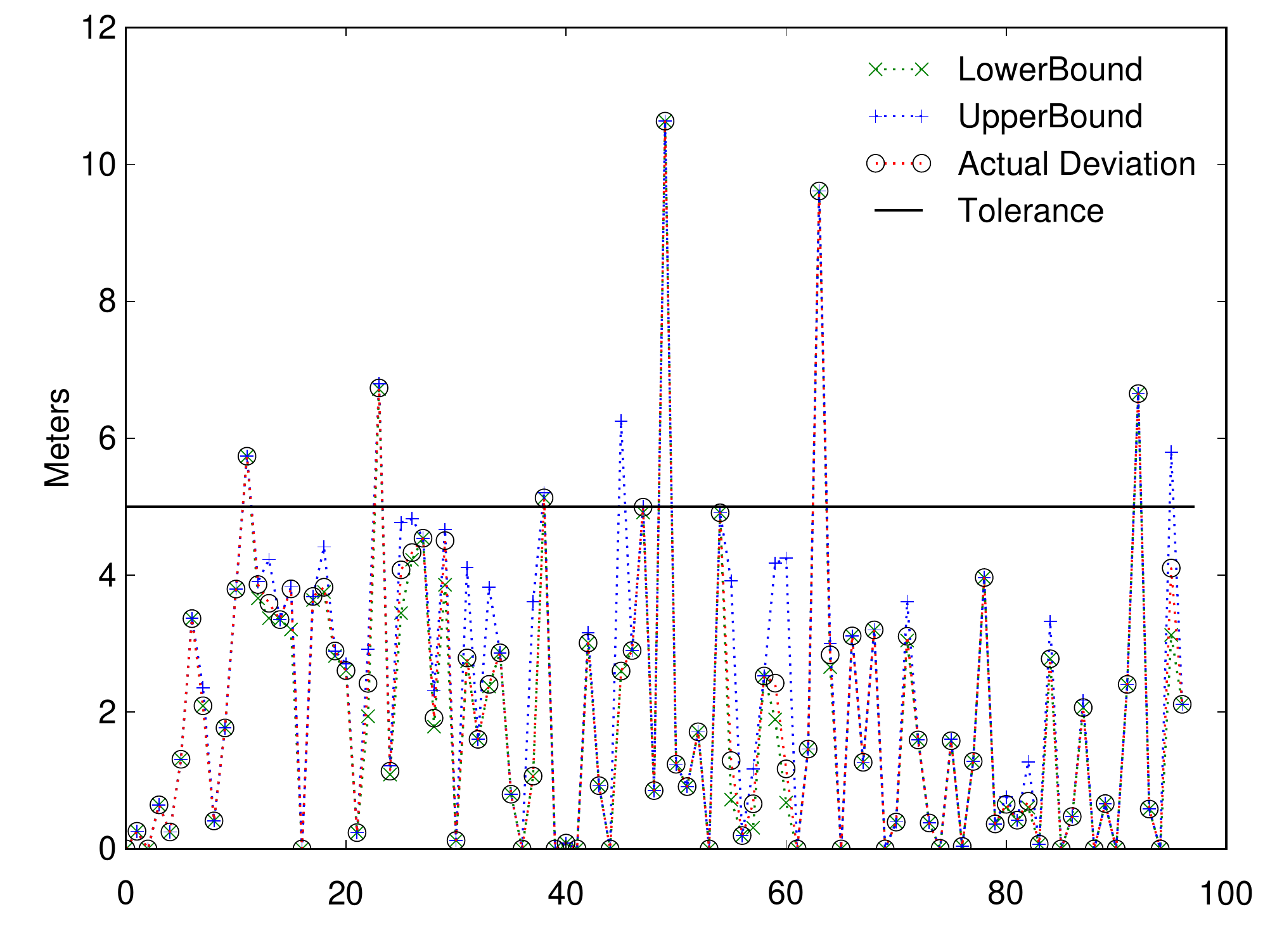}
\caption{Bounds v.s. Actual Deviation}
\label{fig:bds}
\vspace{-0.3cm}
\end{figure}

Figure \ref{fig:bds} demonstrates the lower and upper
bounds as well as the actual deviations of some 
randomly-chosen location points from the real-life
flying fox dataset, with
$\epsilon^d$ set to $5~m$. The $x$ axis shows the indices of the points,
while the solid horizontal line indicates the error tolerance. It is evident that
in most cases the bounds are very tight and that in more than $90\%$ of the
occasions we can determine if a point is a key point by using only the
bounds and avoid actual deviation calculations.

A technique called data-centric rotation is used to further tighten 
the bounds \cite{DBLP:conf/icde/LiuZSSKJ15}, which also shows how to generalize
the algorithm to the 3-D case and to a different error metric. 

\eat{
\vspace{-0.3cm}
\subsection{Data-centric Rotation}
A technique called data-centric rotation is used to further tighten 
the bounds. When a new segment is started, instead of constructing and 
updating the BQS immediately after the arrival of new points, we allow 
a tiny buffer to store the first few 
points (e.g. 5) that are not in the range of $\epsilon$ within the start point
(meaning that these points will actually affect the bounding box). 
With the buffer, we compute the centroid of the buffered
points, and rotate the current $x$ axis to the line from the start
point to the centroid. By applying this rotation, we enforce that the 
points are split into two BQS. This improves the
tightness of the bounds because the bounding convex-hulls are
generally tighter with less spread. Once the rotation angle is
identified, each new point for the same segment is temporally 
rotated by the same angle when estimating their distances to the line
$\overline{se}$.

\begin{figure}[htp]
\vspace{-0.4cm}
\hspace{1.5cm}
\includegraphics[width=4.7cm]{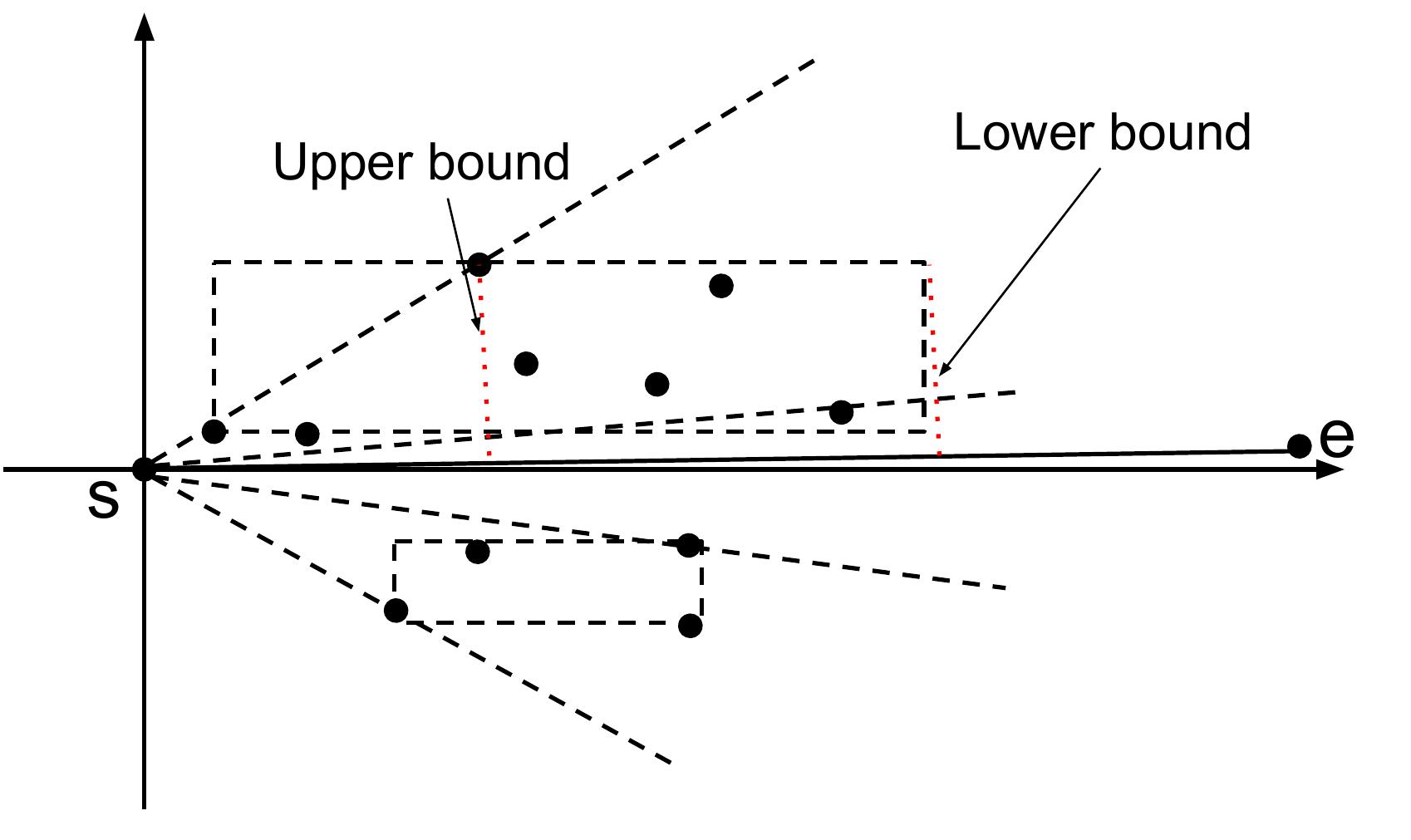}
\caption{Data-centric Rotation}
\label{fig:rtt}
\end{figure}
\vspace{-0.3cm}
Figure \ref{fig:rtt} shows the effect of the rotation. The
points are tidentical as in Figure \ref{fig:bqs}, except for the rotation that ``centers'' them to the $x$ axis. After
the rotation they are split into two BQS and it's visually
evident that the gap between the lower and the upper
bound becomes smaller. As in reality the likelihood is 
substantial for a moving object to travel in a major direction despite
slight heading changes, this step improves the BQS's pruning power 
significantly. The procedure is applied on Algorithm \ref{alg:1}. 
}

\vspace{-0.3cm}
\subsection{Achieving Constant Time and Space Complexity}
With the pruning power introduced by the deviation bounds, Algorithm
\ref{alg:1} achieves excellent performance in terms of time
complexity.
Its expected time complexity is $ \alpha \times
n \times c_1 +
(1-\alpha) \times n \times m \times c_2$, where $\alpha$ is the
pruning power, $m$ is the maximum buffer size, and $c_1,c_2$ are two
constants denoting the cost of the processing of each point by the BQS structure and by the full deviation calculation respectively.
\eat{
\begin{algorithm}[thp]
\caption{Faster BQS Algorithm}
\textbf{Input:} Start point $s$, incoming new point $e$, previous
point $e_p$, deviation tolerance $\epsilon^d$\\
\textbf{Algorithm:}

{\begin{algorithmic}[1]

  \STATE Identical with Algorithm \ref{alg:1} before Line 12 except
  line\\ 5 is no longer required
  \IF{$d^{lb} \leq \epsilon^d \leq d^{ub}$}
  \STATE $Decision$ = Current segment stops and new \\segment starts at the previous point $e_p$
  \ENDIF
  \RETURN $Decision$
\end{algorithmic}
}
\label{alg:2}
\end{algorithm}}
Empirical study in Section \ref{sec:exp} shows that $\alpha$ is
generally greater than $0.9$, meaning the time complexity is
approaching $\mathcal{O}(n)$ for the whole data stream. However, the
theoretical worst case time complexity is still $\mathcal{O}(n^2)$. 
Moreover, because we still keep a buffer for potential
deviation calculation, the worst-case space complexity
is $\mathcal{O}(n)$. To further reduce the complexity,  
we propose a more efficient version that still utilizes
the bounds but completely avoids any full deviation
calculation and any use of buffer, making the time and space
 costs constant for processing a point.

The algorithm is nearly identical to Algorithm \ref{alg:1}.
The only difference
is that whenever the case $d^{lb} \leq \epsilon^d < d^{ub}$ occurs
(Line 10), a conservative approach is taken. No deviation calculation is
performed, instead we take the point and start a new trajectory
segment to avoid any computation and to eliminate the necessity of
maintaining a buffer for the points in the current segment. So 
Lines 11-12 in Algorithm \ref{alg:1} are changed into making the ``stop and
restart'' decision (as in Line 9) directly without any full calculation in Line
11. The maintenance of the buffer is not needed any more.

The \emph{Fast BQS} (FBQS) algorithm takes slightly more points than Algorithm
\ref{alg:1} in the compression, reducing the compression rate by a small margin. However,
the simplification on the time and space complexity is
significant. The fast BQS algorithm achieves constant complexity in both time and
space for processing a point. Equivalently, time and space
complexity are $\mathcal{O}(n)$ and $\mathcal{O}(1)$ for
the whole data stream. 

The time complexity is only introduced
by assessing and updating a few key variables, i.e. bounding lines, intersection points and
corner points. We can now arrive at the compression decision by keeping only the
significant BQS points of the number $c\leq 32$ ( 4 corner points and
4 intersection points at most for each quadrant) for the
entire algorithm.

When the buffer size is unconstrained, the three algorithms 
Buffered Douglas-Peucker (BDP),
Buffered Greedy Deviation (BGD) and Fast BQS (FBQS)
have the following worst-case time and space complexity:
\eat{
\begin{table}[htp]
\centering
\caption{Worst-case Complexity}
\label{tbl:cmp}
\begin{tabular}{ | c |c|c| }
\hline  & \textbf{Time} & \textbf{Space} \\ 
\hline \textbf{FBQS} & $\mathcal{O}(n)$ & $\mathcal{O}(1)$\\
\hline \textbf{BDP} & $\mathcal{O}(n^2)$& $\mathcal{O}(n)$\\
\hline \textbf{BGD} & $\mathcal{O}(n^2)$& $\mathcal{O}(n)$\\
\hline
\end{tabular}
\end{table}
}

\begin{table}[htp]
\scriptsize 
\centering
\caption{Worst-case Complexity}
\label{tbl:cmp}
\begin{tabular}{ | c |c|c| c|}
\hline  & \textbf{FBQS} & \textbf{BDP} & \textbf{BGD}\\ 
\hline \textbf{Time} & $\mathcal{O}(n)$ & $\mathcal{O}(n^2)$& $\mathcal{O}(n^2)$\\
\hline \textbf{Space} & $\mathcal{O}(1)$ & $\mathcal{O}(n)$& $\mathcal{O}(n)$\\
\hline
\end{tabular}
\end{table}

\vspace{-0.5cm}
\eat{
Note that when the compressed trajectories are inserted to the spatial
index, their bounding boxes are loosened to cover the error-bound, so
that whenever a search is performed on the index, results will not be
missed because of the deviation introduced in the compression.
}

\eat{
\subsection{Generalization}
\label{sec:gen}
\eat{We briefly introduce a 3-D variant of the BQS algorithm to
demonstrate its extensibility for more complex application
requirements such as 3-D location tracking and time-sensitive tracking.
}
In scenarios such as indoor tracking or aero vehicle tracking, 
the process takes places in a
3-D geographic space defined by $<latitude, longitude, altitude>$ \cite{citeulike:3939699}.
In other tracking applications it is desirable to know where
the moving object is at a certain time, which makes the form $<latitude, longitude, timestamp>$. 
Therefore, time-synchronized error (as
in \cite{Cao:2006:SDR:1147679.1147681,DBLP:conf/edbt/MeratniaB04}) becomes a useful metric. 

Both applications require a trajectory compression algorithm that can handle
3-D data. Now we show that it is straightforward to extend the BQS algorithm to the 3-D
space. Let us revisit the example we
discussed in Figure \ref{fig:bqs}, where we demonstrated
the concept of bounding box and bounding line in the 2-D case. For
the 3-D case, instead of using bounding boxes and bounding lines, 
we use bounding right rectangular prisms and bounding
  planes to bound the location points. An illustration
is given in Figure \ref{fig:bqs3d}.

\begin{figure}[htp]
\vspace{-0.3cm}
\hspace{1.5cm}
\includegraphics[height=4.5cm]{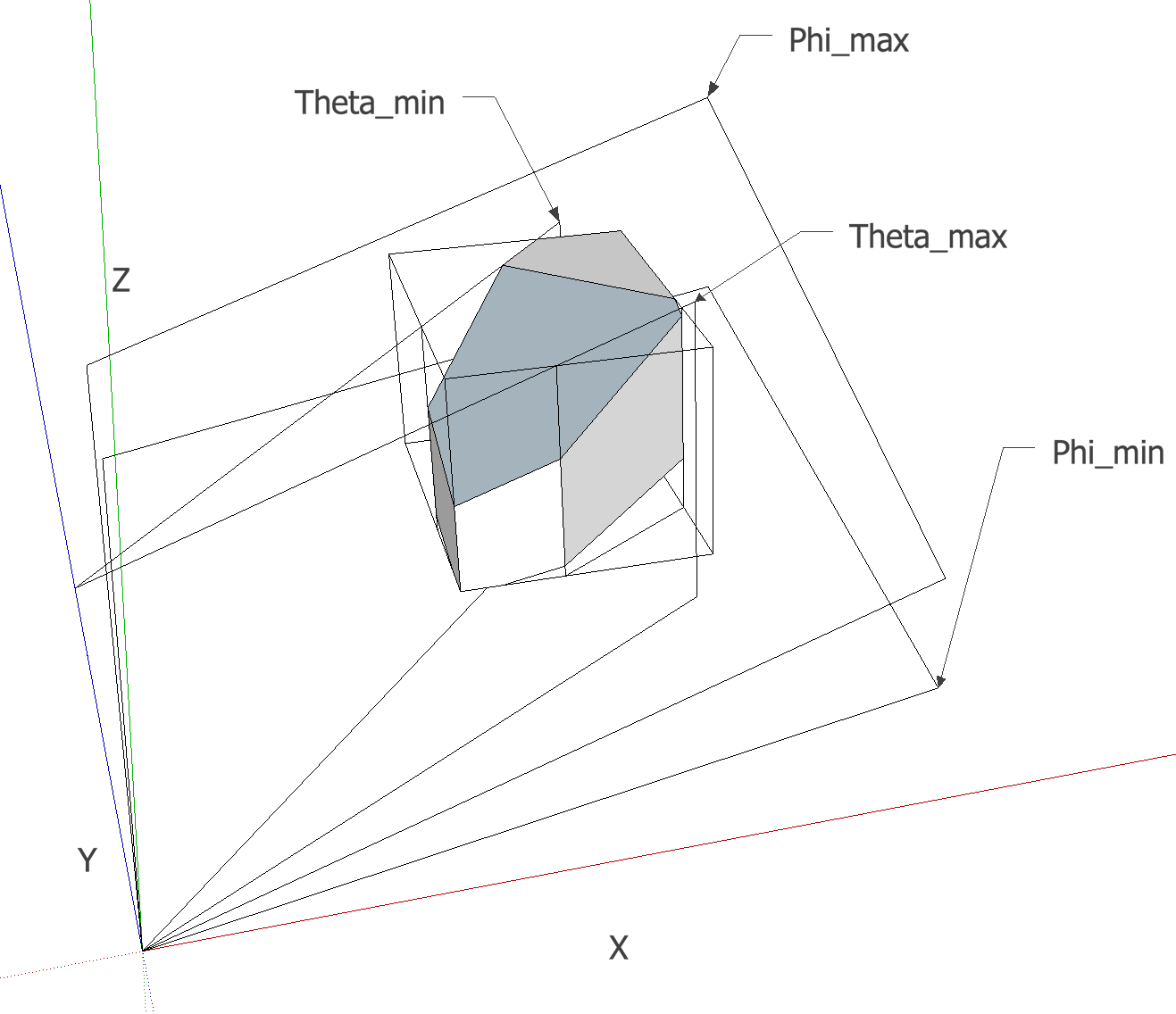}
\caption{The 3-D version of BQS illustrated}
\label{fig:bqs3d}
\end{figure}

In Figure \ref{fig:bqs3d}, the $z$ axis can either represent
the timestamp or the altitude. The deviation metric is  extended
to measure the maximum distance from original points to a
line in 3-D. There are eight quadrants in total. We use the quadrant 
($x>0 \land y>0 \land z>0 $) as an example. 
A bounding prism is used to bound the location points in each
quadrant, as a direct extension to the 2-D case. Then, in each
quadrant, we also establish two pairs of bounding planes to maintain
the minimum and maximum angles to their respective reference planes. 
We have a pair of ``vertical'' bounding planes $\Theta_{min},\Theta_{max}$, which both
are orthogonal to the $XY$ plane and contain the $z$ axis. They
represent the minimum and maximum angles formed by any plane that contains a
location points to the $YZ$ plane. Similarly, we have the ``inclined''
bounding planes $\Phi_{min}$ and $\Phi_{max}$. They represent the minimum and
maximum angles between the planes in $\{\Phi_i\}$ to the $XY$
plane. Any $\Phi_i$ is determined by three points, namely two anchor
points and one data point, and all $\Phi_i$ in a quadrant share the
same anchor points. The anchor points are determined by the quadrant,
as $(sign(x)\times 1,-sign(y) \times 1,0),(-sign(x) \times 1,sign(y) \times 1,0)$. For this quadrant, the anchor points
are $(1,-1,0)$ and $(-1,1,0)$.

\eat{
which form two inverted cones
starting from the origin. Every point on a cone yields the same angle
between the line from the origin to it and the line from the origin to
its projected point on the $XY$ plane. However, surfaces for inverted
cones are more expensive for the intersection calculations than
planes are. In practice, we hence use two planes two approximate the two
surfaces, namely the $\Phi_{min}$ and $\Phi_{max}$ bounding planes in
Figure \ref{fig:bqs3d}. They are derived as shown in Figure
\ref{fig:bqs3d}. First we calculate the intersecting curves between
the two cones and the bounding prism, then we choose the approximation
planes separately for the minimum and maximum angles. Figure
\ref{fig:approx} shows the approximation process, from an orthogonal
perspective from a point on the z-axis to the bounding prism. The two dotted
arcs are the intersections of the cones to the prism, and the two
dashed lines are the approximation plane for the two cones. For the maximum
case ($\Phi_{max}$), we use the two intersection points between the
cone and the edges of the first plane}

The bounding prism is hence ``cut'' into a 
polyhedron that is also a convex hull (shadowed part in Figure \ref{fig:bqs3d}), and the
vertices that form the hull are the significant points that derive error bounds. \eat{There are
mature software libraries to enable the efficient calculation of the bounding
polyhedron, such as GEOS\footnote{\url{http://trac.osgeo.org/geos/}}
or CGAL\footnote{\url{http://www.cgal.org/}}. 
In practice, to further improve
the efficiency of the computation of the convex hull, we only consider
the intersection points between the bounding planes and the
bounding prism, while intersections between bounding planes
are not considered. This will slightly increase the volume of the
bounding polyhedron but the computational cost will decrease
considerably and become constant (independent of the
spatial relations among bounding planes). Finally,} 
We obtain a 3-D convex hull formed by at
most 17 points (at most 4 intersection points for each bounding plane, 
plus the farthest vertex to the origin of the prism). 
Using similar techniques to Theorems
\ref{thm:1}, \ref{thm:2} and \ref{thm:4}, new pruning 
rules for the 3-D BQS can then be derived based on the significant
points. 

\eat{We also note that the
two bounding planes may cut the same part of the prism, resulting
complex intersections at in the figure. In practice this effect is
ignored for computational efficiency. Only the intersections between the
edges of the prism and the four bounding planes are considered as
significant points, leaving at most 16 significant 
points to assess. }
It is worth noting that besides the extended version in the 3-D
space, the BQS algorithm can also be used with different distance
metrics, such as the point-to-line distance. 
In cases where poin-to-line distance is used, Theorems
\ref{thm:1} and \ref{thm:2} can be used with minor
modification by changing Equation \ref{eqn:ub} to:

{\small
\begin{equation}
d^{max}(p, \overline{s,e}) \leq d^{ub} = max\{  d^{intersection}, d^{corner-nf} \}
\end{equation}
}
while Theorem \ref{thm:4} still holds. The definition
of the ``in quadrant'' property is slightly changed accordingly.
}
\vspace{-0.3cm}
\section{Trajectory Uncertainty and Amnesic BQS}
\label{sec:abqs}
Next we propose an online framework that compresses the historical data in an
``amnesic'' way to maximize the utilization of storage capacity. This scenario
poses two challenges:

\begin{enumerate}
\item how to compress the already compressed trajectories by extending and adhering to an increased error tolerance?
\item how to design the ageing procedure so that trajectories compressed with different error tolerances could be updated efficiently?
\end{enumerate}

\begin{figure}[htp]
\centering
\vspace{-0.7cm}
\includegraphics[width=4cm]{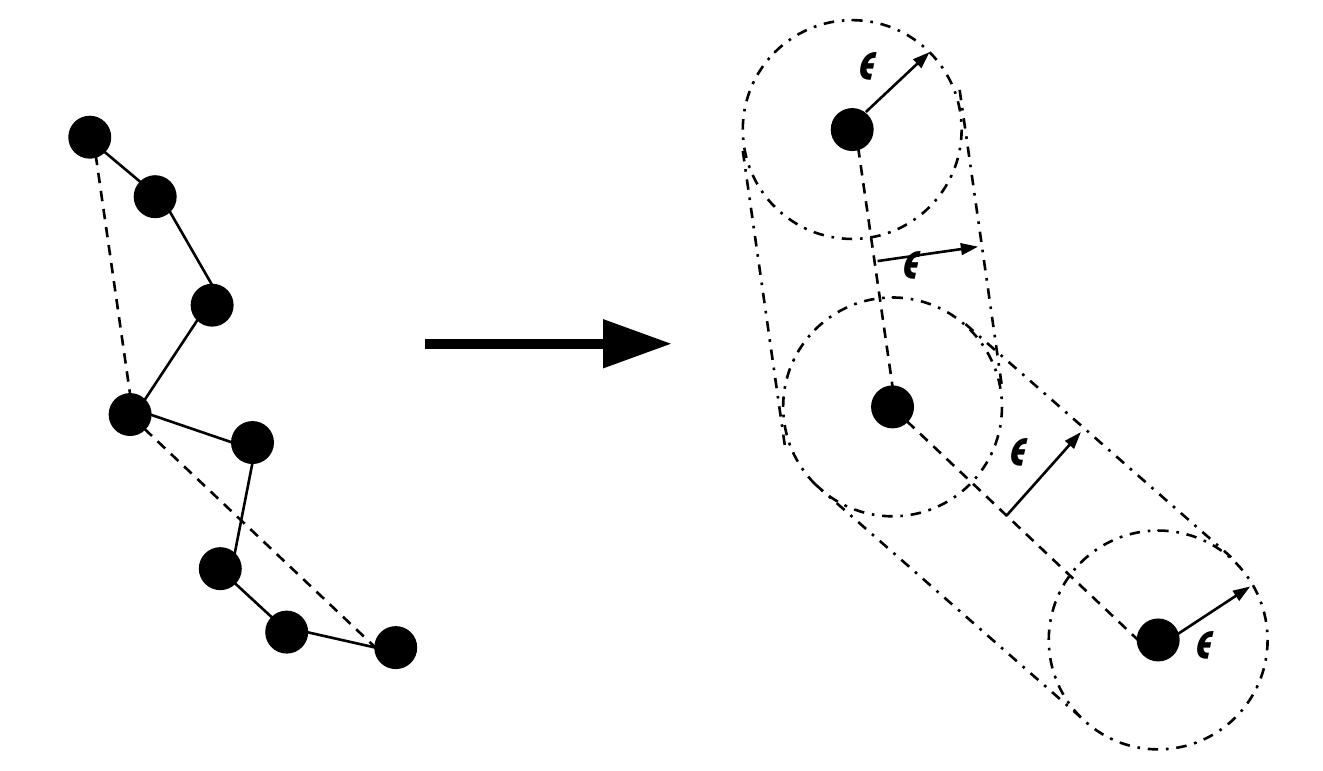}
\caption{A compressed trajectory has a bounded uncertainty (error tolerance $\epsilon$)}
\label{fig:unc}
\end{figure}

\begin{figure}[htp]
\centering
\vspace{-0.7cm}
\hspace{-0.5cm}
\includegraphics[width=8cm]{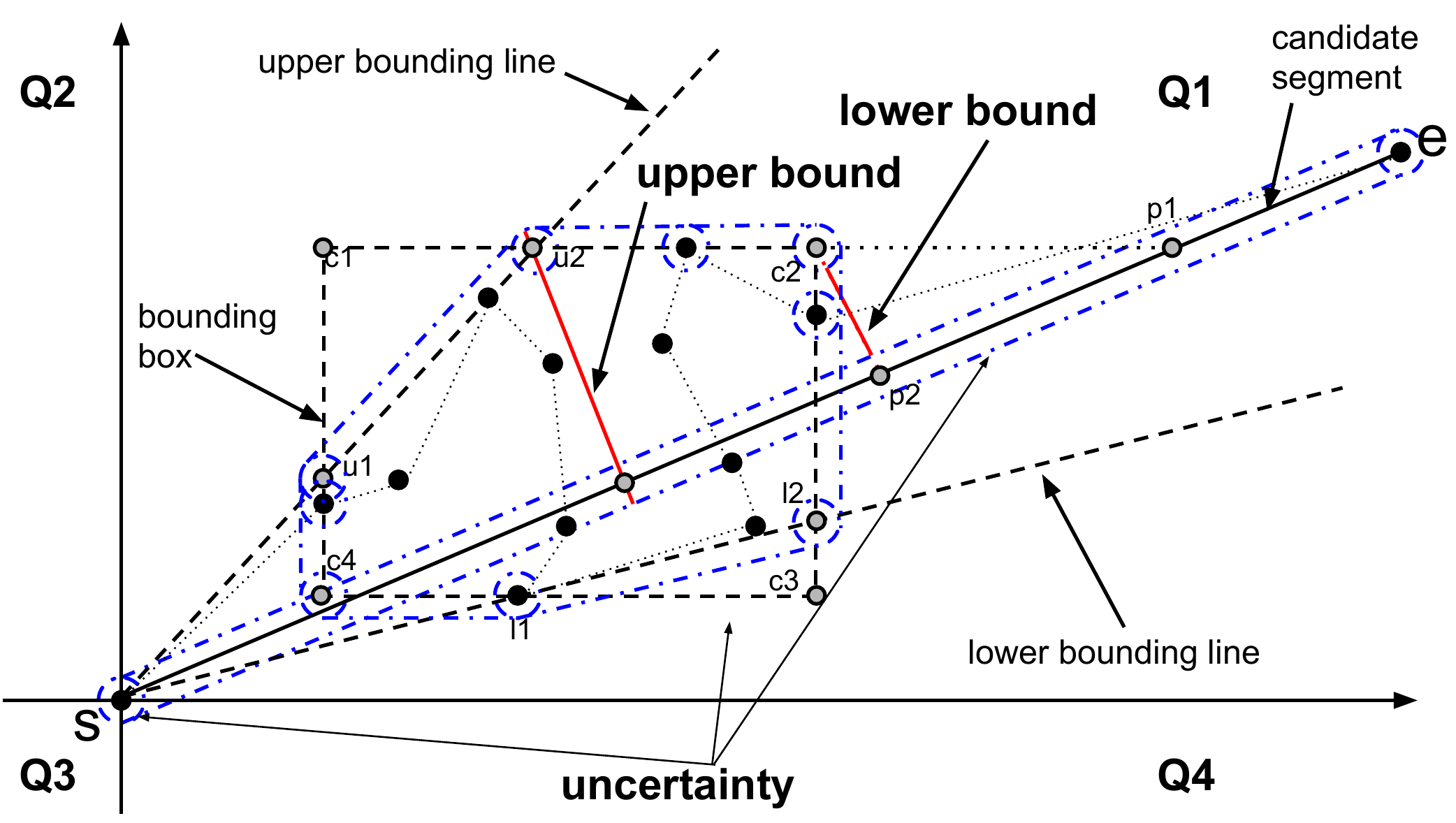}
\caption{BQS on a compressed trajectory that has a bounded uncertainty}
\label{fig:uc-bqs}
\end{figure}

To answer the first question, first we observe that after the first-pass compression with the error tolerance $\epsilon$, a trajectory is transformed into a simplified representation with a bounded uncertainty, as illustrated in Figure \ref{fig:unc}. Putting this representation into the BQS formulation (e.g. Figure \ref{fig:bqs}), we can derive an uncertain form of the BQS in Figure \ref{fig:uc-bqs}.

In this uncertain form, we use black dots to present trajectory points (key points with uncertainty from previously compressed trajectories in this case). The significant points on the bounding box or the bounding lines are still determined by the key points themselves, e.g. $l_1$ and $u_1$. However in addition to the key points, we also need to consider the uncertainty from the previous compression. The information for the discarded points from the original trajectory has been completely lost, and the previous error tolerance is the only information we could reconstruct such uncertainty. In this figure we use dash-dotted lines to illustrate such uncertainty. Note that the uncertainty here is twofold. Firstly, because the uncertainty around the start point $s$ and the end point $e$ (for points around both ends may have been discarded from the previous compression), the current candidate segment $\overline{se}$ is uncertain. Secondly, each compressed segment in $\overline{c_1c_2c_3c_4c_1}$ has its own bounded uncertainty, thus in the new form every significant point has an equivalent uncertainty and the bounding polygon $\overline{u_1u_2l_2l_1c_4u_1}$ is extended too and its corners are turned into round corners with radius equal to the previous tolerance.

\begin{algorithm}[thp]

\caption{The Amnesic BQS Procedure}
\label{alg:3}



  \textbf{Shared Variables:} $S, I, k, \epsilon,  m, N$ \\
\hfill
\hfill\\
\textbf{Procedure: $\mathbf{abqs}$()} \COMMENT{\textcolor{blue}{main entry}}\\
\textbf{Input:}  $p$
{\begin{algorithmic}[1]
\IF{$\|I\|>0$} \COMMENT{\textcolor{blue}{index not empty}}
	\STATE $ i \leftarrow I[0].e+1$ \COMMENT{\textcolor{blue}{find location for new point}}
		\IF{$I[0].a==0$} \COMMENT{\textcolor{blue}{youngest gen has age 0}}
			\STATE $\mathbf{update\_index(0,-1,i)}$ \COMMENT{\textcolor{blue}{update end index only}}
		\ELSE
			\STATE $\mathbf{update\_index(0,i,i)}$ \COMMENT{\textcolor{blue}{otherwise insert age 0 segment}}
		\ENDIF
\ELSE 
	\STATE $i\leftarrow 0$ \COMMENT{\textcolor{blue}{dealing with completely new storage}}
	\STATE $\mathbf{update\_index(0,i,i)}$
\ENDIF

\STATE $S[i] \leftarrow p$ \COMMENT{\textcolor{blue}{store new point}}
\STATE $\mathbf{amnesic\_sinking()}$ \COMMENT{\textcolor{blue}{check if ageing is needed}}

\end{algorithmic}
}
\textbf{Procedure: $\mathbf{amnesic\_sinking()}$} 
{\begin{algorithmic}[1]
\STATE $f \leftarrow \mathbf{trigger()} $ \COMMENT{\textcolor{blue}{check trigger}}
\WHILE{ f }  \COMMENT{\textcolor{blue}{if ageing is triggered}}
	\IF{$\|I\|==1$} \COMMENT{\textcolor{blue}{find dest loc for results from ageing}}
		\STATE $s^{buf} \leftarrow 0$   \COMMENT{\textcolor{blue}{dest  is 0 when buffer has age 0 only}}
	\ELSE		
		\STATE $s^{buf} \leftarrow I[1].e+1$ \COMMENT{\textcolor{blue}{otherwise append to next age}}
	\ENDIF

	\COMMENT{\textcolor{blue}{compress youngest generation}}
	\STATE $ \mathbf{compress}(I[0].s, I[0].e, I[0].a, s^{buf}) $ 
	\STATE $f \leftarrow \mathbf{trigger()} $	\COMMENT{\textcolor{blue}{check trigger again}}
\ENDWHILE
\algstore{testcont} 
\end{algorithmic}
}

\end{algorithm}


Recall that in the first-pass compression in Figure \ref{fig:bqs}, we use $d(u_2,\overline{se})$ as the upper bound and $d(c_2,\overline{se})$ as the lower bound for the maximum deviation. As we examine the uncertain form, we find that the lower bound and upper bound can be re-used, with simple modifications. In Figure \ref{fig:uc-bqs}, where $d(u_2,\overline{se})$ represents the upper bound without considering uncertainty, it is easy to prove that $d^{max}(p,\overline{se})\leq d(u_2,\overline{se})+2\epsilon$, and $d^{max}(p,\overline{se})\geq d(u_2,\overline{se})-2\epsilon$. The uncertainty of $\overline{se}$ and the bounding box each introduces uncertainty of $\epsilon$ to both bounds.

As a generalization of the specific case in Figure \ref{fig:uc-bqs}, we derive Theorem \ref{thm:ucbqs} to support the application of BQS on compressed trajectories with \textit{any} existing error tolerance (uncertainty).

{\small 
\begin{theorem}
\label{thm:ucbqs}
Given a set of compressed trajectories with a previous error tolerance $\epsilon$, if we obtain the lower and upper bounds $\widetilde{d^{lb}},\widetilde{d^{ub}}$ for the maximum deviation from the BQS constructed from the key points only and without considering the uncertainty, we then have the lower and upper bounds on the new maximum error $d^{lb},d^{ub}$ as
\begin{eqnarray}
d^{lb} = \widetilde{d^{lb}} - 2\epsilon, d^{ub} = \widetilde{d^{ub}} + 2\epsilon
\end{eqnarray}
\end{theorem}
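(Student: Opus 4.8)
The plan is to reduce the whole statement to two elementary \emph{perturbation (Lipschitz) facts} about the point-to-line-segment distance, and then to propagate the previous tolerance $\epsilon$ through the BQS bounds that Theorems~\ref{thm:1}, \ref{thm:2} and~\ref{thm:4} already supply for the nominal (key-point) configuration. I model the uncertain form as follows: every point fed to the second-pass BQS --- the two segment endpoints $s,e$ as well as every buffered point $p$ --- is a \emph{nominal} position $\widetilde{s},\widetilde{e},\{\widetilde{p}\}$ carrying a ball of radius $\epsilon$ around its unknown \emph{true} position, so that $\|s-\widetilde{s}\|\le\epsilon$, $\|e-\widetilde{e}\|\le\epsilon$, and $\|p-\widetilde{p}\|\le\epsilon$. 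By hypothesis $\widetilde{d^{lb}},\widetilde{d^{ub}}$ are valid lower/upper bounds on $\max_{\widetilde{p}} d(\widetilde{p},\overline{\widetilde{s}\widetilde{e}})$, the maximum deviation computed from the nominal data alone.

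First I would establish the two facts. (i) For a fixed segment $\ell$, the map $x\mapsto d(x,\ell)$ is $1$-Lipschitz, i.e. $|d(x,\ell)-d(y,\ell)|\le\|x-y\|$; this is immediate from $d(x,\ell)=\min_{z\in\ell}\|x-z\|$ and the triangle inequality. (ii) If the endpoints of a segment are each moved by at most $\epsilon$, then the point-to-segment distance from any fixed $x$ changes by at most $\epsilon$. The key observation for (ii) is that a point $(1-t)\widetilde{s}+t\widetilde{e}$ on the nominal segment and the matching point $(1-t)s+te$ on the true segment differ by $\|(1-t)(s-\widetilde{s})+t(e-\widetilde{e})\|\le(1-t)\epsilon+t\epsilon=\epsilon$, so the two segments lie within Hausdorff distance $\epsilon$, whence $|d(x,\overline{se})-d(x,\overline{\widetilde{s}\widetilde{e}})|\le\epsilon$. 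I expect fact (ii) to be the one place needing care, since it is where the \emph{line} uncertainty (as opposed to the \emph{point} uncertainty) is converted into an additive $\epsilon$.

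Chaining the two facts through the intermediate quantity $d(\widetilde{p},\overline{se})$ gives, for every buffered point,
\begin{equation}
\bigl| d(p,\overline{se}) - d(\widetilde{p},\overline{\widetilde{s}\widetilde{e}}) \bigr| \;\le\; \underbrace{\|p-\widetilde{p}\|}_{\le\,\epsilon \text{ by (i)}} \;+\; \underbrace{\epsilon}_{\text{by (ii)}} \;\le\; 2\epsilon ,
\end{equation}
which is exactly the additive decomposition the text announces: one $\epsilon$ from the uncertainty of the buffered point and one $\epsilon$ from the uncertainty of $\overline{se}$. For the upper bound I then take the maximum over all buffered points: each true deviation satisfies $d(p,\overline{se})\le d(\widetilde{p},\overline{\widetilde{s}\widetilde{e}})+2\epsilon\le\widetilde{d^{ub}}+2\epsilon$, so $d^{ub}=\widetilde{d^{ub}}+2\epsilon$ is valid for \emph{every} realization of the true positions.

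For the lower bound I would argue dually. Let $\widetilde{p}^{\,*}$ be the nominal point attaining $\max_{\widetilde{p}}d(\widetilde{p},\overline{\widetilde{s}\widetilde{e}})\ge\widetilde{d^{lb}}$, and let $p^{*}$ be its (unknown) true counterpart. For any realization of the uncertain positions the same chained inequality gives $d(p^{*},\overline{se})\ge d(\widetilde{p}^{\,*},\overline{\widetilde{s}\widetilde{e}})-2\epsilon\ge\widetilde{d^{lb}}-2\epsilon$, and since the true maximum deviation dominates the deviation of this single point, $d^{max}(p,\overline{se})\ge\widetilde{d^{lb}}-2\epsilon$. As this holds regardless of where the true points actually lie, $d^{lb}=\widetilde{d^{lb}}-2\epsilon$ is a guaranteed lower bound, which completes the argument. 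The one subtlety to treat carefully here is the order of quantifiers --- the bound must hold simultaneously for all admissible realizations --- which is why I anchor the lower bound on the \emph{fixed} extremal nominal point $\widetilde{p}^{\,*}$ rather than on whichever point happens to be extremal after perturbation.
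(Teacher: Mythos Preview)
Your argument is correct and follows the same decomposition the paper uses: one $\epsilon$ of slack from the buffered-point uncertainty and one $\epsilon$ from the uncertainty of $\overline{se}$, added to the nominal BQS bounds. The paper's own proof is a two-sentence geometric sketch (the round corner at $u_2$ contributes at most $\epsilon$, and the uncertain segment $\overline{se}$ contributes another $\epsilon$), whereas you make the same idea precise via the $1$-Lipschitz property of $x\mapsto d(x,\ell)$ and the Hausdorff-distance bound between $\overline{se}$ and $\overline{\widetilde{s}\widetilde{e}}$, and you also spell out the lower-bound direction explicitly; but the route is essentially identical.
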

\begin{proof}
The furthest distance possible from the lines $\overline{u_1u_2}$ and $\overline{c_2u_2}$ is at the round corner at $u_2$ which will always satisfy $d(p,\overline{se}) \leq d(u_2,\overline{se}) + \epsilon$. Then since $\overline{se}$ has an uncertainty of $\epsilon$ around itself, we have Theorem \ref{thm:ucbqs}.
\end{proof}
}

\begin{algorithm}[thp]
\ContinuedFloat
\caption{The Amnesic sinking Algorithm (Continued)}

\textbf{Procedure: $\mathbf{compress}$()}\\
\textbf{Input:}  $s^{src}, e^{src}, a, s^{dest}$ \COMMENT{\textcolor{blue}{src start\&end, age, dest start}}
{\begin{algorithmic}[1]
\algrestore{testcont} 
\makeatletter
\setcounter{ALG@line}{0}
\makeatother
\STATE $X \leftarrow S[s^{src}:e^{src}]$
\STATE $\epsilon^{prev} \leftarrow m^{a+1}\epsilon$ \COMMENT{\textcolor{blue}{compute new tolerance}}
\STATE $X' \leftarrow \mathbf{pbqs}(X, \epsilon^{prev}, \epsilon)$  \COMMENT{\textcolor{blue}{perform compression}}

\COMMENT{\textcolor{blue}{store results to dest location}}
\STATE $ S[s^{dest}:s^{dest}+\|X'\|-1] \leftarrow X'$
\STATE $\mathbf{update\_index}(a,-1,-1)$	\COMMENT{\textcolor{blue}{remove youngest gen from index}}

\COMMENT{\textcolor{blue}{update index for ageing results }}
\STATE $\mathbf{update\_index}(a+1,s^{dest},s^{dest}+\|X'\|-1)$
\end{algorithmic}
}

\textbf{Procedure: $\mathbf{trigger}$()} \COMMENT{\textcolor{blue}{ageing triggering}}
{\begin{algorithmic}[1]
\IF{$\|I\|>0$}

	\COMMENT{\textcolor{blue}{full with  age 0 points}}
	\IF{$I[0].a == 0 ~\&\&~ I[0].e==N-1$} 
			\STATE \textbf{return true}

	\COMMENT{\textcolor{blue}{trigger threshold reached}}			
	\ELSIF{ $I[0].a > 0 ~\&\&~ I[0].e>N-k-I[0].a$ }
		\STATE \textbf{return true}
	\ENDIF
\ENDIF
\STATE  $\textbf{return false}$
\end{algorithmic}
}



\textbf{Procedure: $\mathbf{update\_index}$()}\\
\textbf{Input:}  $a, s, e$ \COMMENT{\textcolor{blue}{age, start\&end locations}}
{\begin{algorithmic}[1]
\makeatletter
\setcounter{ALG@line}{0}
\makeatother

\IF{index for age $a$ exists} 
		\STATE update index of age $a$
	\ELSE  
		\STATE insert index for age $a$
	\ENDIF
\eat{
\IF{$\|I\|>0$}
	\STATE $ i \leftarrow \underset{i}{\operatorname{argmin}} { I[i].a == a}$ \COMMENT{\textcolor{blue}{find index for age $a$}}
	\IF{$i\neq -1$} \COMMENT{\textcolor{blue}{age $a$ index exists}}
		\STATE update index of age $a$
	\ELSE  \COMMENT{\textcolor{blue}{age $a$ index does not exists}}
		\STATE $i \leftarrow \underset{i}{\operatorname{argmin}}{ I[i].a < a}$    \COMMENT{\textcolor{blue}{find insert position for age $a$}}
		\STATE insert index for age $a$
	\ENDIF
\ELSE
		\STATE insert index for age $a$
\ENDIF	}
\end{algorithmic}
}
\end{algorithm}

We devise a new \emph{Progressive BQS} (PBQS) algorithm that supports an existing tolerance, with the main contents in Algorithm \ref{alg:1}. With Theorem \ref{thm:ucbqs}, the bounds from Line 5 is now updated by $d^{lb} \leftarrow d^{lb}-2\epsilon^p, d^{ub}\leftarrow d^{ub}+2\epsilon^p$, where $\epsilon^p$ is the previous error tolerance on the compressed trajectories, passed as an additional argument for the PBQS algorithm besides the new tolerance $\epsilon^d$ ($\epsilon^d>\epsilon^p$). Interestingly, the algorithm now obtains a progressive property. That is, once the existing error tolerance is known, the next compression is independent of any previously applied compression on the trajectory data. The compressed segments resulted from PBQS can again be the input of the next compression pass that has a greater error tolerance than the current. The implication here is that theoretically the tracking node's storage space will support operation of an indefinite period without hard data loss (data overwritten due to space limit). Instead, it will introduce graceful degradation to the aged data over time. 
\setlength{\textfloatsep}{1pt}

\begin{figure*}[htp]
\centering
\vspace{-0.8cm}
\includegraphics[width=14cm]{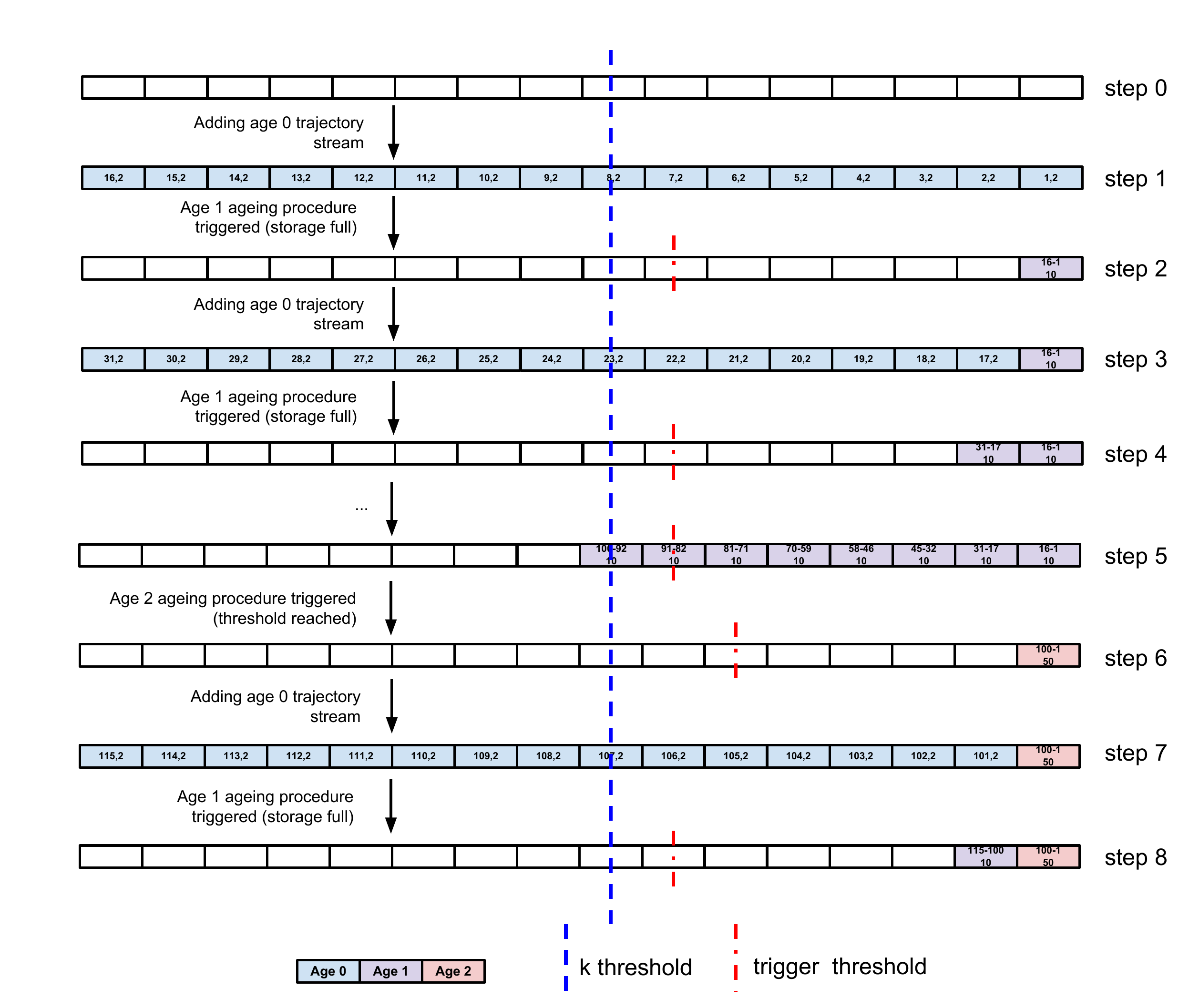}
\caption{Amnesic sinking: The Inline Ageing Procedure}
\label{fig:ab}
\vspace{-0.4cm}
\end{figure*}

Now to address the second challenge, the PBQS, which aims at making compression decisions for individual segments, evolves to a sophisticated framework called Amnesic BQS (ABQS) that manages a certain amount of storage space and seeks to maximize the trajectory information when the number of points far exceeds the space limit and the exact number of points is unknown at the begining. The framework is designed to determine when and how the Progressive BQS is applied on the aged data, and is designed to optimize the storage arrangements for the co-existence and handling of trajectories with difference ages and error tolerances.

The framework utilizes a few ideas:

\begin{enumerate}
\vspace{-0.2cm}
\item we should store as many ``younger generations'' as possible to improve the precision of more recent data.
\item as the storage space runs out, we always try to compress the youngest generation as that results in the least precision loss and storage access.
\item mechanisms are needed so that a ``younger generation'' can always be turned to an ``older generation'' with meaningful compressions (to a reduced number of points).
\item in the process of adding new points and making further compressions,  access to the storage should be minimized when possible for energy saving purposes \cite{DBLP:conf/ipsn/Nath09}.
\vspace{-0.2cm}
\end{enumerate}
In addition, for this framework we still face the limits in memory and computation power of the platform, so the framework still needs to be computationally efficient. Hence we propose ABQS in Algorithm \ref{alg:3}.

There are a few shared variables: $S$ is the entire storage space under ABQS' management, $S[i]$ means the $i^{th}$ slot of the storage. $I$ is a list that keeps track of storage segmentations for different ages, with the entries in ascending order by age (ages without any point of that age will not appear in the index). Each entry in I has three properties, start location $s$, end location $e$ and age $a$. For example, $I[0] = <0,12, 1>$ means currently the youngest generation in the storage is of age 1 (this is the case when all age 0 points in the storage are turned to age 1), and is occupying the $0^{th}$ to $12^{th}$ (inclusive) slots in the storage. $k$ is a fixed number to set the minimum buffer size for the age 0 trajectory points, which also is used to determine the triggering decision for the ageing procedure. $\epsilon$ is the starting error, $m$ is the error multiplier when data age increases, and $N$ is the storage space limit. The whole algorithm is described by five procedures in Algorithm \ref{alg:3}. Note that in Algorithm \ref{alg:3} the indices start from 0 and are inclusive.

\textbf{abqs} is the entry point of the procedure, i.e. whenever an initial data point with the lowest error tolerance (could be 0) is passed through to the framework, \textbf{abqs} will process it and at the same time maintain the structure of the entire storage. Thus \textbf{abqs}' main responsibilities are to find a proper storage space for the new point, to update the index structure, and to invoke the amnesic sinking procedure for further compression when necessary. 

\textbf{amnesic\_sinking} is the main ageing procedure for further compression and freeing up storage spaces. We call it ``amnesic sinking'' because it guarantees that older data that has greater tolerances will sink to the ``bottom'' of the storage, while younger data will remain closer to the top. From bottom to top, the tolerances decrease monotonically. The procedure iteratively checks a trigger flag $f$ which indicates whether it is necessary to perform a further compression on aged data, and then invokes \textbf{compress} when needed.

\textbf{compress} performs progressive compression with an extended error tolerance from an existing one. The inputs $s^{src}, e^{src}$ specify the indices of the data segment to be compressed in the storage, $a$ specifies the current age of the data to be further compressed, and $s^{dest}$ specifies the location in the storage where the further compressed data will be stored. Line 2 shows how the tolerance is updated for each generation. Lines 3 and 4 compress and age the existing trajectories to an older generation. Lines 5 and 6 remove the index for the generation being processed, and add the index for the resulted generation after the processing. Note that here Lines 1-4 are illustrated as a batch/block update but in practice it is implemented in an online/inline style so that the space complexity is still constant. That means the procedure will process a point at a time, similar to the way standalone BQS operates, and store the results in the space originally taken by the trajectory before the compression to avoid unnecessary copy and move in the storage.

\textbf{trigger} checks whether \textbf{compress} is needed for the youngest generation in the storage. The decision is made to fulfil the requirements on the space arrangement that the data points of the youngest generation with age $a$ cannot go beyond the limit of $I[0].e>N-k-I[0].a$ (Line 4) to maintain meaningful compression results for each generation. $k$ is used to reserve enough space for incoming age 0 points. For example, if we assume that the points passed to the ABQS framework have an initial error of 2m for the age 0 points, we can set this $k$ to be the average number of points included in an age 1 segment with 5m tolerance from empirical results. The effect of this parameter is to reduce occurrences in which the points are kept in the aged results not because the tolerance is reached but because the end of the youngest generation's segment is reached. 

$I[0].e>N-k-I[0].a$ follows the same rationale. The reserved space for younger generations is increased when the age of the currently youngest generation increases, so that there will be at least three points for any younger generation when they trigger an ageing procedure and receive a further compression (performing compression on two or less points will not result in any compression). If the youngest generation is of age $a$, then ABQS needs to reserve one slot for the immediately younger generation $a-1$, because for age $a-1$ there will be at least 2 points from $a-2$, so this single served slot guarantees the meaningful compression for this generation. Similarly, we need 1 slot reserved for $a-2$. Recursively, we can get the total number of reserved slots for age $a$ to be $a-1$ and consequently the maximum index for age $a$ as  $N-k-1-(a-1)=N-k-a$, because the $k$ slots are also reserved for the incoming age 0 points.

\textbf{update\_index} maintains the index structure which records the segmentation of storage for data points with different ages. The procedure does a few operations: in Lines 2-4 it attempts to find an existing index entry for the specified age $a$, and update its start and end locations to $s$ and $e$ on success. When a new entry is needed for an unrecorded age, Lines 6,7, and 10 try to insert the entry and maintain the ascending order by age of the index entries.

There are several advantages for the ABQS framework. First there is no theoretical limit on the maximum operational time (without hard data loss) of the tracking device, as the ABQS adaptively manages the storage space and trades precision off storage space. Second the entire procedure only uses little memory space (memory for a standard BQS plus three integers for each index entry). Third the compression error for each generation is still known and guaranteed. 

We illustrate this algorithm in Figure \ref{fig:ab}. Here the initial error tolerance $\epsilon$ is set to $2$ and the multiplier $m$ to $5$. The empty cells represent available storage spaces, and the cells with color and numbers mean occupied storage space. The storage is used from right (bottom) to left (top). Each occupied cell is labeled with two numbers $<data~point~id~range, error~tolerance>$ (ranges start with 1 and are inclusive), and the background colors of the cell indicate the age of the data. For example, in the third step, the right most cell has $<16-1,10>$ in it as well as a light blue background, indicating that the storage block now stores age $1$ compressed segments covering the original points $1-16$, with an error tolerance of $10$. The blue and red dashed lines are the $k$ threshold and the  trigger threshold defined by $I[0].e>N-k-I[0].a$ (Line 4 in \textbf{trigger}). Step 1 means when the space is filled with age 0 data then an ageing process is invoked which compresses data points $1-16$ into a segment with error tolerance $10$ and stores the results to the bottom of the storage. In step 3 the storage is full again and age 0 data points are compressed again to the adjacent storage to $<16-1,10>$. Step 5 illustrates that when the age 1 compressed segments take over a great part of the storage and leave insufficient space for younger data, they are further compressed. In step 6 we see that because the youngest generation becomes age 2, the trigger threshold moves rightwards, reserving more spaces for age 1 and age 0, whereas in step 8 the trigger threshold moves back to the $k$ threshold because age 1 is now the youngest generation.

\eat{
\subsection{Data Mining from Trajectories of Mixed Ages: A Case Study}

\subsection{Maintenance Procedures}
In addition to the compression itself, the BQS framework employs two
techniques to further reduce the storage space required for the
historical data, namely error-bounded merging and error-bounded
ageing. Due to space limitations, we only give a brief description of the
techniques. More details will be incorporated in an extended version
of this work. 

Merging is a procedure in which the newly compressed
segment is used as a query to search
similar historical segments in the trajectory database. If any
existing compressed segment could represent the same path with a minor error,
the new segment is considered duplicate information and is merged
into the existing one. 

Ageing is based on the intuition that newer and 
older trajectories should not
bear the same significance in the historical trajectory database.
More recent trips represent the moving object's recent travel
patterns better and should be regarded of greater interest.
Hence the ageing procedure re-runs the compression algorithm on the existing
trajectories that are already compressed, but with a greater error
tolerance, so that the compression rate will be further improved.

The procedures are aligned with our goal, namely
capturing the spatio-temporal characteristics such as the time windows,
destinations and routes for the major movements of the moving
object. By making a more aggressive trade-off between accuracy and
coverage, it further extends the frameworks capability to capture
mobility patterns over long periods.
}



\section{Experiments}
\label{sec:exp}
In this section we evaluate the performance of the
proposed BQS family and ABQS framework.

\vspace{-0.3cm}
\subsection{Dataset}
We use three types of data, namely the flying fox (bat) dataset, the vehicle dataset, and
the synthetic dataset. The two real-life datasets comprise of $138,798$ GPS samples, collected by 6 Camazotz
nodes (five on bats and one on a vehicle). The total travel distances for
the bat dataset and vehicle dataset are $7,206~km$ and $1,187~km$
respectively. The tracking periods spanned six months and two weeks for the bat and vehicle datasets respectively.

\eat{
\begin{figure}[htp]
\hspace{-0.6cm}
\begin{tabular}{c}
 \subfigure[Bat Tracking Data]{
\includegraphics[height=2cm]{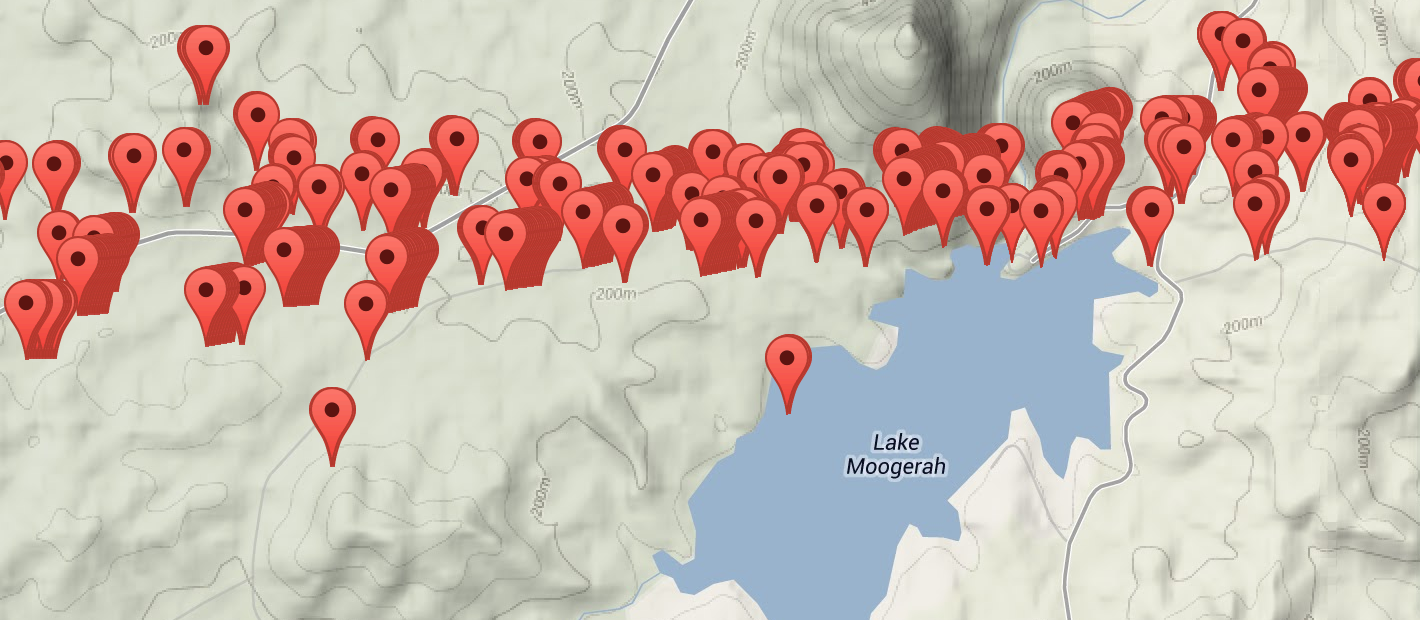}
}
\subfigure[Vehicle Tracking Data]{
\includegraphics[height=2cm]{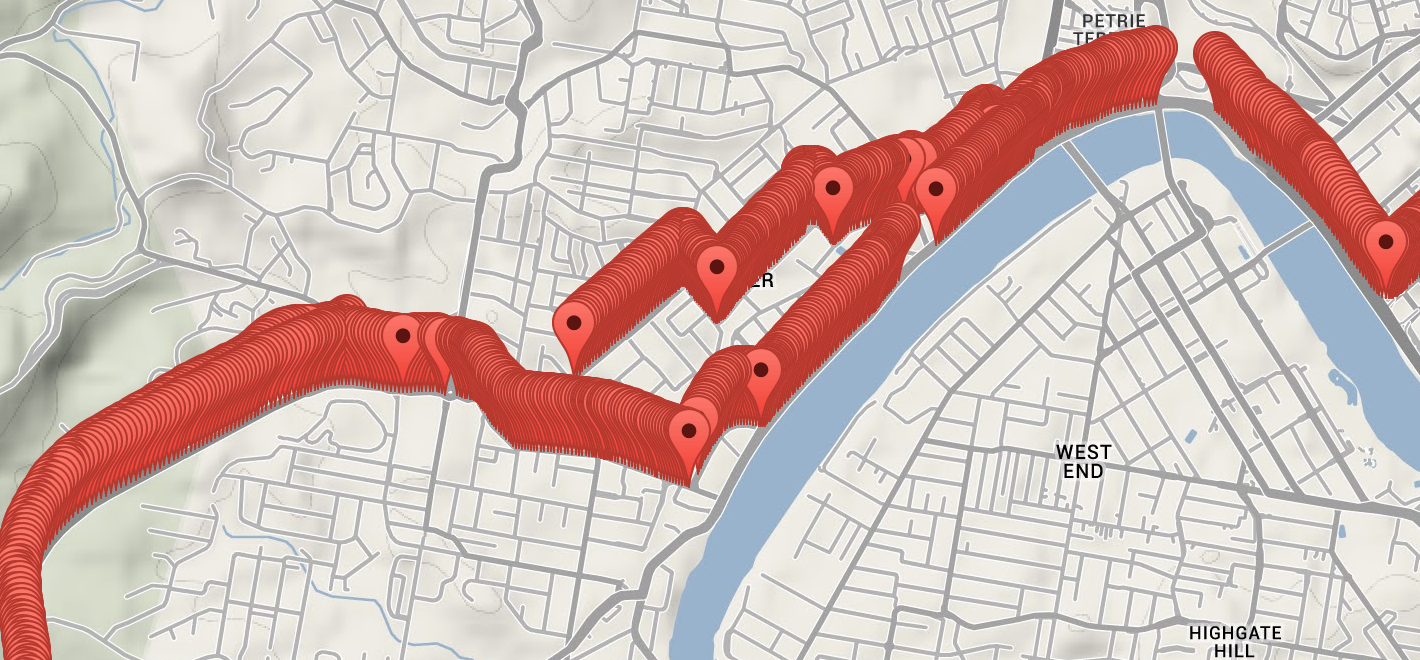}
}
\end{tabular}
\caption{Sample Traces}
\label{fig:sp}
\end{figure}
}

\begin{figure}[htp]
\hspace{-0.8cm}
\begin{tabular}{c}
\subfigure[Bat Tracking Data]{ \label{fig:cr1}
\includegraphics[height=3.5cm]{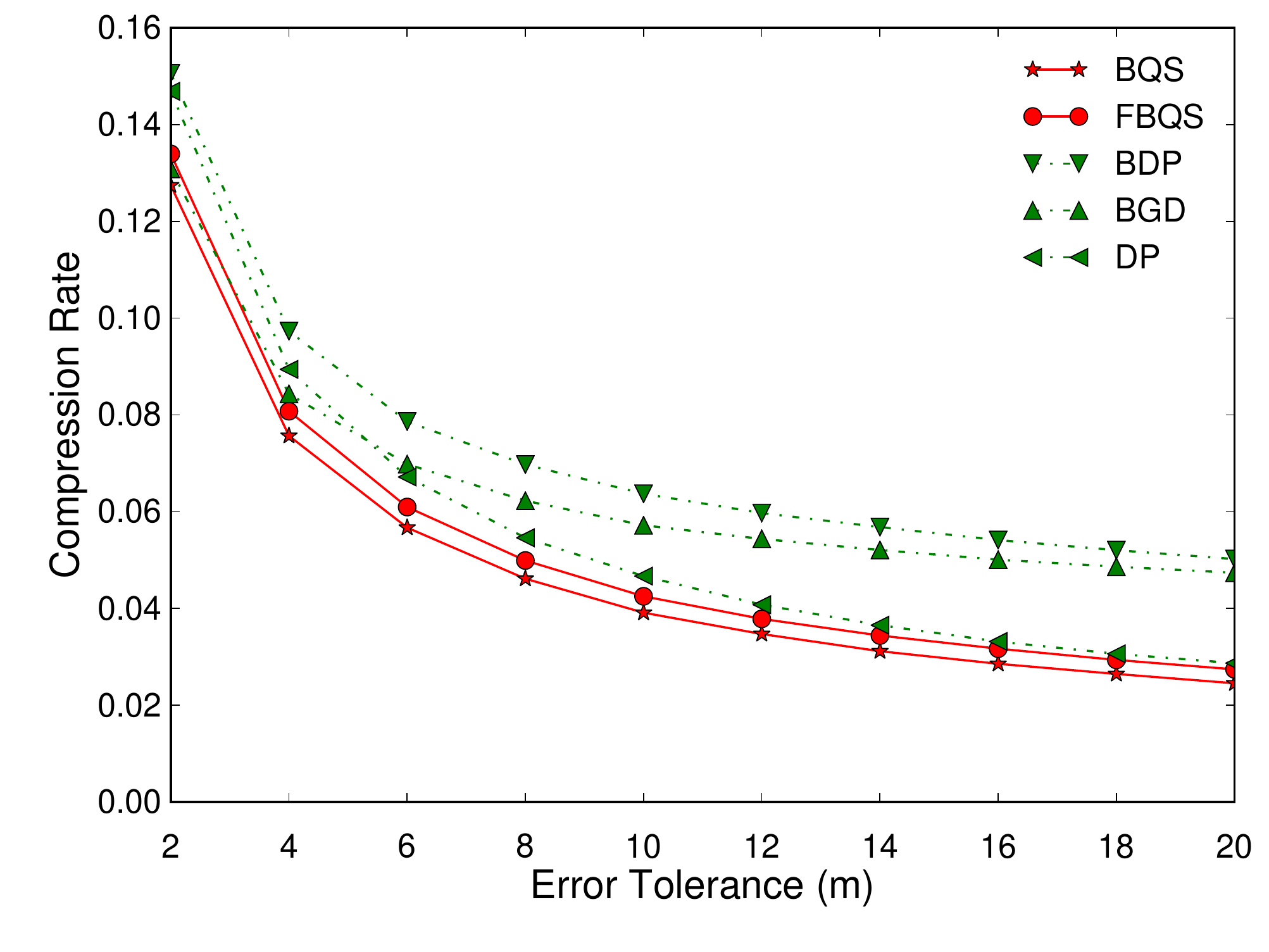}
}
\subfigure[Vehicle Tracking Data]{ \label{fig:cr2}
\includegraphics[height=3.5cm]{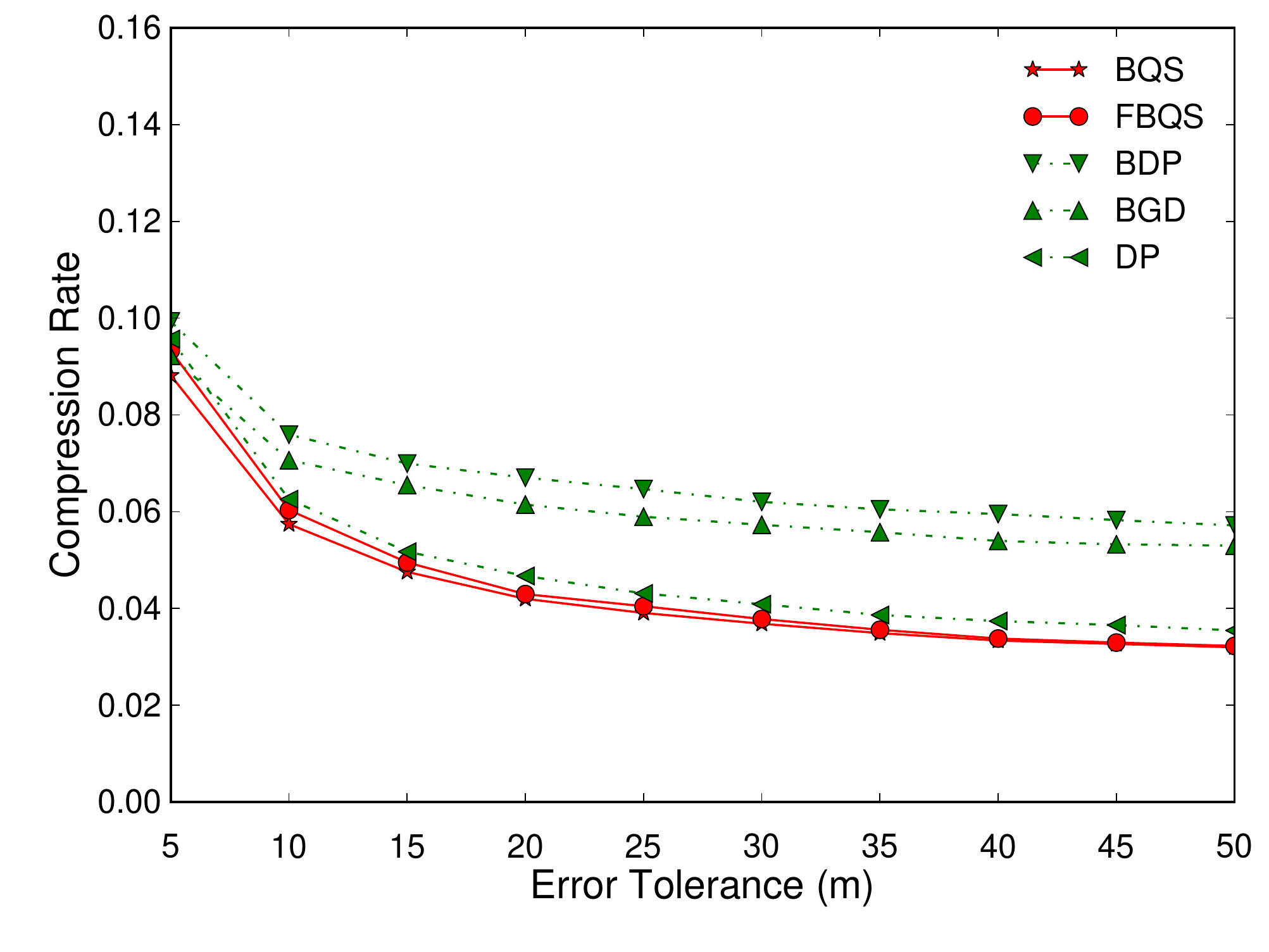}
}
\end{tabular}
\caption{Comparison of Compression Rate on Real-life Datasets (The lower the better)}
\end{figure}

Note that there are couple of differences between the two
datasets. The vehicle dataset shows larger scales in terms of travel distances well as moving speed. For instance, the length of a car
trip varies from a few kilometers to $1,000~km$ while a
trip for flying-foxes are usually around $10~km$. The car can travel
constantly at $100~km/h$ on a highway or $60~km/h$ on common roads,
while the common and maximum continuous flying speeds
for a flying-fox are approximately $35~km/h$ and $50~km/h$. In regard
to these differences, the two datasets are evaluated with different
ranges of error tolerance. With a much greater spatial scale of the
movements, the error tolerance used for the vehicle dataset is generally greater.

The vehicle dataset also shows more consistency in the heading
angles due to the
physical constraints of the road networks. On the contrary,
the bats's movements are unconstrained in the 3-D space, so their
turns tend to be more arbitrary. \eat{Figure \ref{fig:sp} shows 
segments of the actual trajectories from both datasets.}
We argue that by performing extensive experiments on both
datasets, the robustness of BQS and ABQS is demonstrated.

The synthetic dataset consists of data generated by a statistical model that anchors
patterns from real-life data, special trajectories that follow certain shapes.
The comparison between \emph{Dead Reckoning}~\cite{Trajcevski06on-linedata} and
\emph{Fast BQS} (FBQS) is conducted on this dataset. This is because continuous
high-frequency samples with speed readings are required to implement
DR in an error-bounded setting, while such data is lacking in the
real-life datasets. The model uses an event-based correlated 
random walk model to simulate the movement of the object. In the
simulation, waiting events and moving events are executed
alternately. The object stays at its previous location during a
waiting event, and it moves in a randomly selected speed and turning
angle for a randomly selected time. Note that the speed
follows the empirical distribution of speed, the turning angle 
is drawn from the von Mise distribution \cite{Risken}, while the move time is 
exponentially distributed, corresponding to the Poisson process. The
trajectories are bounded by a rectangular area of $10~km\times 10~km$,
and the speed and turning angle follow approximately the distributions
of the bat data. A total of $30,000$ points are generated by the
model. The synthetic dataset also contains trajectories of special shapes to
explicitly test the robustness of the proposed methods.

\subsection{Experimental Settings}
The evaluation is conducted on a desktop computer, however the extremely 
low space and time complexity of FBQS makes it 
plausible to implement the algorithms on the platform aforementioned
 in Section \ref{sec:mot} (32 KBytes ROM, 4 KBytes RAM). In particular, if we
 look into the FBQS algorithm, we only need tiny memory space to store
 at most 32 points besides the program image itself (4 corner points 
and 4 intersection points for each quadrant).

Two main performance indicators,
namely compression rate and pruning power are tested on the real-life datasets to evaluate the BQS family.
We define compression rate as $\frac{N^{compressed}}{N^{original}}$ where $N^{compressed}$ is
the number of points after compression, and $N^{original}$ is the number of
points in the original trajectory. Pruning power is defined as $1
-\frac{N^{computed}}{N^{total}}$, where $N^{computed}$ and $N^{total}$
are the number of full deviation calculations and the number of total
points respectively. 

For compression rate, we perform comprehensive comparative study to show
BQS's superiority over the other three methods, namely
\emph{Buffered-DouglasPeucker} (BDP), \emph{Buffered-Greedy} (BGD) and \emph{Douglas-Peucker} (DP). DR is compared against FBQS on the synthetic
dataset. For buffer-dependent algorithms, we set the buffer size to be
32 data points, the same as the memory space needed by the FBQS
algorithm to hold the significant points. 

To intuitively demonstrate the advantage of FBQS in compression rate, 
we provide comparison showing the number
of points taken by the FBQS algorithm and the DR algorithm 
on the synthetic dataset. We also show the estimated
operational time of tracking devices based on such compression rate. Finally,
we study comparatively the actual run time efficiency of FBQS.

The ABQS framework is evaluated with extensive experiments, for which the details are given in Section \ref{sssec:expabqs}.

\eat{
For all datasets, we combine all the data points into a single data
stream and use it to feed the algorithms. Then we calculate the
pruning power, compression rate and number of points used.
}
\begin{figure}[htp]
\centering
\begin{tabular}{c}
 \subfigure[Bat Tracking Data]{ \label{fig:pp1}
\includegraphics[height=3cm]{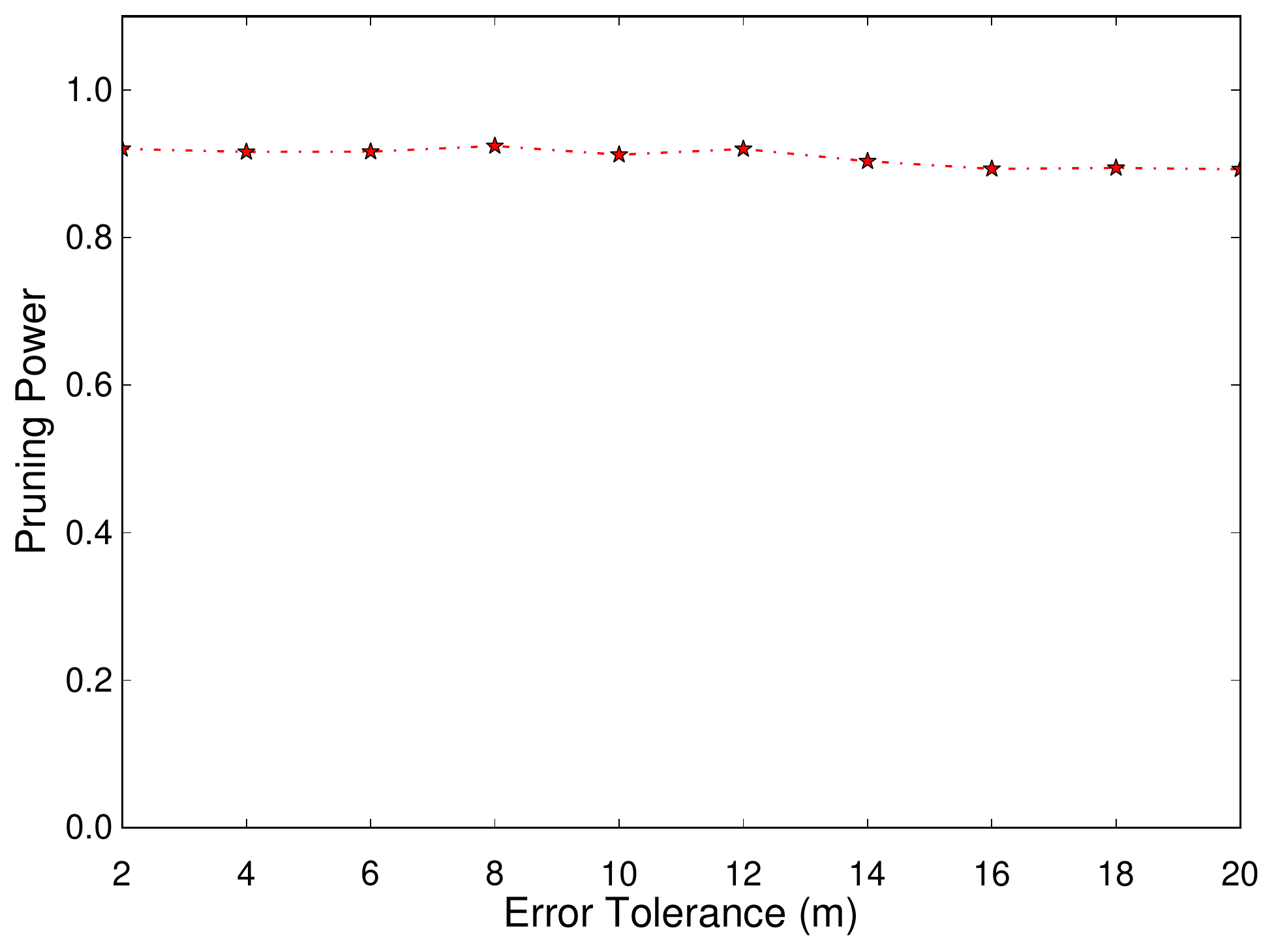}
}
\subfigure[Vehicle Tracking Data]{ \label{fig:pp2}
\includegraphics[height=3cm]{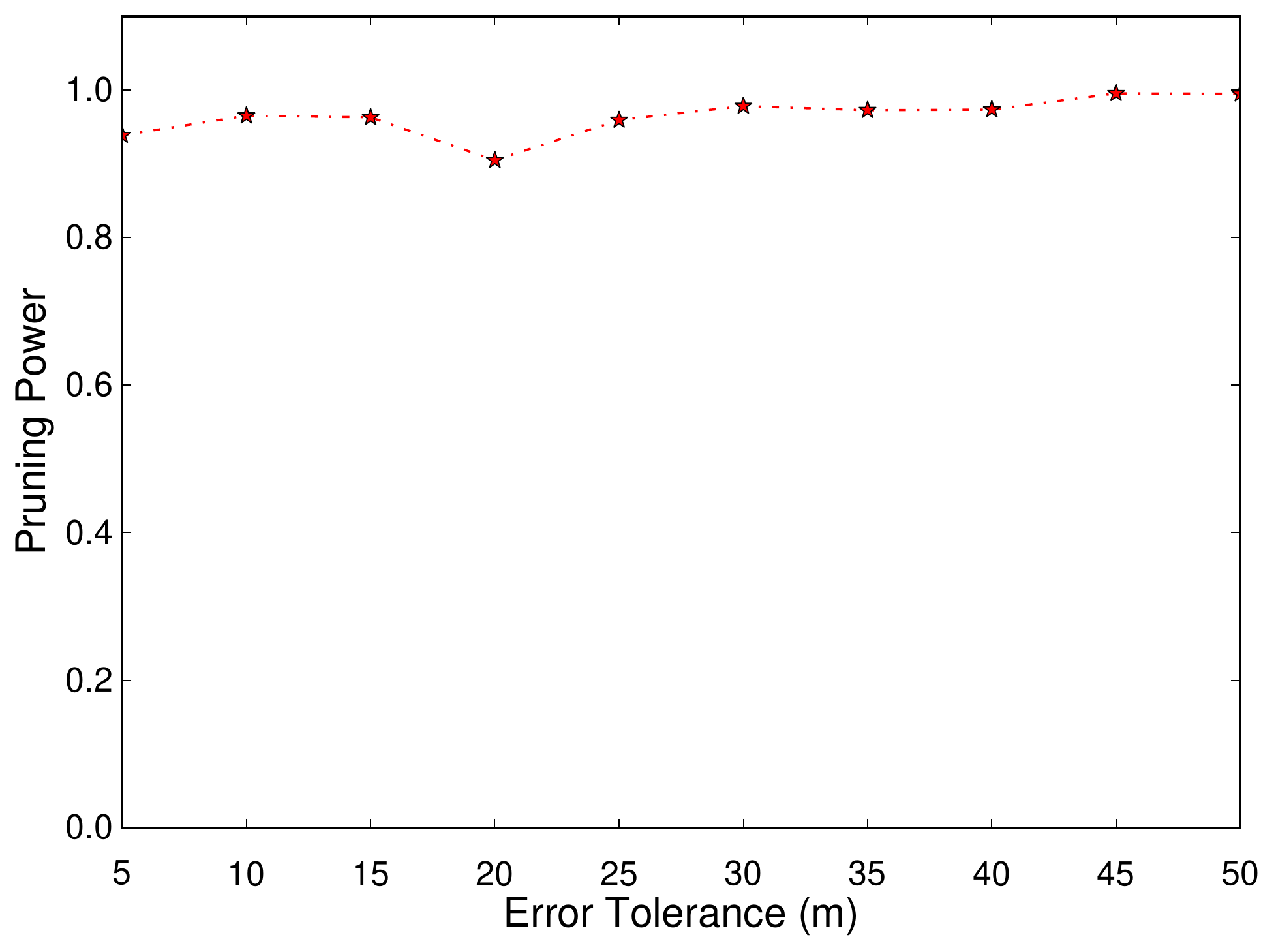}
}
\end{tabular}
\caption{Pruning Power of the BQS Algorithm (The higher the better)}
\end{figure}

\subsection{Compression Algorithm}

\subsubsection{Compression Rate on Real-life Data}

Compression rate is a key performance indicator for trajectory compression
algorithms. Here we conduct tests on the two real-life datasets. We
compare the performance of five algorithms, namely BQS, FBQS,
BDP, BGD and DP. All of the algorithms give error-bounded results.
The former four are online algorithms, and the last one is offline. 
The compression rates are illustrated in Figures \ref{fig:cr1} and \ref{fig:cr2}.

Evidently, BQS achieves the highest compression rate among the five
algorithms, while BDP and BGD constantly use approximately $30\%$ to $50\%$ more
points than BQS does. FBQS's compression rates swing between BQS's
and DP's, showing the second best overal performance. 
BDP has the worst performance overall as it inherits the 
weaknesses from both DP and window-based approaches. BGD's
performance is generally in between DP and BDP, but it still suffers
from the excessive points taken when the buffer is full.

Comparing the two figures, it is worth noting that all the algorithms
perform generally better on the bat data. Take the results at
$10~m$ error tolerance from both figures for example, on the bat data
the best and worst compression rates reach $3.9\%$ and $6.3\%$
respectively, while on the vehicle data the corresponding figures are
$5.4\%$ and $7.7\%$. This may seem to contradict the results of the
pruning power. However, it is in fact reasonable because bats perform 
stays as well as small movement around certain locations, making those
points easily
discardable. Hence the room for compression is larger for the bat
tracking data given the same error tolerance.

On the bat data, with $10~m$ error tolerance, BQS and FBQS achieve
compression rates of $3.9\%$ and $4.1\%$ respectively. DP, as an
offline algorithm that runs in $\mathcal{O}(nlogn)$ time, yields a
 worse compression rate than FBQS at $4.6\%$. Despite having poorer
 worst-case complexities, BDP and BGD also obtain worse compression rates 
than BQS and FBQS do at $6.3\%$ and $5.8\%$ respectively.
At this tolerance, the offline DP
algorithm uses approximately $20\%$ and $10\%$ more points than the
online BQS
and FBQS do, respectively. Furthermore, for online algorithms with $20~m$ tolerance,
FBQS (2.7\%) improves BDP (5.1\%) and BGD (4.9\%) by 47\% and 45\% respectively.

The results on the vehicle data show very similar trends
of the algorithms' compression rate curves. Interestingly, with this 
dataset, because the pruning power is in most of the cases
around and above $0.95$ as demonstrated in \ref{fig:pp2}, the
compression rate of FBQS is remarkably close to BQS'. For instance, at
$20~m$, $30~m$ and upwards, the difference between the two is smaller than $1\%$.
This observation supports our aforementioned argument
that the bounds of the original BQS are so effective that the
 number of extra points taken by FBQS is insignificant.

\eat{
As discussed in Section \MakeUppercase{Trajectory
  Compression with Bounded Quadrant System}, it is expected that BQS has
better compression rate than the two buffered online algorithms. Here
we explain why it also outperforms the offline
DP algorithm. DP's compression rate is heavily
affected by the dividing step. When it decides to split the trajectory
into two parts at the farthest point to the line defined by the start and
end points, there is a great likelihood that the split point is
in the middle of a smooth segment which could be represented by a
single compressed segment. However by making the split, DP takes more
points than BQS as BQS always tracks continuous smooth segments to
minimize the number of points taken. 
}
\vspace{-0.2cm}
\begin{figure}[htp]
\centering
\begin{tabular}{c}
\subfigure[Synthetic Dataset (unit:m)]{ \label{fig:syn}
\includegraphics[height=3.5cm]{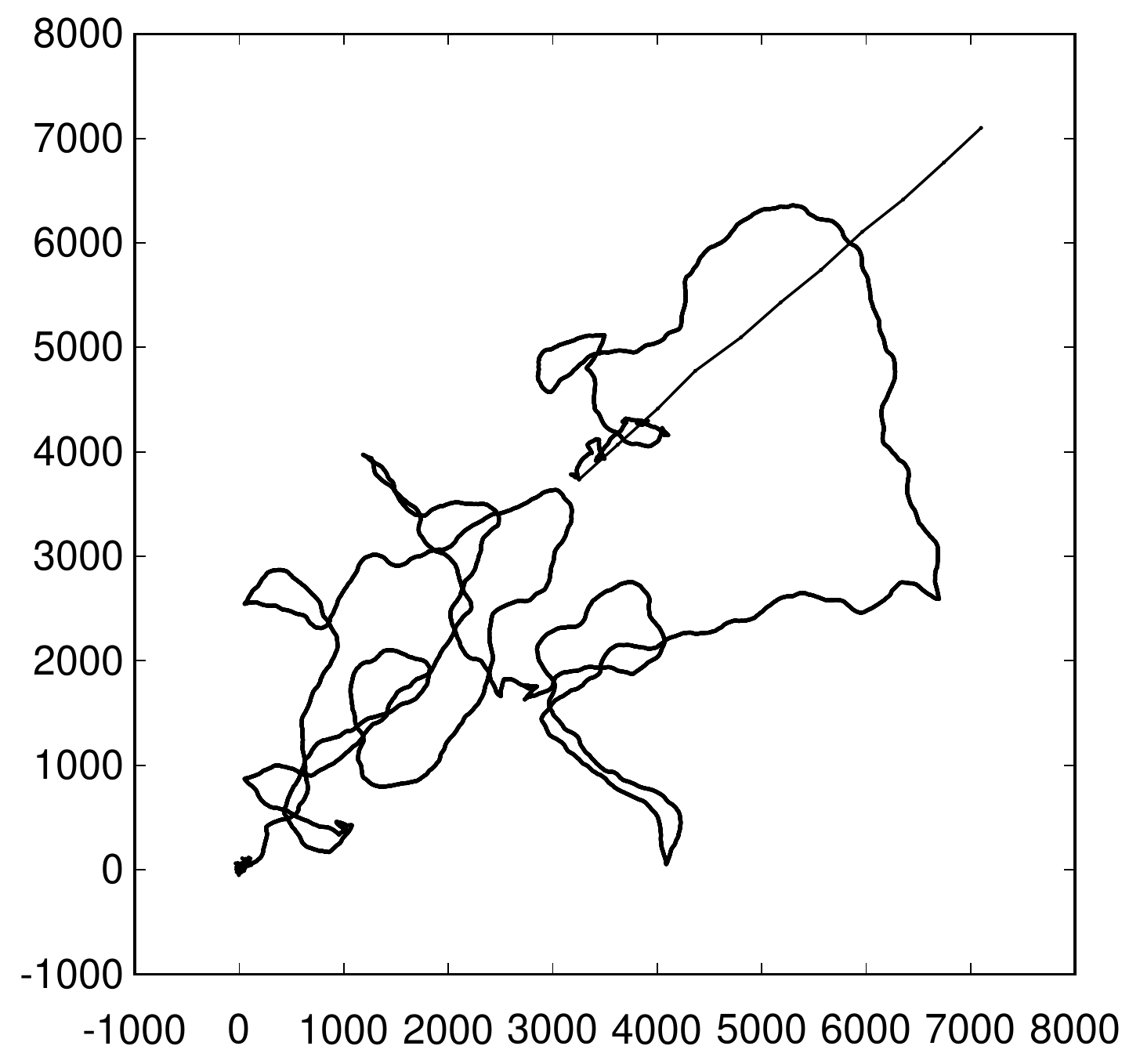}
}
\subfigure[No.Points Used on Synthetic Data]{ \label{fig:nps}
\includegraphics[height=3.5cm]{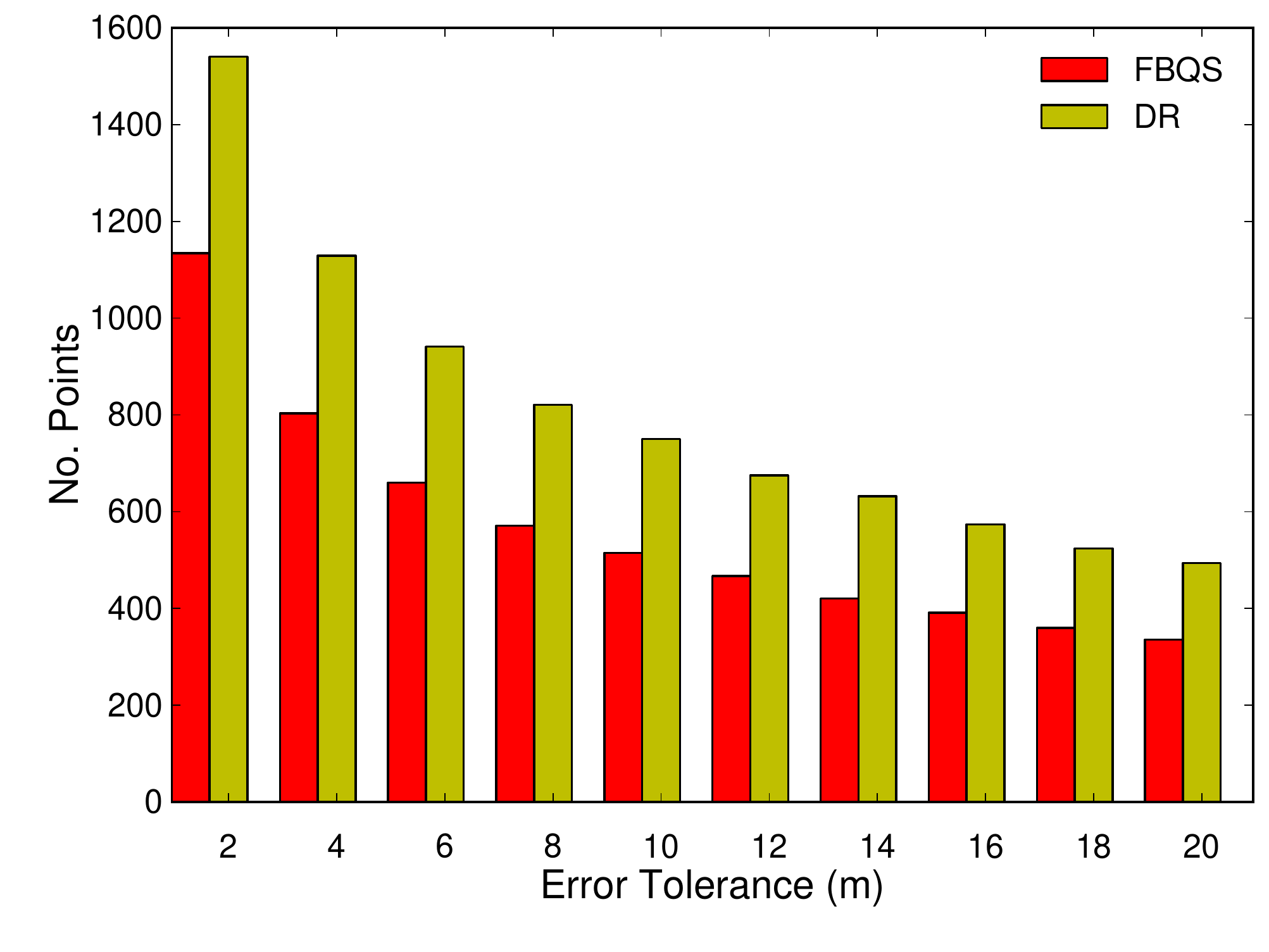}
}
\end{tabular}
\caption{Shape of Synthetic Data and Comparison of Number of Points
  Used (The lower the better)}
\end{figure}

\subsubsection{Pruning Power}

Pruning power determines how efficient BQS and FBQS are. In FBQS, if
the relation between the bounds and the deviation tolerance is
deterministic, FBQS generates a lossless result as
BQS. If it is uncertain, then FBQS will
take a point regardless of the actual
deviation. The pruning power reflects how often the relation is
deterministic, and it indicates how many extra points could be taken by
the approximate algorithm. With high pruning power, 
the overhead of FBQS will be small. Here we investigate
the pruning power of BQS in this subsection. 

Figures \ref{fig:pp1} and \ref{fig:pp2} show the pruning power
achieved by BQS on both datasets. The sensitivity of the algorithm to the error tolerance or to the
shape of the trajectories appears low, 
as the pruning power generally stays above $90\%$ for most of the
tolerance values on both datasets. This means approximately only $10\%$ more
points will be taken in the Fast BQS algorithm compared to the
original BQS algorithm. The running values in Figure \ref{fig:bds} also support
 this observation.

BQS shows higher pruning power
on the car dataset than on the bat dataset. The higher pruning power on the vehicle data is a
result of the physical constraints of the road networks,
preventing abrupt turning and deviations, and making the trajectories
smoother. Naturally the pruning power will be higher as a result of the higher
regularity in the data's spatio-temporal characteristics.

\subsubsection{Comparison with Dead Reckoning on Synthetic Data}
In Figure \ref{fig:syn} we show the simulated trajectories from our
statistical model. Visibly the trajectories show little physical
constraint and considerable variety in heading and turning angles. On
this dataset we study the performance comparison because on this
dataset we are able to simulate the tracking node
closely with high frequency sampling. Hence FBQS is used as 
a light-weight setup to fit such online environment.

We show in Figure \ref{fig:nps} the numbers of points taken after the
compression of $30,000$ points under different error tolerances. With
smaller $\epsilon$ such as $2~m$, DR uses $1,547$ points compared to
$1,122$ for FBQS, indicating that DR needs $37\%$ more points. As $\epsilon$ grows, DR's
performance tends to slowly approach FBQS in absolute numbers yet the
difference ratio becomes more significant. 
At $20~m$ error tolerance, FBQS
only takes $341$ points while DR uses $493$ points, the
difference ratio to FBQS is around $45\%$. 

Evidently, FBQS has achieved excellent 
compression rate compared to other existing online algorithms.

\vspace{-0.3cm}
\subsubsection{Robustness to Trajectory Shapes}

\begin{figure}[htp]
\vspace{-0.5cm}
\hspace{-0.3cm}
\begin{tabular}{c}
 \subfigure[Extreme Cases]{ \label{fig:ec1}
\includegraphics[height=3.3cm]{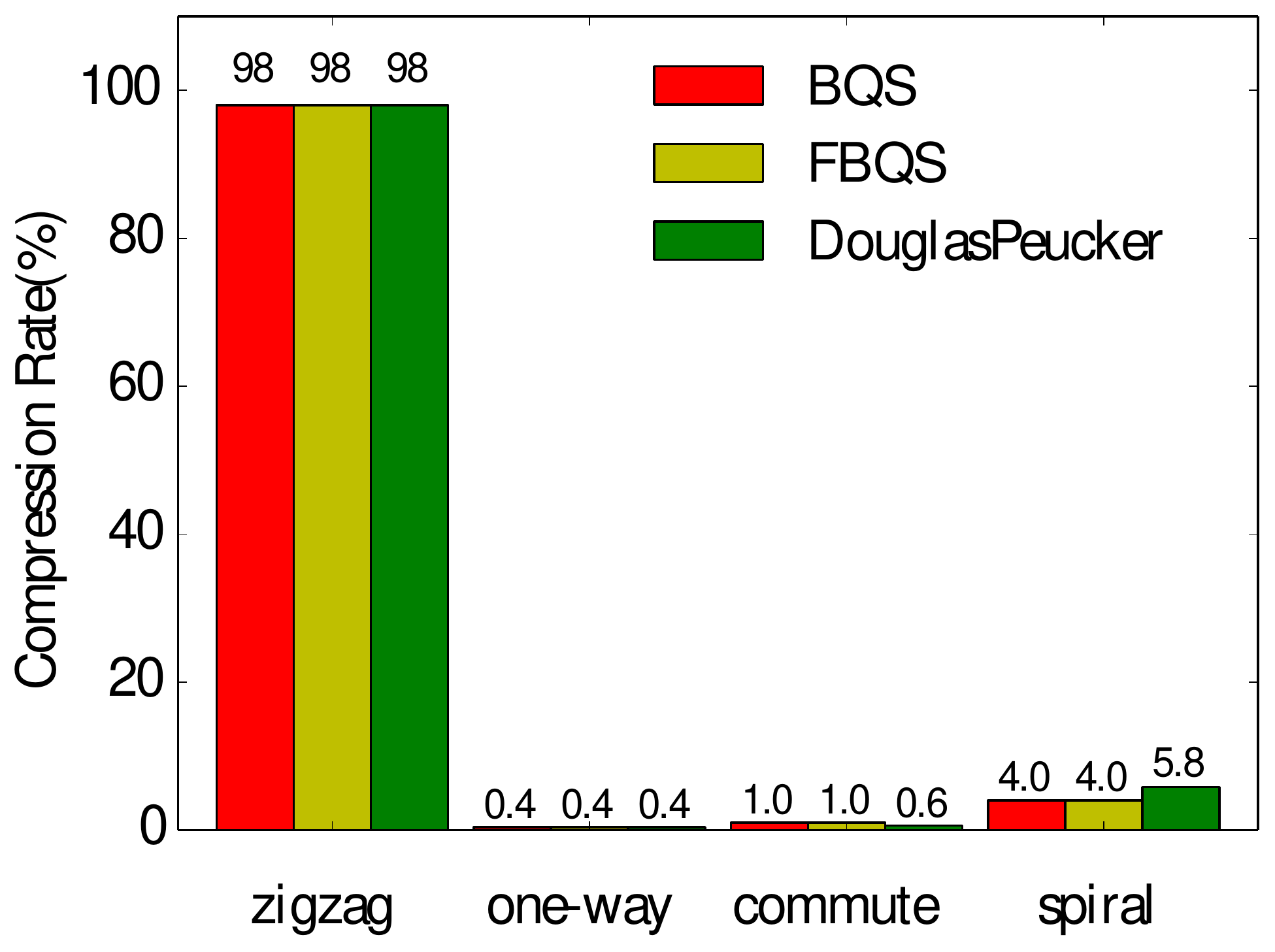}
}
\subfigure[Spiral Trajectory]{ \label{fig:ec2}
\includegraphics[height=3.3cm]{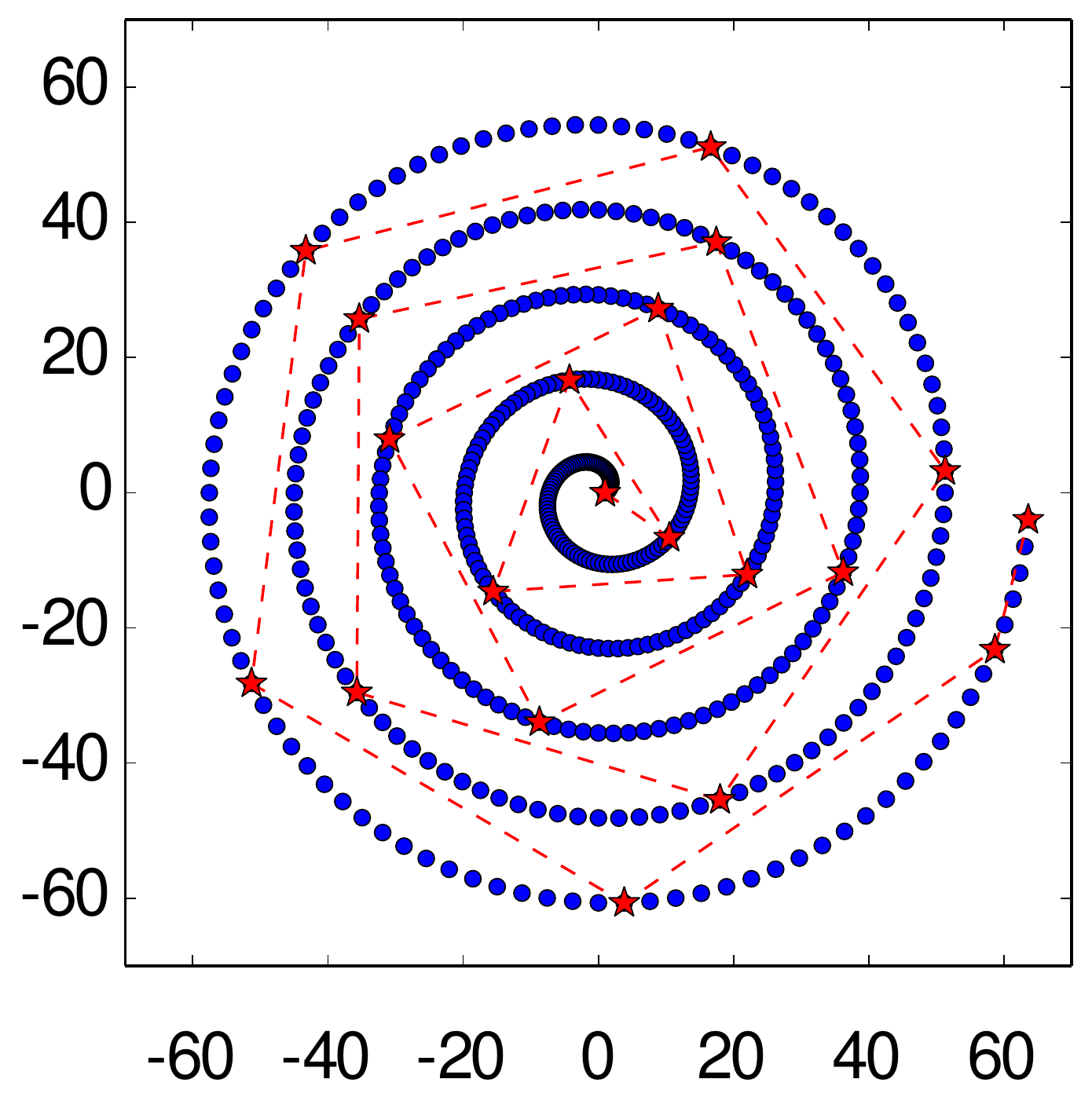}
}
\end{tabular}
\caption{Robustness of BQS}
\vspace{-0.4cm}
\end{figure}

In this experiment we examine BQS' robustness to trajectory shapes. Though in previous experiments 
we have used trajectories with a variety of characteristics such as
speed, spatial range and turning angles, and have verified that BQS achieves competitive performances on
those datasets, here we synthesize a few more with extreme shapes to further examine the robustness of BQS.

We use four synthetic datasets:

\begin{itemize}
\small
\vspace{-0.2cm}
\item \textit{zigzag}: the trajectory travels in a zigzag fashion. $x = <1, 0.5, 3, 0.5....2N-2, 0.5, 2N, 0.5>$, $y = <0.5, 2, 0.5, 4, ....0.5, 2N-2, 0.5, 2N>$.
\item \textit{one-way}: a straight line $x = <1, 2,...,N>, y = <0.5,0.5...,0.5>$.
\item \textit{commute}: moving back and forth on a straight line  $x = <1, 2,...,n, n-1,...1,2,...,n,...>, y = <0.5,0.5...,0.5>$.
\item \textit{spiral}: the Achimedean spiral defined as $\rho =1+2\theta$.
\vspace{-0.2cm}
\end{itemize}
For each synthetic dataset we sample $500$ points with the above strategies and display the compression rates of BQS, FBQS and \emph{Douglas-Peucker} (DP) in Figure \ref{fig:ec1}, with the tolerance $\epsilon=10$. The results of BQS and FBQS are almost identical, suggesting that the conservative decision making in the FBQS introduces little overhead for even the most extreme shapes. The BQS family achieves comparable results with DP on all four datasets. On the first two they are all 98\% and 0.4\%. On the third DP achieves slight better result 0.6\% versus 1.0\% of BQS, but on the fourth BQS has a much better compression rate of 4.0\%, in comparison with 5.8\% from DP. Figure \ref{fig:ec2} demonstrates the Achimedean Spiral with 500 samples in blue dots. The red stars and the dashed line illustrate the key points selected by BQS.

\vspace{-0.2cm}
\subsubsection{Effect on Operational Time of Tracking Device}
Next we investigate how different online algorithms affect the maximum
operational time of the targeted device. This operational time
indicates how long the device can keep records of the locations 
before offloading to a server, without data loss.

\begin{figure*}[htp]
\vspace{-0.6cm}
\centering
\begin{tabular}{c}
 \subfigure[]{ \label{fig:re}
\includegraphics[height=3cm]{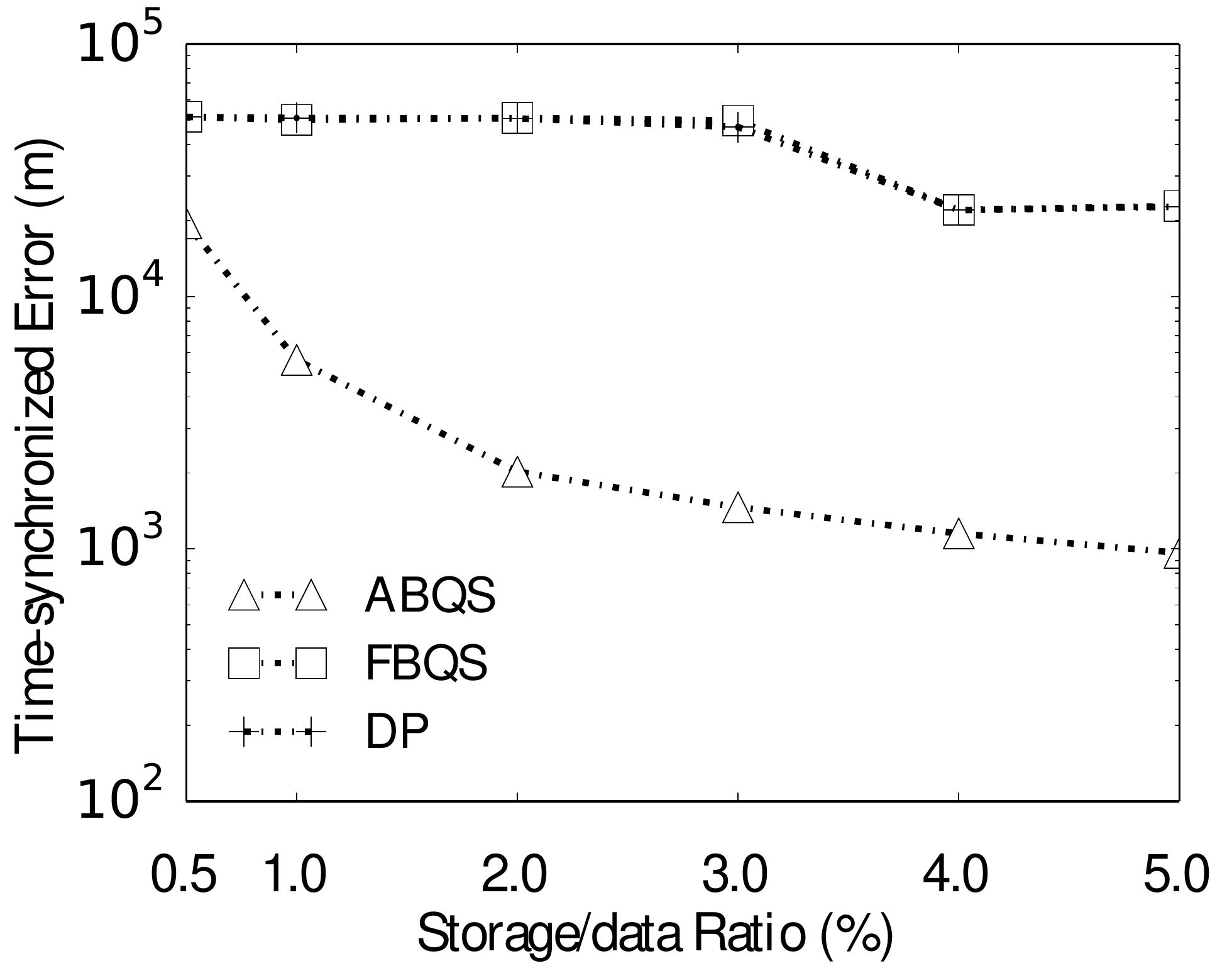}
}
\subfigure[]{ \label{fig:rem}
\includegraphics[height=3cm]{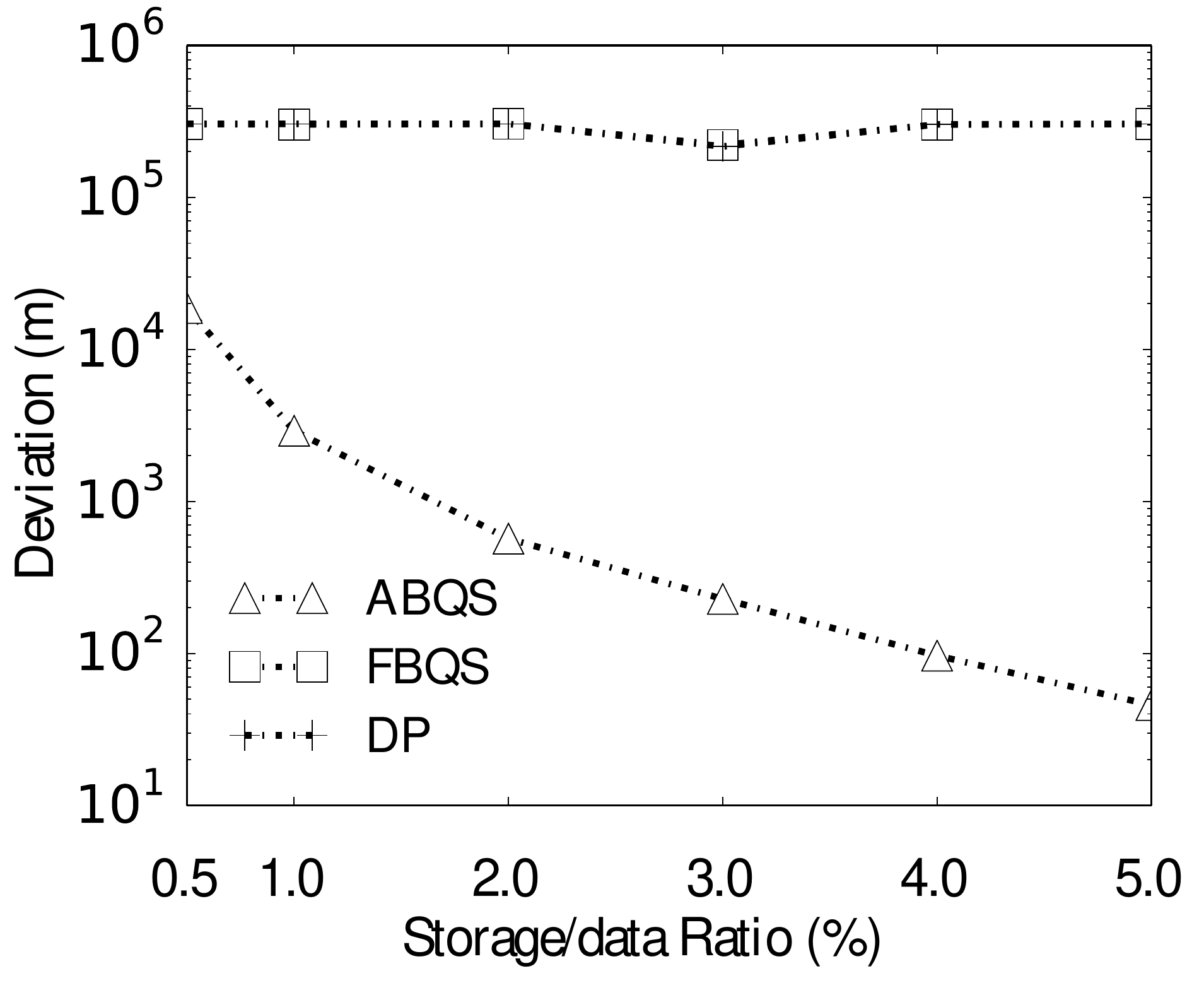}
}
 \subfigure[]{ \label{fig:ee}
\includegraphics[height=3cm]{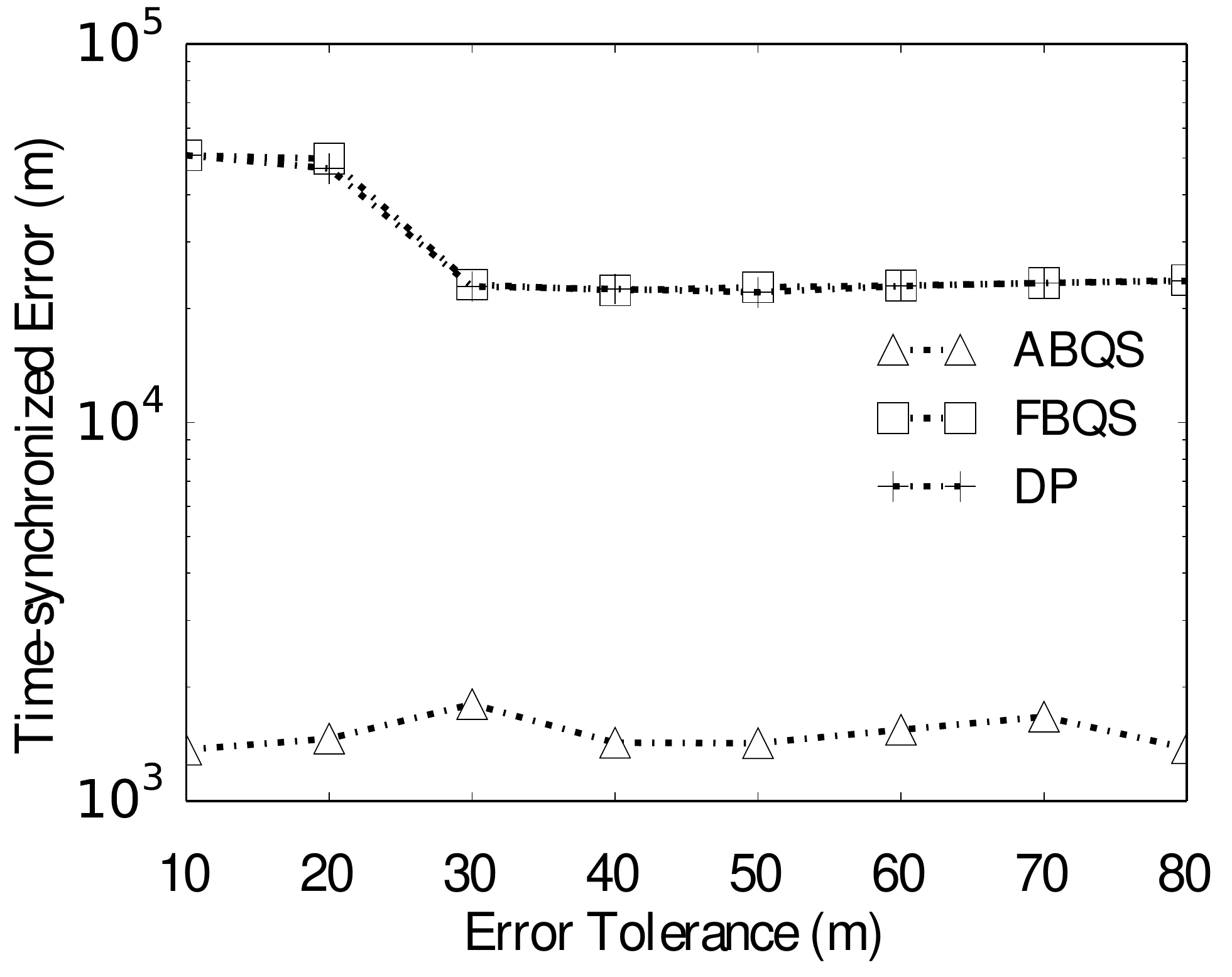}
}
\subfigure[]{ \label{fig:eem}
\includegraphics[height=3cm]{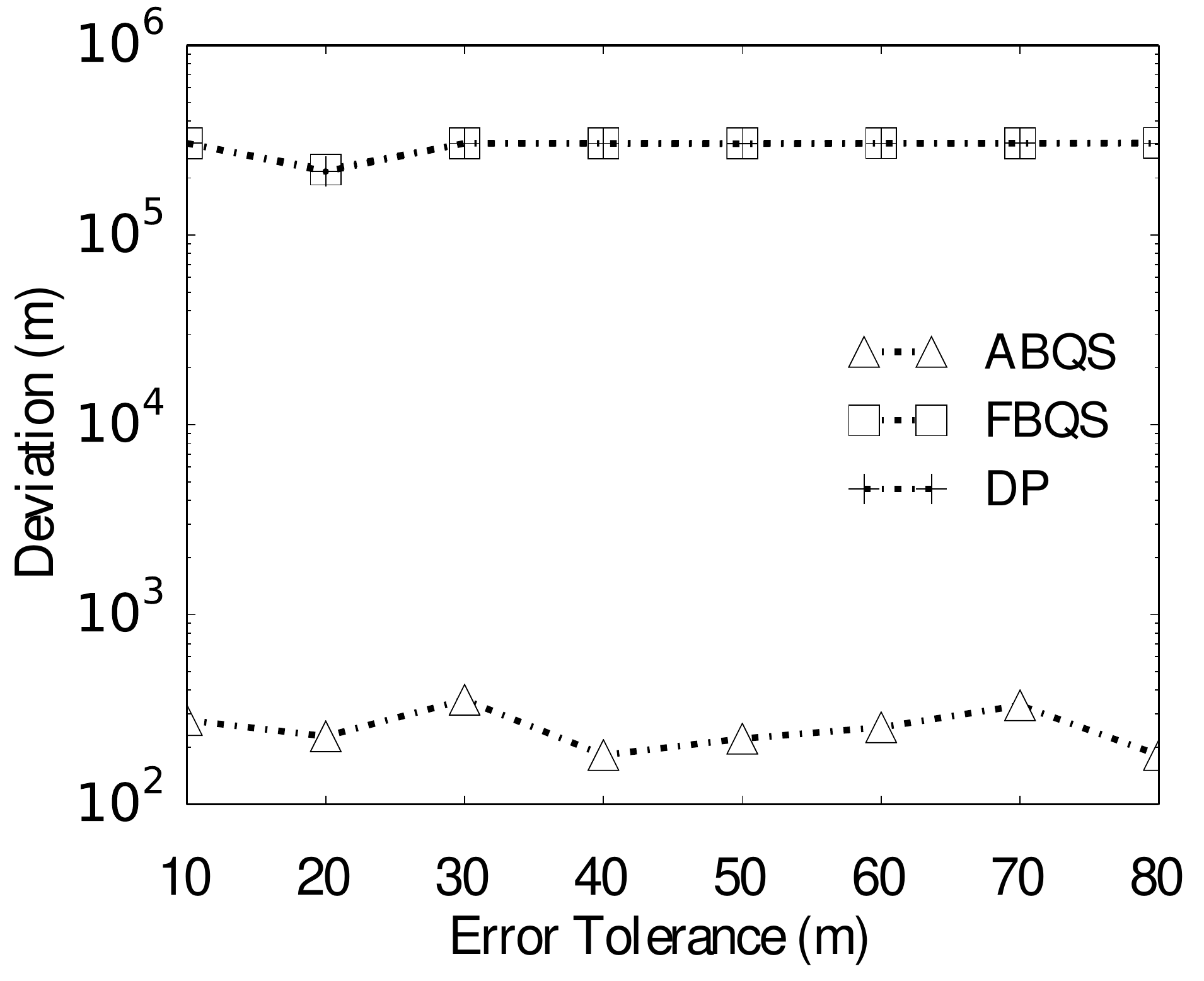}
} 
\vspace{-0.3cm}
\\
 \subfigure[]{ \label{fig:ne}
\includegraphics[height=3cm]{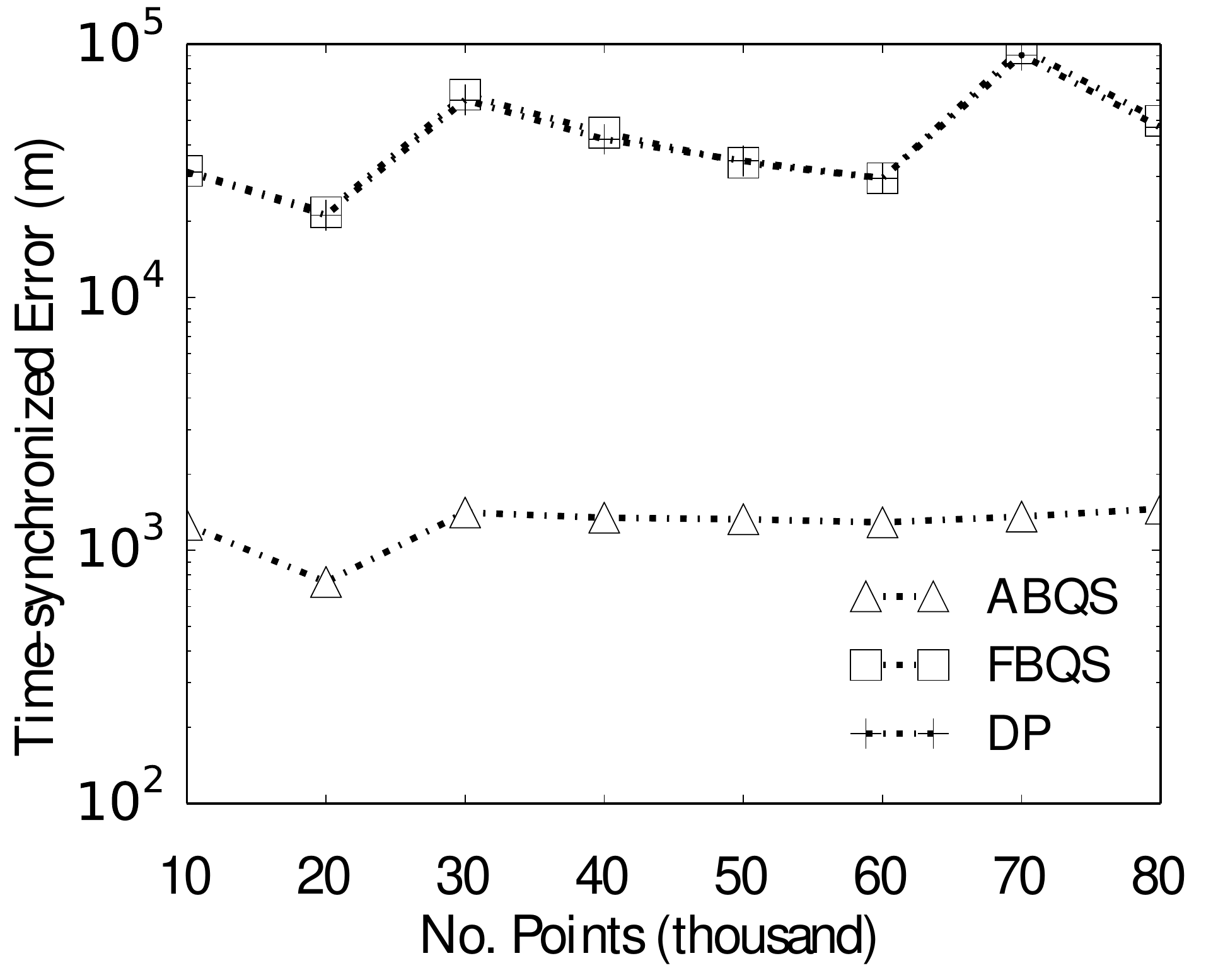}
}
\subfigure[]{ \label{fig:nem}
\includegraphics[height=3cm]{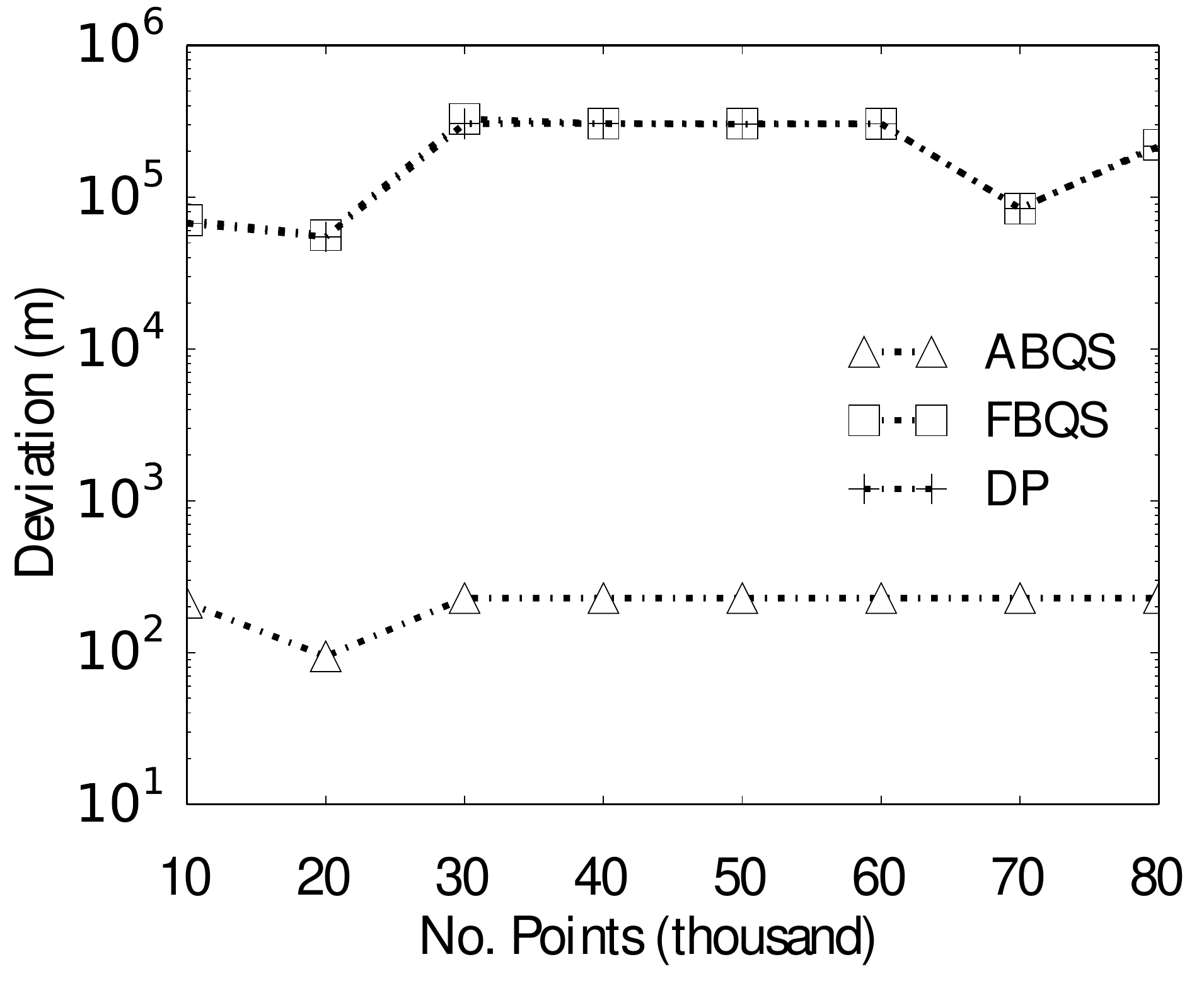}
}
\subfigure[]{ \label{fig:de}
\includegraphics[height=3cm]{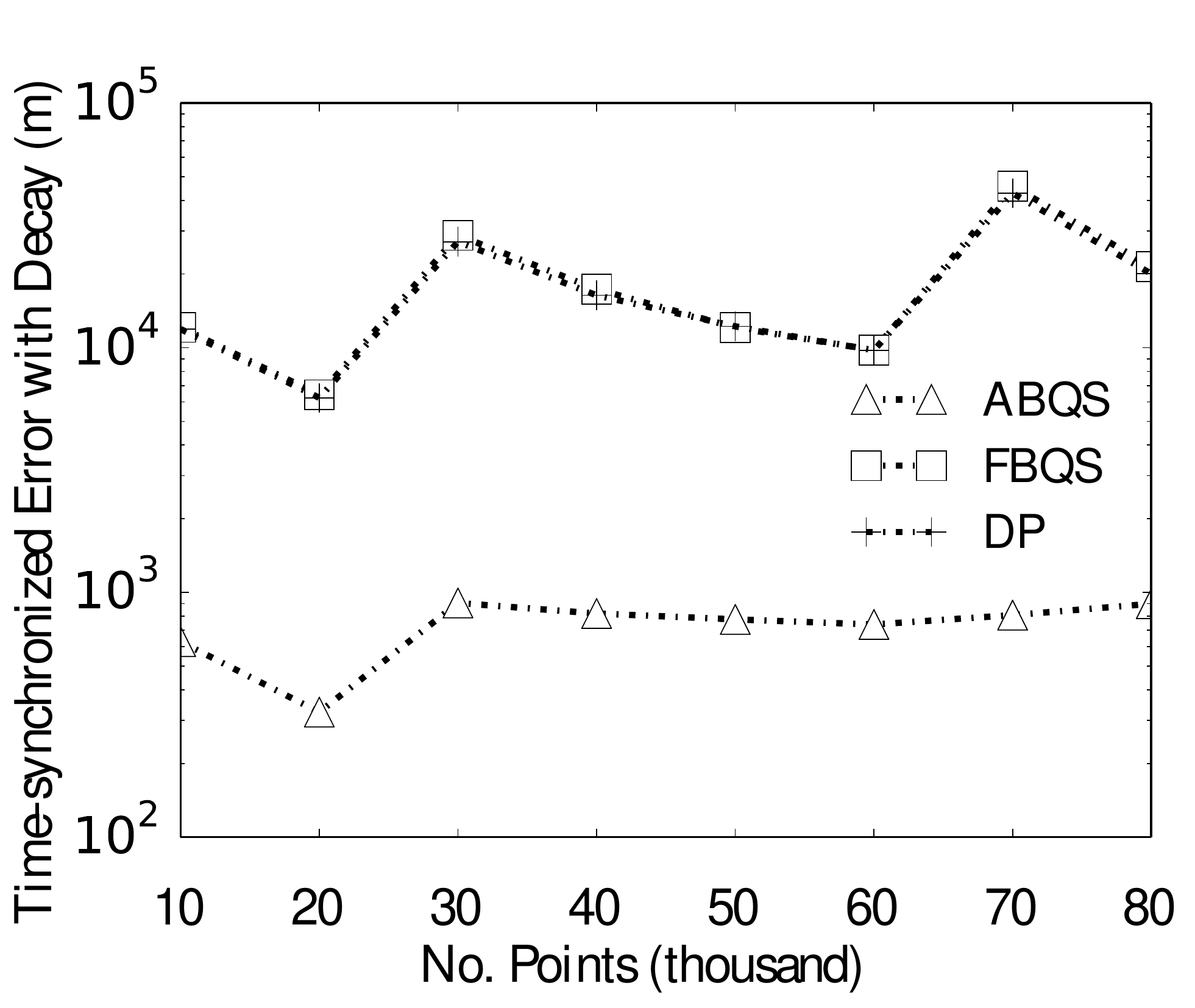}
}
\subfigure[]{ \label{fig:ds}
\includegraphics[height=3cm]{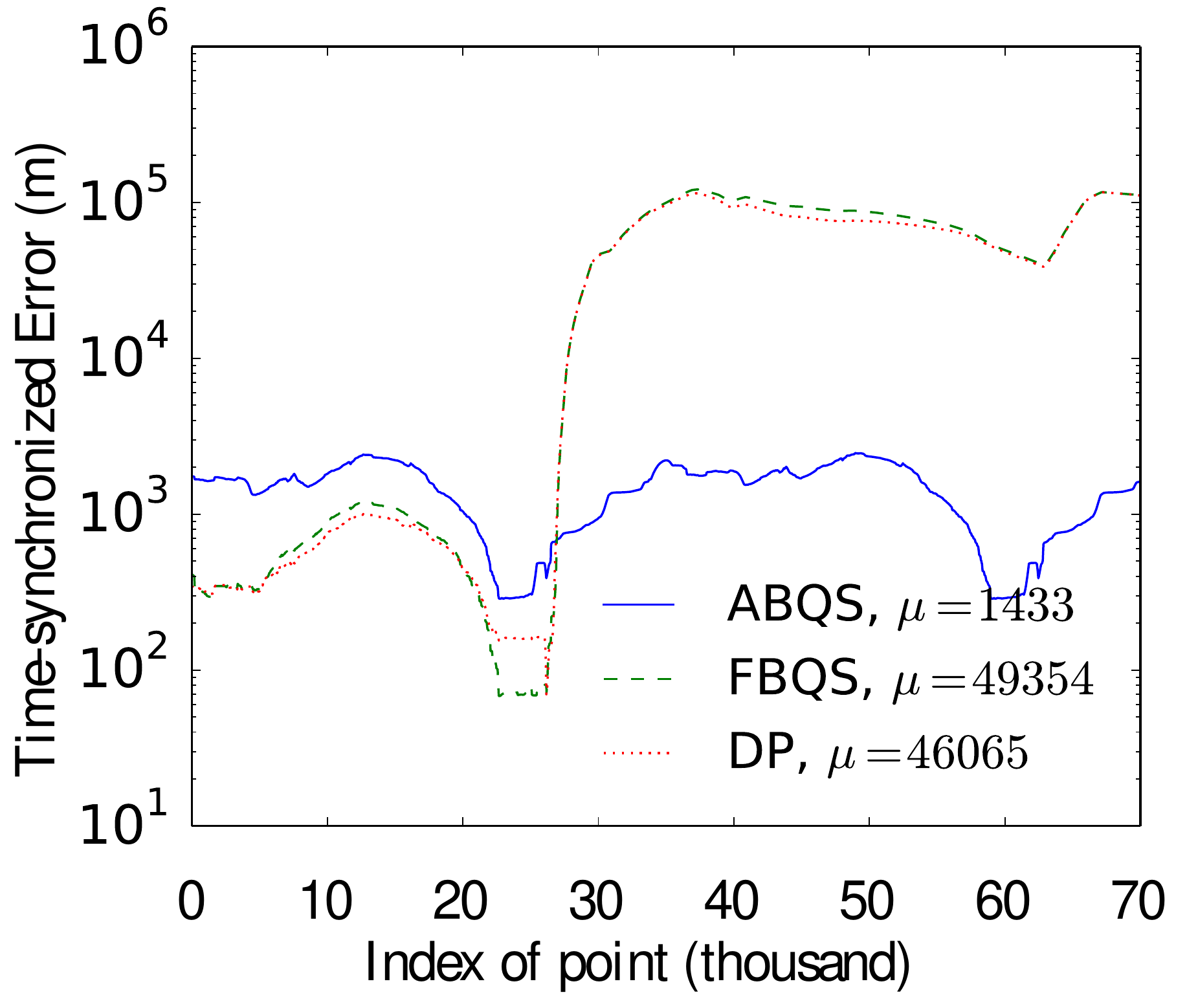}
}\\
\end{tabular}
\caption{Performance of ABQS for the Combined Datasets}
\vspace{-0.6cm}
\label{fig:perfa}
\end{figure*}

In the real-life application, the nodes also store other sensor 
information such as acceleration, heading, temperature,
humidity, energy profiling, sampled at much higher frequencies
due to their relatively low energy cost. We assume that of the 1MBytes
storage, GPS traces can use up to 50KBytes, and that the sampling
rate of GPS is 1 sample per minute. Each GPS sample requires
at least 12 bytes storage (latitude, longitude, timestamp). For the
error tolerance, we use 10 meters as it is reasonable for both
animal tracking and vehicle tracking. The average compression rate at 10
meters for both datasets is used for the algorithms. For the DR
algorithm, we assume it uses 39\% more points than FBQS as shown in
Figure \ref{fig:nps} at the same tolerance.

Given the set up, the compression rate and
estimated operational time without data loss for each algorithm are
listed in Table \ref{tbl:eot}. We can see a maximum 36\% improvement
from FBQS over the existing methods (60 v.s. 44), and a maximum 41\% improvement
from BQS (62 v.s. 44).

\begin{table}[htp]
\scriptsize 
\vspace{-0.2cm}
\centering
\caption{Estimated Operational Time}
\label{tbl:eot}
\begin{tabular}{ | c |c|c|c|c|c| }
\hline  & \textbf{BQS} & \textbf{FBQS} & \textbf{BDP}& \textbf{BGD}&
\textbf{DR}\\ 
\hline \textbf{Compression rate} & 4.8\% & 5.0\% & 6.65\%& 6.75\%& 6.65\%\\
\hline \textbf{Time (days)} & 62&60&45&44&45\\
\hline
\end{tabular}
\end{table}

\vspace{-0.2cm}
\subsubsection{Run Time Efficiency}
We compared the run time efficiency of FBQS, BDP
and BGD. 87,704 points from the empirical traces are used as the test
data. The error tolerance is set to 10 meters. For BDP and BGD, to
minimize the effect of the buffer size, we report their performances
with different buffer sizes, as in Table \ref{tbl:ct}:

\begin{table}[htp]
\scriptsize 
\centering
\caption{Performance Comparison with different buffer sizes}
\label{tbl:ct}
\begin{tabular}{ | c |c|c|c|c|c| }
\hline \textbf{Buffer size (points)} && 32 & 64& 128& 256 \\ \hline
\multirow{3}{*}{\textbf{Compression rate}} 
& \textbf{FBQS} & 3.6\%&--- &--- &--- \\
& \textbf{BDP} & 6.8\%& 6.7\%& 5.4\%& 4.9\% \\
& \textbf{BGD} &6\%& 4.8\%& 4.6\%&4.4\%\\ \hline
\multirow{3}{*}{\textbf{Run time (ms)}} 
& \textbf{FBQS} &99&--- &--- &--- \\
& \textbf{BDP} & 76& 101& 163& 292 \\
& \textbf{BGD} &182&285&446&628\\
\hline
\end{tabular}
\end{table}

There are two notice-able advantages of FBQS from the
comparison. Firstly, both the compression rate and the run time efficiency of 
FBQS algorithm are stable, independent of the buffer size setting. 
Secondly, it offers competitive run time efficiency while providing
leading compression rate. The only case when BDP is able to
outperform FBQS in run time efficiency is when the buffer is set to
32, where BDP has a far worse compression rate (89\% more points).

\vspace{-0.1cm}
\subsubsection{Evaluation of the ABQS Framework}
\label{sssec:expabqs}

We evaluate ABQS with the results presented in Figure \ref{fig:perfa}, in a series of experiments in comparison with FBQS and DP. Here instead of using the bat and vehicle datasets separately, we concatenate the data points from both datasets and put them into a unified timeframe. This introduces greater variations of the dynamics in the trajectory dataset, while the characteristics of both bats and vehicle are still inherited. The error multiplier $m$ is set to $2.5$ (as PBQS introduces $2\epsilon^{prev}$ uncertainty in both the lower and the upper bounds, this multiplier leaves $0.5\epsilon^{prev}$ margin for pruning). We mainly use two error metrics: deviation (maximum perpendicular distance as in Definition \ref{def:dev}) and  time-synchronized error (as in Figure 5(b) in \cite{DBLP:conf/edbt/MeratniaB04}).

In Figures \ref{fig:re} and \ref{fig:rem}, we fix the error tolerance $\epsilon=20$ (m) and the number of points $N=80,000$, 
and study how the errors change with the storage/data ratio $r$. 
This ratio represent the situation where the location data far exceeds the storage limit in volume. 
The smaller $r$ is, the greater challenge it is to control the compression error. For ABQS, it will continuously tradeoff precision for space, while for 
FBQS and DP, hard data loss (overwriting) is inevitable after the storage is full. We observe that in both error metrics, ABQS's performances
are significantly better than FBQS and DP. For example, as $r$ changes from 0.5\% to 5\%, the synchronized error and deviation of ABQS decreased from 15km to 1km, and from 15km to  30m.  On the other hand, FBQS and DP both have much worse results which have not shown clear reduction of compression error even when storage space is increased. For synchronized error their numbers drop from 40km merely to 20km, and for deviation the error lingers around 250km with fluctuation. Note that the standard deviations of the dataset are 30km and 85km on the $x$ and $y$ axis, hence a synchronized error of 40km and a deviation of 150km basically render the compression results useless for FBQS and DP. Such huge error is mainly introduced by the hard data loss over time. However, with ABQS, it does not only guarantee a decreasing deviation on each segment, but also obtains a reasonable and decreasing synchronized error as storage space grows.

Figures \ref{fig:ee} and \ref{fig:eem} demonstrate how the error tolerance $\epsilon$ influences the compression results. Here we set $N=80,000$ and the storage/data ratio $r=3\%$, and change  $\epsilon$ from 10m to 80m. We find that under both metrics, none of the algorithms is particularly sensitive to this parameter. Take Figure \ref{fig:ee} for example, FBQS and DP again have almost identical results (45km to 25km), while ABQS' number is varying between 1.5km to 2km. Again in both results we see that ABQS has far better performance. This implies that ABQS' performance is rather robust to the starting error tolerance, as it automatically increases the tolerance on demand. For FBQS and DP, the sychronized error decreased significantly when the tolerance changes from 10 to 30, this is because with a smaller tolerance the number of points after compression is greater and hence the ratio of the hard data loss is greater. Hence more information is lost even when the tolerance is smaller. After 30, the sychronized error stablizes around 25km. It is possible that when $\epsilon$ continues to grow, the gap between ABQS, FBQS, and DP will close (when trajectories size is smaller than storage after first compression pass), however in practice it is unrealistic to set an appropriate value for $\epsilon$ when the operational time is unknown.

Figures \ref{fig:ne} and \ref{fig:nem} study the effect of data volume. When $\epsilon=20$ and $r=3\%$, we change the number of points to be compressed from 10,000 to 80,000, and examine the errors. It is evident that ABQS has very consistent performance regardless of data size (around 1.5km synchronized  error and 200m deviation). However, with hard data losses, both FBQS and DP perform poorly, with consistently over 15 times greater synchronized errors and over 400 times greater deviation. These results show that ABQS is robust to the data size.

Figure \ref{fig:de} shows the changes of the synchronized error with a linear significant decay function (similar to the ``amnesic function'' in \cite{DBLP:conf/icde/PalpanasVKGT04}) that reduces the importance of points from 1.0 to 0.2 from younger to older data ($\widetilde{e^i}_{i=1:N}= e^i(\frac{0.8i}{N} + 0.2)$ ), as a 
simplistic demonstration of the significance function's effect in the relative performance for each method. The results verifiy that though the errors induced by older points (which are likely overwritten in FBQS and DP) are regarded less important, ABQS still wins by a large margin.

Finally, Figure \ref{fig:ds} presents how the synchronized errors are distributed over time for the $80,000$-point long empirical traces. Here the error is calculated for each individual point in the original trajectory, and is smoothed with a sliding window of 1,000 to improve visual clarity. The mean errors for ABQS, FBQS and DP are 1,433m, 49km and 46km, showing a clear advantage of ABQS. We also notice a clear cutoff point for FBQS and DP around 27,000 points. Before this point FBQS and DP have better precision than ABQS because the trajectories they store have gone through one compression. However, after the cutoff point where hard data loss begins to occur, the error jumped dramatically from 80m to 100km. Contrarily, ABQS' error remains much more stable during for the whole course. Indeed changing the initial error tolerance for FBQS and DP may as well change the location of the cutoff point slightly for FBQS and ABQS, but when the total number of points is unknown when deployed, ABQS has a clear advantage as it does not require a precise estimation of the final data size, or a carefully chosen error tolerance.

\vspace{-0.4cm}
\section{Conclusion}
\label{sec:conc}
We present a family of online trajectory
compression algorithms called BQS and an amnesic compression framework 
called ABQS for resourced-constrained environments. 
We first propose a convex-hull bounding structure and then show tight bounds 
can be derived from it so that compression decisions will be 
efficiently determined without actual deviation calculations. A light version
of the algorithm is hence proposed for the most constrained computation environments. 
\eat{A discussion is provided for the generalization and extensibility of
BQS for the 3-D space as well as for a different error
metric.} The standalone algorithms are then incorporated in an online framework that
manages a given storage and performs amnesic compression called ABQS.

\eat{
To further reduce the time and space complexity of the BQS algorithm,
a fast version of the BQS compression algorithm is also proposed. In
this version, error calculations are completely eliminated. Instead,
when uncertain of the error, the fast algorithm aggressively takes a
point. However due to the tight error bounded provided by the BQS,
the overhead in the compression rate is minimum, making it a
light-weight, efficient and effective algorithm, which is ideal for
 constrained computation environments.

As we establish the BQS algorithm, a discussion is also provided 
for the generalization and extensibility of
the BQS algorithm for the 3-D space as well as for a different error
metric. We have showed that such extensions are natural and
straightforward. BQS' flexibility to generalize to other applications
and settings is demonstrated.
}
To evaluate the proposed methods, we have collected empirical data using a low-energy 
tracking platform called Camazotz on both animals and vehicles. 
We also used synthetic dataset that 
is statistically representative of flying foxes' movement dynamics to improve data diversity. 
In the experiments we evaluate the framework in various aspects 
from pruning power, compression rate, run time efficiency, operational time,
compression error, etc. The proposed methods demonstrate significant advantages
in general. In some experiments it even achieves 15 to 400 times smaller error than its competitors.

\eat{
We examine the pruning power of the original BQS algorithm,
demonstrating that the great pruning power of BQS leaves an ideal
opportunity for FBQS to exploit so that further improvement on
the time and space efficiencies is achieved without sacrificing much
compression rate. We present the actual compression rates of
BQS and fast BQS, and compare them to the results of competitive
methods. Comparison of the estimated operational time with different
algorithms is also presented. We also study the robustness of BQS to 
trajectory shapes. Extensive evaluation The actual run time is also reported.

There are a few immediate extensions to this work. The excellent
performance of the BQS algorithms provides a unique opportunity to
develop online and individualized smart systems for long-term tracking. For
instance, merging and ageing can be used on the historical trajectory
data to further reduce storage space. Individualized trajectory
and waypoint discovery can also be used to facilitate advanced applications
like real-time trip prediction or trip-duration estimation. Exploring
the potential of a 4-D BQS could be another interesting extension to
this work.
}

\vspace{-0.3cm}
\small{
\bibliographystyle{IEEEtran}
\bibliography{main}
}
\eat{
\section{Appendix: Discussion and Proof of Theorems}
\label{sec:app}
First we explain the purpose of splitting the space into four
quadrants and the properties obtained by this setup.

\begin{itemize}
\item No two bounding lines will intersect with the same edge of a
  bounding box, and every edge will have exactly
 one intersection with the bounding lines, except at the corner points
 or on the axes.
\item The angle between $\overline{se}$ and either of the two bounding lines will be smaller than
  $90^\circ$, 
\item Hence, any point will be bounded by the convex hull formed by the
  points $\{l_1, l_2, u_1, u_2, c_n, c_f\}$. No bounding convex-hulls from two adjacent BQS will overlap. 
\end{itemize}

Splitting the space into four quadrants ensures that the
aforementioned properties hold. Otherwise, the only case that the convex hull
  $\overline{l_1 l_2 c_f u_2 u_1 c_n}$ may not cover all the
  points is when there is an edge with zero intersection from the
  bounding lines, as depicted in Figure \ref{fig:ebqs}. In this case,
  $c_f$ is  $c_3$, however $c_2$ has a greater deviation to $\overline{se}$
  than any point in $\{l_1, l_2, u_1, u_2, c_n, c_f\}$. Notice
  that when the system is in a single quadrant, the layout would not
  be a possibility.

\begin{figure}[htp]
\vspace{-0.3cm}
\hspace{1.8cm}
\includegraphics[height=2.5cm]{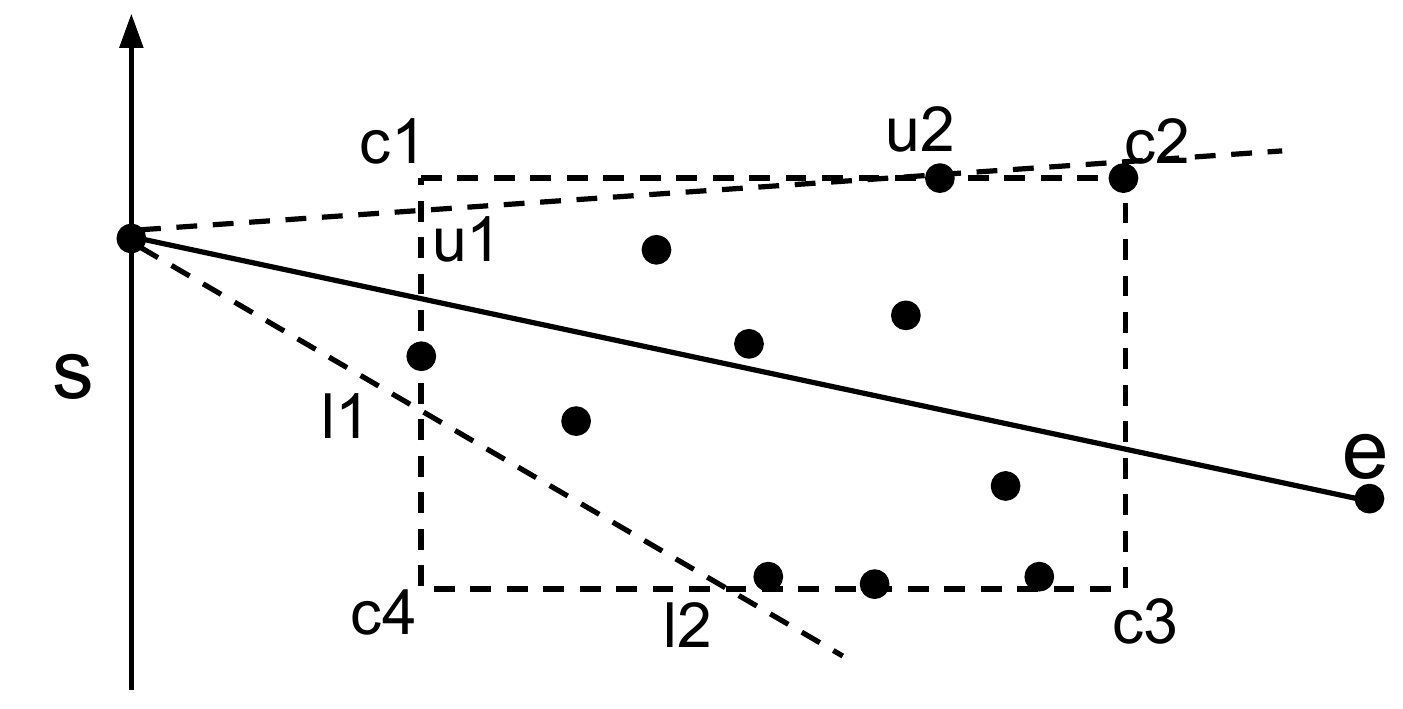}
\caption{Bounding system across quadrants}
\label{fig:ebqs}
\vspace{-0.5cm}
\end{figure}

Next we present the proof for Theorems
\ref{thm:1}, \ref{thm:2}, and \ref{thm:4}:
\paragraph{Proof of Theorem \ref{thm:1}} Using the first quadrant as
example as shown in Figure \ref{fig:bqs}, we
can state there are no points in areas $\overline{c_1u_1u_2}$ or
$\overline{c_3l_1l_2}$. Now we have the following properties:
\begin{itemize}
\item $d(u_1, \overline{se}) \leq d^{max}(p, \overline{se}) \leq d( u_2, \overline{se} )$ :
  Because the angle between $\overline{se}$ and $l_{u_1,u_2}$ is less than
  or equal to 90$^\circ$, if we extend $\overline{u_2c_2}$ to
  intersect with $\overline{se}$ at $p_1$, the three vertices form a
  triangle $\overline{u_2p_1s}$, so the greatest distance from
  any point in the triangle to the edge $\overline{se}$ would be $d(u_2,
  \overline{se})$. Then because the bounding line dictates that there must
  be at least one point (denoted as $p^{u}$) in the line segment $\overline{u_1, u_2}$, which does
  not intersect with line $\overline{se}$, we have $d(p^{u}, \overline{se}) \leq
  min\{ d(u_1, \overline{se}), d(u_2, \overline{se}) \}$. In this case we have
  $min\{ d(u_1, \overline{se}), d(u_2, \overline{se}) \}=d(u_1, \overline{se})$, hence we
  have $d(u_1, \overline{se}) \leq d^{max}(p_i, \overline{se}) \leq d( u_2, \overline{se} )$.
\item $d(l_1, \overline{se}) \leq d^{max}(p_i, \overline{se}) \leq d( l_2, \overline{se}
  )$ : proof is similar as above using the triangle $\overline{u_2p_1s}$. We could also see
  that by using $\overline{l_2p_1s}$ we have two triangles that
  together contain the convex hull
  $\overline{c_4u_1u_2c_2l_2l_1}$ which bounds the points, so we have $d^{max}(p_i, \overline{se}) \leq  max\{
  d^{intersection} \}$ as the upper bound. Similarly, we can have $max\{ d(u_1, \overline{se}),
  d(l_1, \overline{se}) \}  \leq d^{max}(p_i, \overline{se}) $ as the lower
  bound.

\item $d^{max}(p, \overline{se}) \geq max(d^{corner-nf})$: This property is based on the fact that
  the line $\overline{se}$ will not intersect with both edges that a corner
  point is on except at the corner points, while every edge must have at least one point on it.
  In this quadrant, $c_n$ is $c_4$ and $c_f$ is $c_2$. Because there is at least one point on
  the edge $\overline{c_1c_2}$, and $c_2$ is the closest point to
  $\overline{se}$ on $\overline{c_1c_2}$, we have $d^{max}(p, \overline{se}) \geq d(c_2,
  \overline{se})$. With an identical case of edge $\overline{c_1c_4}$, we
  have $d^{max}(p, \overline{se}) \geq max(d^{corner-nf})$ when we combine the
  lower bounds.
\end{itemize}

The line $\overline{se}$ may intersect with the bounding box in different
angle and at different locations but with the properties guaranteed
by the BQS, we can use the same proof for all cases.

Theorems \ref{thm:2} is proven with similar techniques.
Theorem \ref{thm:4}, i.e. cases in which the line $\overline{se}$ is in a different quadrant from
the bounding box, can
be proven using the same proof as for Theorem \ref{thm:0}.
}


\section*{Authors' Biographies}
{\footnotesize
\vspace{-1.5cm}
\begin{IEEEbiography}[{\includegraphics[width=1in,height=1.25in,clip,keepaspectratio]{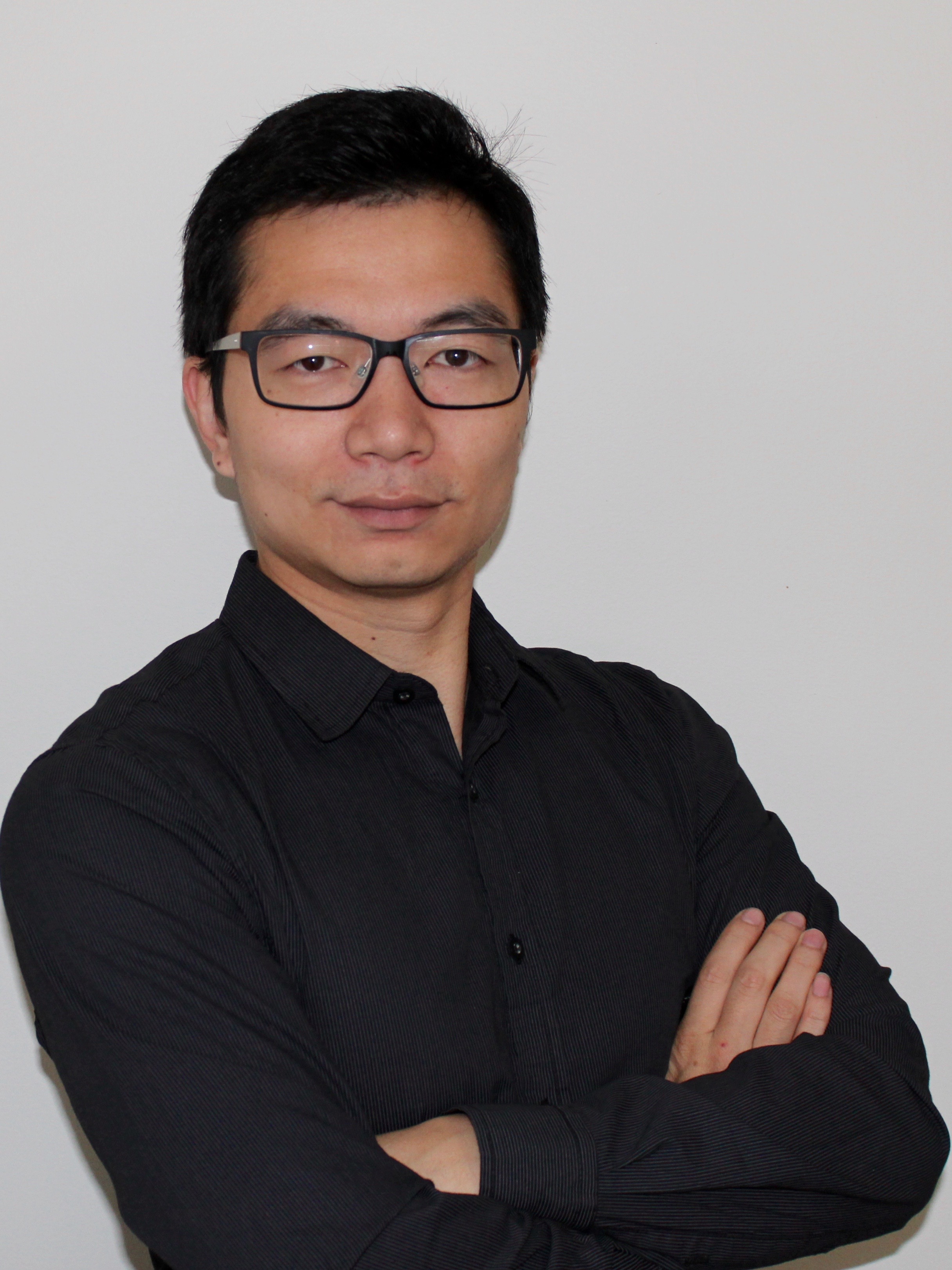}}]%
{Jiajun Liu} is an Associate Professor at Renmin University of China. He received his PhD and BEng from The University of Queensland, Australia and from Nanjing University, China in 2012 and 2006 respectively. Before joining Renmin University he has been a Postdoctoral Fellow at the CSIRO of Australia from 2012 to 2015. From 2006 to 2008 he also worked as
a Researcher/Software Engineer for IBM China
Research/Development Labs. His main research interests are in multimedia and spatio-temporal data management and mining. 
He serves as a reviewer for multiple journals such as VLDBJ, TKDE,
TMM, and as a PC member for ACM MM and CCF Big Data.
\end{IEEEbiography}
\vspace{-1cm}
\begin{IEEEbiography}[{\includegraphics[width=1in,height=1.25in,clip,keepaspectratio]{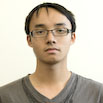}}]%
{Kun Zhao} is a Postdoctoral Fellow at CSIRO Australia. He obtained his PhD from Northeastern University, USA and his research interests include network science, mobility  data analysis in sensor networks.
\end{IEEEbiography}
\vspace{-1cm}
\begin{IEEEbiography}[{\includegraphics[width=1in,height=1.25in,clip,keepaspectratio]{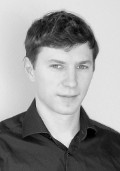}}]%
{Philipp Sommer} is a Research Scientist at ABB Corporate Research in Switzerland. His research interests include a broad range of topics in the field of wireless sensor networks, distributed computing and embedded systems. He received MSc and PhD degrees in Electrical Engineering from the Swiss Federal Institute of Technology (ETH) in 2007 and 2011 respectively. He has been a postdoctoral fellow at the Autonomous System lab of the CSIRO, Australia between Sept. 2011 and Oct. 2014.\end{IEEEbiography}
\vspace{-1cm}
\begin{IEEEbiography}[{\includegraphics[width=1in,height=1.25in,clip,keepaspectratio]{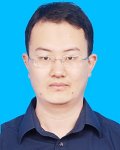}}]%
{Shuo Shang} is currently a faculty member of Computer Science at China University of Petroleum-Beijing, P.R.China. He was a Research Assistant Professor level Research Fellow with Department of Computer Science, Aalborg University, and was a faculty member of the Center for Data-intensive Systems (Daisy), which conducts research and offers education with a focus on data management for various data-intensive systems.\end{IEEEbiography}
\vspace{-1cm}
\begin{IEEEbiography}[{\includegraphics[width=1in,height=1.25in,clip,keepaspectratio]{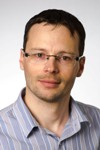}}]%
{Brano Kusy} is a Principal Research Scientist in CSIRO's Autonomous Systems Program in Brisbane, Australia. CSIRO is Australia's leading scientific research organization that operates across more than 50 sites in Australia. Autonomous systems program has a diverse group of researchers and engineers that investigate new technology in wireless sensor networks and field robotics, including autonomous land, sea, and air vehicles. He is also an adjunct senior lecturer at the University of Queensland.\end{IEEEbiography}
\vspace{-1cm}
\begin{IEEEbiography}[{\includegraphics[width=1in,height=1.25in,clip,keepaspectratio]{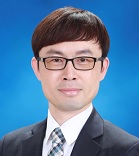}}]%
{Jae-Gil Lee} is an Associate Professor at KAIST and he is leading Data Mining Lab. Before that, he was a Postdoctoral Researcher at IBM Almaden Research Center and a Postdoc Research Associate at Illinois at Urbana-Champaign. His research interests encompass spatio-temporal data mining, social-network and graph data mining, and big data analysis with Hadoop/MapReduce.
\end{IEEEbiography}
\vspace{-1cm}
\begin{IEEEbiography}[{\includegraphics[width=1in,height=1.25in,clip,keepaspectratio]{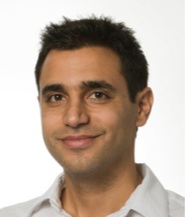}}]%
{Raja Jurdak} is a Principal Research Scientist and leads the Distributed Sensing Systems Group at CSIRO. He received the B.E. degree from the American University of Beirut in 2000, and the MSc and Ph.D. degrees from the University of California Irvine in 2001 and 2005 respectively. His current research interests are around energy and mobility in sensor networks. 
\end{IEEEbiography}

}
\end{document}